\providecommand{\customgenericname}{}
\newcommand{\newcustomtheorem}[2]{%
  \newenvironment{#1}[1]
  {%
   \renewcommand\customgenericname{#2}%
   \renewcommand\theinnercustomgeneric{##1}%
   \innercustomgeneric
  }
  {\endinnercustomgeneric}
}
\newcommand{\R}{\mathbb{R}}
\newcommand{\N}{\mathbb{N}}
\newcommand{\nnz}{\text{nnz}}
\newcommand{\eps}{\varepsilon}
\newcommand{\argmin}{\text{argmin}}
\newcommand{\poly}{\text{poly}}
\newcommand{\calE}{\mathcal{E}}
\newcommand{\calN}{\mathcal{N}}
\newcommand{\E}{\mathbb{E}}
\newtheorem{theorem}{Theorem}
\newtheorem{lemma}{Lemma}
\theoremstyle{definition}
\newtheorem{definition}{Definition}[section]
\title{Streaming and Distributed Algorithms for Robust Column Subset Selection\footnote{Proceedings of the $38^{th}$ International Conference on Machine Learning, 2021.}}
\author{Shuli Jiang, Dongyu Li, Irene Mengze Li, Arvind V. Mahankali, David P. Woodruff\\
    \small School of Computer Science, Carnegie Mellon University
    \footnote{Authors are listed in alphabetical order. Correspondence to: David P. Woodruff (\href{dwoodruf@andrew.cmu.edu}{dwoodruf@andrew.cmu.edu}), Shuli Jiang (\href{shulij@andrew.cmu.edu}{shulij@andrew.cmu.edu}), Arvind V. Mahankali (\href{amahanka@andrew.cmu.edu}{amahanka@andrew.cmu.edu}).}
}
\date{July 2021} 
\begin{document}
\maketitle

\begin{abstract}
We give the first single-pass streaming algorithm for Column Subset Selection with respect to the entrywise $\ell_p$-norm with $1 \leq p < 2$. 
We study the $\ell_p$ norm loss since it is often considered more robust to noise than the standard Frobenius norm.
Given an input matrix $A \in \mathbb{R}^{d \times n}$ ($n \gg d$), our algorithm achieves a multiplicative $k^{\frac{1}{p} - \frac{1}{2}}\poly(\log nd)$-approximation to the error with respect to the \textit{best possible column subset} of size $k$.
Furthermore, the space complexity of the streaming algorithm is optimal up to a logarithmic factor.
Our streaming algorithm also extends naturally to a 1-round distributed protocol with nearly optimal communication cost.
A key ingredient in our algorithms is a reduction to column subset selection in the $\ell_{p,2}$-norm, which corresponds to the $p$-norm of the vector of Euclidean norms of each of the columns of $A$. 
This enables us to leverage strong coreset constructions for the Euclidean norm, which previously had not been applied in this context.
We also give the first provable guarantees for greedy column subset selection in the $\ell_{1, 2}$ norm, which can be used as an alternative, practical subroutine in our algorithms.
Finally, we show that our algorithms give significant practical advantages on real-world data analysis tasks.
\end{abstract}
\hspace{10pt}

\newpage
\tableofcontents

\newpage

\section{Introduction}

Column Subset Selection ($k$-CSS) is a widely studied approach for low-rank approximation and feature selection. 
In $k$-CSS, on an input data matrix $A \in \mathbb{R}^{d \times n}$, we seek a small subset $A_I$ of $k$ columns from $A$ such that $\min_V \|A_I V - A\|$ is minimized for some norm $\|\cdot \|$. 
In contrast to general low-rank approximation, where one finds $U \in \R^{d \times k}$ and $V \in \R^{k \times n}$ such that $\|UV - A\|$ is minimized \cite{cw13_lra_nnz_time, w14_sketching_as_a_tool}, 
$k$-CSS outputs an actual subset of the columns of $A$ as the left factor $U$.
The main advantage of $k$-CSS over general low-rank approximation is that the resulting factorization is more interpretable. 
For instance, the subset $A_I$ can represent salient features of $A$, while in general low-rank approximation, the left factor $U$ may not be easy to relate to the original dataset. In addition, the subset $A_I$ preserves sparsity of the original matrix $A$.

$k$-CSS has been extensively studied in the Frobenius norm \cite{guruswami2012optimal,boutsidis2014near,boutsidis2017optimal,boutisidis18} and also the operator norm \cite{operator_norm_lra, sketching_as_a_tool}. 
A number of recent works ~\cite{swz2017lral1norm, cgklpw2017lplra, dwzzr2019opt_anal_css_lplra, bbbklw19_ptas_for_lp_lra, mw2019opt_l1_css_lra} studied this problem in the $\ell_p$ norm for $1 \leq p < 2$, due to its robustness properties. 
The $\ell_1$ norm, especially, is less sensitive to outliers, and better at handling missing data and non-Gaussian noise, than the Frobenius norm~\cite{swz2017lral1norm}. 
Using the $\ell_1$ norm loss has been shown to lead to improved performance in many real-world applications of low-rank approximation, such as structure-from-motion \cite{kk05_sfm_problem} and image denoising \cite{yzd12_image_denoising}. 

In this work, we give algorithms for $k$-CSS in the $\ell_p$ norm, or $k$-CSS$_p$, in the streaming and distributed settings for $1 \leq p < 2$.
The streaming algorithm can be used on small devices with memory constraints, when the elements of a large dataset are arriving one at a time and storing the entire stream is not possible.
The distributed algorithm is useful when a large dataset is partitioned across multiple devices. 
Each device only sees a subset of the entire dataset, and it is expensive to communicate or transmit data across devices.

\subsection{Background: $k$-CSS$_p$ in the Streaming Model}

We study the column-update streaming model (See Definition~\ref{def:column_update_model}) where the dataset $A$ is a $d \times n$ matrix with $n \gg d$, and the columns of $A$ arrive one by one. 
In this setting, our goal is to maintain a good subset of columns of $A$, while using space that can be linear in $d$ and $k$, but sublinear in $n$. 
This model is relevant in settings where memory is limited, and the algorithm can only make one pass over the input data, e.g.~\cite{dk03_pass_efficient}.

$k$-CSS and low-rank approximation with the Frobenius norm loss have been extensively studied in the column-update streaming model (or equivalently the row-update streaming model), e.g.~\cite{cw09_nla_in_streaming_model, liberty13_row_update_PCA, gp14_row_update_PCA, w14_row_update_lower_bound_frobenius, abfmrz2016greedycssfrobenius, BWZ16}. 
However, $k$-CSS in the streaming model has not been studied in the more robust $\ell_p$ norm, and it is not clear how to adapt existing offline $k$-CSS$_p$ or $\ell_p$ low-rank approximation algorithms into streaming $k$-CSS$_p$ algorithms. 
In fact, the only work which studies $\ell_p$ low-rank approximation in the column-update streaming model is \cite{swz2017lral1norm}. 
They obtain a $\poly(k, \log d)$-approximate algorithm with $\widetilde{O}(dk)$ space, and using the techniques of \cite{ww19lpobliviousemb} this approximation factor can be further improved. 
However, it is not clear how to turn this streaming algorithm of \cite{swz2017lral1norm} into a streaming $k$-CSS$_p$ algorithm.

In addition, it is not clear how to adapt many known algorithms for $\ell_p$ low-rank approximation and column subset selection into one-pass streaming algorithms for $k$-CSS$_p$. 
For instance, one family of offline algorithms for $k$-CSS$_p$, studied by \cite{cgklpw2017lplra, dwzzr2019opt_anal_css_lplra, swz2019zeroonelawcss, mw2019opt_l1_css_lra}, requires $O(\log n)$ iterations.
In each of these iterations, the algorithm selects $\widetilde{O}(k)$ columns of $A$ and uses their span to approximate a constant fraction of the remaining columns, which are then discarded. 
Since these rounds are adaptive, this algorithm would require $O(\log n)$ passes over the columns of $A$ if implemented as a streaming algorithm, making it unsuitable.

\subsection{Streaming $k$-CSS$_p$: Our Results}

In this work, we give the first non-trivial single-pass algorithm for $k$-CSS$_p$ in the column-update streaming model (Algorithm~\ref{alg:streaming}). 
The space complexity of our algorithm, $\widetilde{O}(kd)$, is nearly optimal up to a logarithmic factor, since for $k$ columns, each having $d$ entries, $\Omega(dk)$ words of space are needed. Our algorithm is bi-criteria, meaning it outputs a column subset of size $\widetilde{O}(k)$ (where $\widetilde{O}$ hides a $\poly(\log k)$ factor) for a target rank $k$. Bi-criteria relaxation is standard in most low-rank approximation algorithms to achieve polynomial running time, since obtaining a solution with rank exactly $k$ or exactly $k$ columns can be NP hard, e.g., in the $\ell_1$ norm case~\cite{l1nphard}.
Furthermore, we note that our algorithm achieves an $\widetilde{O}(k^{1/p - 1/2})$-approximation
to the error from \textit{the optimal column subset} of size $k$, instead of the optimal rank-$k$ approximation error (i.e., $\min_{\textrm{rank}-k A_k}\|A_k - A\|_p$). \cite{swz2017lral1norm} shows any matrix $A \in \R^{d \times n}$ has a subset of $O(k \log k)$ columns which span an $\widetilde{O}(k^{1/p - 1/2})$-approximation to the optimal rank-$k$ approximation error. Thus our algorithm is able to achieve an $\widetilde{O}(k^{2/p-1})$-approximation relative to the optimal rank-$k$ approximation error.

\subsection{Streaming $k$-CSS$_p$: Our Techniques}

Our first key insight is that we need to maintain a small subset of columns with size independent of $n$ that globally approximates all columns of $A$ well throughout the stream under the desired norm. A good data summarization technique for this is a \textit{strong coreset}, which can be a subsampled and reweighted subset of columns that preserves the cost of projecting onto all subspaces (i.e., the span of the columns we ultimately choose in our subset). However, \textit{strong coresets} are not known to exist in the $\ell_p$ norm for $1 \leq p < 2$. Thus, we reduce to low rank approximation in the $\ell_{p, 2}$ norm, which is the sum of the $p$-th powers of the Euclidean norms of all columns (see Definition~\ref{def:lp2_norm}).
Strong coresets in the $\ell_{p, 2}$ norm have been extensively studied and developed, see, e.g.,  \cite{sw2018strongcoreset}. 
However, to the best of our knowledge, $\ell_{p, 2}$ strong coresets have not been used for developing  $\ell_p$ low-rank approximation or $k$-CSS algorithms prior to our work.

The next question to address is how to construct and maintain \textit{strong coresets} of $A$ during the stream. First we observe that \textit{strong coresets} are mergeable. If two coresets $C_1$, $C_2$ provide a $(1\pm \epsilon)$-approximation to the $\ell_{p, 2}$-norm cost of projecting two sets of columns $A_M$ and $A_N$ respectively, to any subspace, then the coreset of $C_1 \cup C_2$ gives a $(1 \pm \epsilon)^2$-approximation to the cost of projecting $A_M \cup A_N$ onto any subspace. Thus, we can construct \textit{strong coresets} for batches of input columns from $A$ and merge these coresets to save space while processing the stream. In order to reduce the number of merges, and hence the approximation error, we apply the Merge-and-Reduce framework (see, e.g.,~\cite{mcgregor14_graph_stream_survey}). Our algorithm greedily merges the \textit{strong coresets} in a binary tree fashion, where the leaves correspond to batches of the input columns and the root is the single \textit{strong coreset} remaining at the end. This further enables us to maintain only $O(\log n)$ coresets of size $\widetilde{O}(k)$ throughout the stream, and hence achieve a space complexity of $\widetilde{O}(kd)$ words.

One problem with reducing to the $\ell_{p, 2}$ norm is that this leads to an approximation factor of $d^{1/p - 1/2}$. Our second key insight is to apply a dimensionality reduction technique in the $\ell_p$ norm, which reduces the row dimension from $d$ to $k\poly(\log nd)$ via data-independent (i.e., oblivious) sketching matrices of i.i.d. $p$-stable random variables (see Definition~\ref{def:pstables}), which only increases the approximation error by a factor of $O(\log nd)$. The overall approximation error is thus reduced to $O(k^{1/p-1/2}\poly(\log nd))$. 

As a result, our algorithm constructs coresets for the sketched columns instead of the original columns. However, since we do not know which subset of columns will be selected a priori, we need approximation guarantees of dimensionality reduction via oblivious skecthing matrices for all possible subsets of columns.
We combine a net argument with a union bound over all \textit{possible} subspaces spanned by column subsets of $A$ of size $\widetilde{O}(k)$ (see Lemma~\ref{lemma:pstable}). Previous arguments involving sketching for low-rank approximation algorithms, such as those by \cite{swz2017lral1norm, bbbklw19_ptas_for_lp_lra, mw2019opt_l1_css_lra}, only consider a single subspace at a time.

At the end of the stream we will have a single coreset of size $k \cdot \poly(\log nd)$. To further reduce the size of the set of columns output, we introduce an $O(1)$-approximate bi-criteria column subset selection algorithm in the $\ell_{p, 2}$ norm ($k$-CSS$_{p, 2}$; see Section~\ref{subsection:lp2_css_algorithm}) that selects $k\poly(\log k)$ columns from the coreset as the final output.

\subsection{Distributed $k$-CSS$_p$: Results and Techniques}
Our streaming algorithm and techniques can be extended to an efficient one-round distributed protocol for $k$-CSS$_p$ ($1 \leq p < 2$). 
We consider the column partition model in the distributed setting (see Definition~\ref{def:column_partition_model}), where $s$ servers communicate to a central coordinator via 2-way channels. This model can simulate arbitrary point-to-point communication by having the coordinator forward a message from one server to another; this increases the total communication by a factor of $2$ and an additive $\log s$ bits per message to identify the destination server.

Distributed low-rank approximation arises naturally when a dataset is too large to store on one machine, takes a prohibitively long time for a single machine to compute a rank-$k$ approximation, or is collected simultaneously on multiple machines. 
The column partition model arises naturally in many real world scenarios such as federated learning \cite{fegk13_distributed_css, abfmrz2016greedycssfrobenius, lbkw14_distributed_pca}. Despite the flurry of recent work on $k$-CSS$_p$, this problem remains largely unexplored in the distributed setting. This should be contrasted to Frobenius norm column subset selection and low-rank approximation, for which a number of results in the distributed model are known, see, e.g., \cite{abfmrz2016greedycssfrobenius,BLSW015,BSW016,BWZ16}. 

In this work, we give the first one-round distributed protocol for $k$-CSS$_p$ (Algorithm~\ref{alg:protocol}). Each server sends the coordinator a \textit{strong coreset} of columns. To reduce the number of columns output, our protocol applies the $O(1)$-approximate bi-criteria $k$-CSS$_{p,2}$ algorithm to give $k\poly(\log k)$ output columns, independent of $s,n$, and $d$. 
The communication cost of our algorithm, $\widetilde{O}(sdk)$ is optimal up to a logarithmic factor. Our distributed protocol is also a bi-criteria algorithm outputting $\widetilde{O}(k)$ columns and achieving an $\widetilde{O}(k^{1/p - 1/2})$-approximation relative to the error of the optimal column subset.

\subsection{Comparison with Alternative Approaches in the Distributed Setting}
If one only wants to obtain a good left factor $U$, and not necessarily a column subset of $A$, in the column partition model, one could simply sketch the columns of $A_i$ by applying an oblivious sketching matrix $S$ on each server. Each server sends $A_i \cdot S$ to the coordinator. 
The coordinator obtains $U = AS$ as a column-wise concatenation of the $A_iS$.  
\cite{swz2017lral1norm} shows that $AS$ achieves an $\widetilde{O}(\sqrt{k})$ approximation to the optimal rank-$k$ error, and this protocol only requires $\widetilde{O}(sdk)$ communication, $O(1)$ rounds and polynomial running time. However, while $AS$ is a good left factor, it does not correspond to an actual subset of columns of $A$. 

Obtaining a subset of columns that approximates $A$ well with respect to the $p$-norm in a distributed setting is non-trivial. One approach due to \cite{swz2017lral1norm} is to take the matrix $AS$ described above, sample rows according to the Lewis weights \cite{cp2015lewisweights} of $AS$ to get a right factor $V$, which is in the row span of $A$, and then use the Lewis weights of $V$ to sample columns of $A$. Unfortunately, this protocol only achieves a loose $\widetilde{O}(k^{3/2})$ approximation to the optimal rank-$k$ error \cite{swz2017lral1norm}. Moreover, it is not known how to do Lewis weight sampling in a distributed setting. Alternatively, one could first apply $k$-CSS$_p$ on $A_i$ to obtain factors $U_i$ and $V_i$ on each server, and then send the coordinator all the $U_i$ and $V_i$. The coordinator then column-wise stacks the $U_iV_i$ to obtain $U \cdot V$ and selects $\widetilde{O}(k)$ columns from $U \cdot V$. Even though this protocol applies to all $p\geq 1$, it achieves a loose $O(k^2)$ approximation to the optimal rank-$k$ error and requires a prohibitive $O(n + d)$ communication cost\footnote{We give this protocol and the analysis in the supplementary.}. One could instead try to just communicate the matrices $U_i$ to the coordinator, which results in much less communication, but this no longer gives a good approximation. Indeed, while each $U_i$ serves as a good approximation locally, there may be columns that are locally not important, but become globally important when all of the matrices $A_i$ are put together. What is really needed here is a small \emph{strong coreset} $C_i$ for each $A_i$ so that if one concatenates all of the $C_i$ to obtain $C$, \emph{any} good column subset of the coreset $C$ corresponds to a good column subset for $A$.

\subsection{Greedy $k$-CSS and Empirical Evaluations}
We also propose an offline, greedy algorithm to select columns in the $\ell_{p, 2}$ norm, $\forall p \in [1, 2)$ (see Section~\ref{section:greedy_analysis}), which can be used as an alternative subroutine in both of our algorithms, and show the provable additive error guarantees for this algorithm.
Similar error guarantees were known for the Frobenius norm \cite{abfmrz2016greedycssfrobenius}, though nothing was known for the $\ell_{p,2}$ norm. 
We implement both of our streaming and distributed algorithms in the $\ell_1$ norm and experiment with real-world text document and genetic analysis applications. We compare the $O(1)$-approximate bi-criteria $k$-CSS$_{1,2}$ (denoted regular CSS$_{1, 2}$) and the greedy $k$-CSS$_{1,2}$ as subroutines of our streaming and distributed algorithms, and show that greedy $k$-CSS$_{1, 2}$ yields an improvement in practice. Furthermore, we compare our $O(1)$-approximate $k$-CSS$_{1,2}$ subroutine against one $k$-CSS$_2$ algorithm as active learning algorithms on a noisy image classification task to show that the $\ell_1$ norm loss is indeed more robust for $k$-CSS to non-Gaussian noise.

Note that regular CSS$_{p, 2}$ gives a stronger multiplicative $O(1)$ approximation to the optimal rank-$k$ error, while greedy CSS$_{p, 2}$ gives an additive approximation to the error from the best column subset. However, in practice, we observe that greedy CSS$_{1, 2}$ gives lower approximation error than regular CSS$_{1, 2}$ in the $\ell_1$ norm, though it can require significantly longer running time than regular CSS$_{1, 2}$. An additional advantage of greedy CSS$_{p, 2}$ is that it is simpler and easier to implement.

\subsection{Technical Novelty}
We highlight several novel techniques and contributions: \begin{itemize}
    \item \textbf{Non-standard Net Argument:} 
    To apply dimensionality reduction techniques via $p$-stable random variables to reduce the final approximation error, on the lower bound side, we need to show that $p$-stable random variables do not reduce the approximation error (i.e. no contraction), with respect to \textit{all possible} column subsets of $A$, with high probability, in Lemma~\ref{lemma:pstable}. 
    While the net arguments are widely used, our proof 
    of Lemma~\ref{lemma:pstable} is non-standard: we union bound over all possible subspaces defined on \textit{subsets} of size $\widetilde{O}(k)$. This is more similar to the Restricted Isometry Property (RIP), which is novel in this context, but we only need a one-sided RIP since we only require no contraction; on the upper bound side, we just argue with constant probability the \textit{single} optimal column subset and its corresponding right factor do not dilate much.
    
    \item \textbf{\textit{Strong Coresets} for CSS: } Strong coresets for the $\ell_{p, 2}$ norm have not been used for entrywise $\ell_p$ norm column subset selection, or {\em even for $\ell_p$ low rank approximation}, for $p \neq 2$. This is perhaps because strong coresets for subspace approximation, i.e., coresets that work for all query subspaces simultaneously, are not known to exist for sums of $p$-th powers
    of $\ell_p$-distances for $p \neq 2$; our work provides a workaround to this. By switching to the Euclidean norm in a low-dimensional space we can use strong coresets for the $\ell_{p,2}$ norm with a small distortion. 

    \item \textbf{Greedy $\ell_{p, 2}$-norm CSS: }
    We give the first provable guarantees for greedy $\ell_{p, 2}$-norm column subset selection. 
    We show that the techniques used to derive guarantees for greedy CSS in the Frobenius norm from~\cite{abfmrz2016greedycssfrobenius} can be extended to the $\ell_{p, 2}$ norms, $\forall p \in [1, 2)$. A priori, it is not clear this should work, since for example, $\ell_{1, 2}$ norm low rank approximation is NP-hard~\cite{cw2015subsapproxl2} while Frobenius norm low rank approximation can be solved in polynomial time.
\end{itemize}

\section{Problem Setting}
\label{section:problemsetup}
\begin{definition}[Column-Update Streaming Model~\cite{swz2017lral1norm}] 
\label{def:column_update_model}
    Let $A_{*1}, A_{*2}, \cdots, A_{*n}$ be a set of columns from the input matrix $A \in \mathbb{R}^{d \times n}$. In the column-update model, each column of $A$ will occur in the stream exactly once, but the columns can be in an arbitrary order. An algorithm in this model is only allowed a single pass over the columns. At the end of the stream, the algorithm stores some information about $A$. The space of the algorithm is the total number of words required to store this information during the stream. Here, each word is $O(\log nd)$ bits.
\end{definition}

\begin{definition}[Column Partition Distributed Model~\cite{swz2017lral1norm}]
\label{def:column_partition_model}
There are $s$ servers, the $i$-th of which holds matrix $A_i \in \R^{d \times n_i}$ as the input. Suppose $n = \sum_{i=1}^s n_i$, and the global data matrix is denoted by $A = [A_1, A_2, \dots, A_s]$. $A$ is column-partitioned and distributed across $s$ machines. Furthermore, there is a coordinator. The model only allows communication between the servers and the coordinator. The communication cost in this model is the total number of words transferred between machines and the coordinator. Each word is $O(\log snd)$ bits.
\end{definition}

\begin{definition}[$\ell_{p, 2}$ norm]
\label{def:lp2_norm}
For matrix $A \in \mathbb{R}^{d \times n}$, $\|A\|_{p, 2} = (\sum_{j=1}^n \|A_{*j}\|_2^p)^{1/p}$, where $A_{*j}$ denotes the $j$-th column. 
\end{definition}

\begin{definition}[$p$-Stable Distribution and Random Variables]
\label{def:pstables}
Let $X_1, \dots, X_d$ be random variables drawn i.i.d. from some distribution $\mathcal{D}$. $\mathcal{D}$ is called $p$-stable if for an arbitrary vector $v \in \mathbb{R}^d$, $\langle v, X\rangle = \|v\|_pZ$ for some $Z$ drawn from $\mathcal{D}$, where $X = [X_1, \dots, X_d]^T$. $\mathcal{D}$ is called a $p$-stable distribution --- these exist for $p \in (0, 2]$. Though there is no closed form expression for the $p$-stable distribution in general except for a few values of $p$, we can efficiently generate a single $p$-stable random variable in $O(1)$ time using the following method due to \cite{cms76_simulating_pstable}: if $\theta \in [-\frac{\pi}{2}, \frac{\pi}{2}]$ and $r \in [0, 1]$ are sampled uniformly at random, then, $\frac{\sin(p\theta)}{\cos^{1/p}\theta}(\frac{\cos(\theta(1-p))}{\ln(\frac{1}{r})})^{\frac{1-p}{p}}$ follows a $p$-stable distribution.
\end{definition}

\section{Preliminaries}
\label{section:preliminaries}

In this section, we introduce the dimensionality reduction techniques we make use of: strong coresets for $\ell_{p, 2}$-norm low-rank approximation and oblivious sketching using $p$-stable random matrices. We begin with a standard relationship between the $\ell_p$ norm and the $\ell_{p, 2}$ norm:

\begin{lemma}\label{lemma:norm}
For a matrix $A \in \mathbb{R}^{d \times n}$ and $p \in [1, 2)$,
$\|A\|_{p,2} \leq \|A\|_p \leq d^{\frac{1}{p}-\frac{1}{2}} \|A\|_{p,2}$.
\end{lemma}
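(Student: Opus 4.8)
The plan is to reduce the matrix inequality to the corresponding two-sided vector norm inequality, applied column by column. Recall that $\|A\|_p$ denotes the entrywise $\ell_p$ norm, so $\|A\|_p^p = \sum_{j=1}^n \|A_{*j}\|_p^p$, where for a vector $v \in \R^d$ we write $\|v\|_p = \big(\sum_{i=1}^d |v_i|^p\big)^{1/p}$; meanwhile, by Definition~\ref{def:lp2_norm}, $\|A\|_{p,2}^p = \sum_{j=1}^n \|A_{*j}\|_2^p$. Hence it suffices to prove, for every fixed $v \in \R^d$ and every $p \in [1,2)$, the bound $\|v\|_2 \le \|v\|_p \le d^{1/p - 1/2}\|v\|_2$. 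Raising each side to the $p$-th power, summing over the $n$ columns, and taking $p$-th roots then yields the claimed matrix inequality.

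For the lower bound $\|v\|_2 \le \|v\|_p$ I would invoke the standard monotonicity of $\ell_q$ norms in $q$: since $p \le 2$, the map $q \mapsto \|v\|_q$ is nonincreasing, so $\|v\|_2 \le \|v\|_p$. (Equivalently: normalizing so $\|v\|_p = 1$, each coordinate satisfies $|v_i| \le 1$, hence $|v_i|^2 \le |v_i|^p$, and summing gives $\|v\|_2^2 \le 1$.) For the upper bound I would apply Hölder's inequality to $\sum_{i=1}^d |v_i|^p \cdot 1$ with the conjugate exponent pair $(2/p,\, 2/(2-p))$:
\[
\sum_{i=1}^d |v_i|^p \;\le\; \Big(\sum_{i=1}^d |v_i|^2\Big)^{p/2}\Big(\sum_{i=1}^d 1\Big)^{(2-p)/2} \;=\; \|v\|_2^p\, d^{(2-p)/2},
\]
and taking $p$-th roots gives $\|v\|_p \le d^{(2-p)/(2p)}\|v\|_2 = d^{1/p-1/2}\|v\|_2$.

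Assembling the pieces: summing $\|A_{*j}\|_2^p \le \|A_{*j}\|_p^p$ over $j$ gives $\|A\|_{p,2}^p \le \|A\|_p^p$, and summing $\|A_{*j}\|_p^p \le d^{1-p/2}\|A_{*j}\|_2^p$ over $j$ gives $\|A\|_p^p \le d^{1-p/2}\|A\|_{p,2}^p$; taking $p$-th roots of both establishes the lemma. There is no substantive obstacle here — the only points requiring care are the bookkeeping of the Hölder exponents $(2/p, 2/(2-p))$ and the observation that the entrywise $\ell_p$ norm decomposes as a sum over columns, each term being an ordinary vector $\ell_p$ norm. The endpoint $p=1$ is covered by the same argument with no modification.
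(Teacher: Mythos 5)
Your proposal is correct and follows essentially the same route as the paper: both reduce to the standard two-sided vector inequality $\|x\|_2 \leq \|x\|_p \leq d^{1/p - 1/2}\|x\|_2$ and apply it column by column via the decompositions $\|A\|_{p,2}^p = \sum_j \|A_{*j}\|_2^p$ and $\|A\|_p^p = \sum_j \|A_{*j}\|_p^p$. The only difference is that you supply the (standard) monotonicity and H\"older arguments for the vector inequality, which the paper simply cites.
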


\subsection{Dimensionality Reduction in the $\ell_p$ norm}

To reduce the row dimension $d$, we left-multiply $A$ by an oblivious sketching matrix $S$ with i.i.d. $p$-stable entries so that our approximation error only increases by an $\widetilde{O}(k^{\frac{1}{p} - \frac{1}{2}})$ factor instead of an $O(d^{\frac{1}{p}- \frac{1}{2}})$ factor when we switch to the $\ell_{p, 2}$ norm. The following lemma shows that for all column subsets $A_T$ and right factors $V$, the approximation error when using these to fit $A$ does not shrink after multiplying by $S$ (i.e., this holds simultaneously for all $A_T$ and $V$):

\begin{lemma}[Sketched Error Lower Bound]
\label{lemma:pstable}
Let $A \in \R^{d \times n}$ and $k \in \N$. Let $t = k \cdot \poly(\log (nd))$, and let $S \in \R^{t \times d}$ be a matrix whose entries are i.i.d. standard $p$-stable random variables, rescaled by $\Theta(1/t^{\frac{1}{p}})$. Then, with probability $1 - o(1)$, for \textbf{all} $T \subset [n]$ with $|T| = k \cdot \poly(\log k)$ and for \textit{all} $V \in \R^{|T| \times n}$,
$$\|A_TV - A\|_p \leq \|SA_TV - SA\|_p$$ 
\end{lemma}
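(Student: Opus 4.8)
The plan is to prove a \emph{one-sided} (no-contraction) guarantee that holds uniformly over all column subsets $A_T$ with $|T| = k\poly(\log k)$ and all right factors $V$, by a net argument combined with a union bound over the column subsets. The key observation is that we do not need to control $V$ directly: for a fixed $T$, the quantity $\|A_TV - A\|_p$ ranges, as $V$ varies over $\R^{|T|\times n}$, over the set of matrices whose $j$-th column is $A_{*j} - \Pi_{T}^{(j)} A_{*j}$ for some vector in the column span of $A_T$; equivalently, $\|SA_TV - SA\|_p \ge \|A_TV-A\|_p$ for all $V$ is implied by the statement that $S$ does not contract vectors of the form $(I - Y)A_{*j}$ where $Y$ ranges over projections onto the $|T|$-dimensional subspace $\mathrm{colspan}(A_T)$ — but more robustly, it suffices to show that $S$ does not contract \emph{any} vector lying in a subspace of the form $\mathrm{colspan}(A_T) + \mathrm{span}(A_{*j})$, which has dimension at most $|T|+1 = k\poly(\log k)$. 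So the real target is: with probability $1-o(1)$, for every subspace $W \subseteq \R^d$ of dimension $r := k\poly(\log k)$ that is spanned by columns of $A$, and every $x \in W$, one has $\|Sx\|_p \ge \|x\|_p$ (after an appropriate choice of the $\Theta(1/t^{1/p})$ rescaling, possibly losing an absolute constant that can be absorbed). This is exactly a one-sided restricted-isometry-type statement for $p$-stable sketches, restricted to the at most $\binom{n}{r}$ column-subspaces.

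First I would recall the standard single-subspace fact: for a fixed unit vector $x$, $Sx$ has i.i.d.\ $p$-stable coordinates (each equal to $\|x\|_p$ times a standard $p$-stable), so $\|Sx\|_p^p = \sum_{i=1}^t |Z_i|^p$ with the $Z_i$ i.i.d.\ standard $p$-stable, rescaled by $\Theta(1/t)$. Since $\E|Z|^p = \infty$ for $p < 2$ we cannot use a naive concentration bound on $\|Sx\|_p^p$; instead I would use the well-known truncation/median-type argument (as in Indyk's $\ell_p$ embedding work and its use in \cite{swz2017lral1norm, cgklpw2017lplra}): with the stated rescaling, for a fixed $x$, $\Pr[\|Sx\|_p < \|x\|_p] \le e^{-\Omega(t)}$, i.e.\ no-contraction holds with failure probability exponentially small in $t$. (Dilation, by contrast, only holds up to an $O(\log(nd))$ factor and only with constant probability — but we do not need dilation in this lemma; that is handled separately on the ``upper bound side'' for the single optimal subset.)

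Next I would set up the net. Fix a column subset $T' := T \cup \{j\}$ with $|T'| \le r$, and let $W = \mathrm{colspan}(A_{T'})$, an $r'$-dimensional subspace with $r' \le r$. Build a $\gamma$-net $\calN$ of the unit sphere of $W$ in the $\ell_p$ norm; a standard volumetric bound gives $|\calN| \le (C/\gamma)^{r'} \le (C/\gamma)^r$. Apply the single-vector no-contraction bound to each net point and union bound: the failure probability over one fixed $W$ is at most $(C/\gamma)^r e^{-\Omega(t)}$. Then union bound over all $\binom{n}{r} \le n^r$ choices of $T'$: total failure probability $\le n^r (C/\gamma)^r e^{-\Omega(t)}$. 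Taking $\gamma$ to be a small constant and $t = k\poly(\log(nd)) \ge C' r \log(nd) = C' k \poly(\log k)\log(nd)$ with a large enough polylog, this is $e^{-\Omega(t)} \cdot 2^{O(r\log(nd))} = o(1)$, as desired. Finally I would ``fill in'' the net: for arbitrary unit $x \in W$, write $x = x_0 + (x - x_0)$ with $x_0 \in \calN$, $\|x - x_0\|_p \le \gamma$; then $\|Sx\|_p \ge \|Sx_0\|_p - \|S(x-x_0)\|_p$. The first term is $\ge 1$ on the good event. The second term requires an \emph{upper} bound on $\|S y\|_p$ for the single direction $y = (x-x_0)/\|x-x_0\|_p \in W$ — but this is itself a supremum over $W$, so I would handle it by a standard iteration/chaining argument: define $M_W := \sup_{y \in W, \|y\|_p = 1} \|Sy\|_p$, peel $y$ onto the net recursively, and show $M_W \le 2$ (say) on an event of the same probability, since each peeling step contributes a geometrically shrinking $\gamma$-fraction. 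This gives $\|Sx\|_p \ge 1 - \gamma M_W \ge 1 - 2\gamma \ge 1/2$ for small $\gamma$; absorbing the $1/2$ into the $\Theta(1/t^{1/p})$ rescaling constant yields $\|Sx\|_p \ge \|x\|_p$ for all $x \in W$ and all $W$, which by the reduction in the first paragraph gives $\|SA_TV - SA\|_p \ge \|A_TV - A\|_p$ for all $T, V$.

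The main obstacle, and where the argument genuinely departs from the standard single-subspace analyses of \cite{swz2017lral1norm, cgklpw2017lplra, mw2019opt_l1_css_lra}, is making the net-plus-union-bound budget close: because $\ell_p$ no-contraction for $p$-stables only fails with probability exponentially small in $t$ (rather than, say, $e^{-\Omega(t\eps^2)}$ as for subgaussians), we can afford a union bound over a set of size $2^{O(t)}$, and the number of column-subspaces is $n^r \cdot (C/\gamma)^r = 2^{O(r\log(nd))}$, so we precisely need $t = \Omega(r\log(nd)) = k\poly(\log k)\log(nd)$, which is exactly the hypothesis $t = k\poly(\log(nd))$. The second delicate point is the reduction from ``$S$ doesn't contract the error matrix $A_TV - A$ in $\ell_p$'' to ``$S$ doesn't contract every vector in $\mathrm{colspan}(A_T) + \mathrm{span}(A_{*j})$'': one must be careful that the $\ell_p$ norm of a matrix is the sum of $p$-th powers of entries, so column-wise no-contraction of each $(A_TV - A)_{*j}$ (which lies in the $(|T|+1)$-dimensional subspace $\mathrm{colspan}(A_T)+\mathrm{span}(A_{*j})$) implies no-contraction of the whole matrix — this works cleanly because the same sketch $S$ is applied to every column and $\|\cdot\|_p^p$ is additive over columns. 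I would make sure the net is built once per subspace $W_j := \mathrm{colspan}(A_T) + \mathrm{span}(A_{*j})$ (there are at most $n \cdot \binom{n}{|T|}$ such subspaces, still $2^{O(r\log n)}$), so the union bound budget is unaffected.
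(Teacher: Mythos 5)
Your overall skeleton matches the paper's proof: reduce the matrix inequality to a per-column statement using the additivity of $\|\cdot\|_p^p$ over columns (each error column $(A_TV-A)_{*j}$ lives in the span of $[A_T, A_{*j}]$, a subspace of dimension at most $|T|+1$); prove single-vector no-contraction with failure probability $e^{-\Omega(t)}$ via a Chernoff bound on the events $\{|Z_i|\ge C\}$; and union bound over a net in each subspace and over all $\binom{n}{|T|}\cdot n$ column subspaces, with the budget closing precisely because $t=\Theta(|T|\log(nd))$. All of that is correct and is what the paper does.

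The gap is in how you fill in the net. To pass from net points to arbitrary $x$ you need to control $\|S(x-x_0)\|_p$, and you propose to do this by peeling: $M_W := \sup_{\|y\|_p=1, y\in W}\|Sy\|_p$ satisfies $M_W \le N_W + \gamma M_W$ with $N_W := \max_{x_0\in\calN}\|Sx_0\|_p$, so you need $N_W = O(1)$ on the same high-probability event. But for $p$-stable $S$ there is no exponentially-small-failure upper tail for a single point: $\|Sx_0\|_p^p = \Theta(\tfrac{1}{t})\sum_i |Z_i|^p$ with $\Pr[|Z_i|^p > u] = \Theta(u^{-1})$, so already $\Pr[\|Sx_0\|_p > 2] = \Omega(1/t)$ for one fixed net point. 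Over the $2^{\Theta(r\log(nd))}$ net points you must union bound over, the event $N_W\le 2$ fails with overwhelming probability, so the claimed bound $M_W\le 2$ "on an event of the same probability" does not hold; you correctly note that dilation only holds with constant probability, but this is exactly what breaks the peeling step. The paper's workaround is to avoid any per-net-point upper bound entirely: it conditions on the single event that every \emph{entry} of $S$ is at most $D=\poly(nd)$ (a union bound over only $td$ entries, each failing with probability $\Theta(1/D)$), which yields the deterministic worst-case bound $\|Sy\|_p \le K\|y\|_p$ for all $y$ with $K=\poly(nd)$, and then takes the net resolution polynomially fine, $\gamma = 1/(m^2K)$, so that the error term $K\gamma = O(1/m^2)$ is absorbed by a constant rescaling of $S$. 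The net size grows only to $2^{O(m\log(nd))}$, which the $e^{-\Omega(t)}$ no-contraction budget still pays for. Replacing your chaining step with this entry-boundedness argument (or any other uniform operator-type upper bound that does not require a union bound over net points) would complete your proof.
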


We also recall the following upper bound for oblivious sketching from \cite{swz2017lral1norm} for a \textit{fixed} subset of columns $A_T$ and a \textit{fixed} $V$.

\begin{lemma}[Sketched Error Upper Bound (Lemma E.11 of \cite{swz2017lral1norm})]
\label{lemma:bestcss}
Let $A \in \R^{d \times n}$ and $k \in \N$. Let $t = k \cdot \poly(\log (nd))$, and let $S \in \mathbb{R}^{t \times d}$ be a matrix whose entries are i.i.d. standard $p$-stable random variables, rescaled by $\Theta(1/t^{\frac{1}{p}})$. Then, for a fixed subset $T \subset [n]$ of columns with $|T| = k\cdot \poly(\log k)$ and a fixed $V \in \R^{|T| \times n}$, with probability $1 - o(1)$, we have 
$$\min_V \|SA_TV - SA\|_p \leq \min_V O(\log^{1/p}(nd))\|A_TV - A\|_p$$
\end{lemma}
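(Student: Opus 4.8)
Fix $V^{\star} \in \argmin_{V}\|A_TV - A\|_p$, so that $\min_V\|A_TV - A\|_p = \|A_TV^{\star} - A\|_p =: \OPT$, and set $B := A_TV^{\star} - A \in \R^{d\times n}$, which is a fixed matrix not depending on $S$. Since $V^\star$ is feasible for the sketched problem, $\min_V\|SA_TV - SA\|_p \le \|SA_TV^{\star} - SA\|_p = \|SB\|_p$. So it suffices to establish the one-sided \emph{no-dilation} claim: for every fixed $B \in \R^{d\times n}$, with probability $1-o(1)$ over $S$, $\|SB\|_p \le O(\log^{1/p}(nd))\,\|B\|_p$. (The ``fixed $V$'' appearing in the statement plays no further role, as $V$ is re-minimized on both sides.) Combining these gives $\min_V\|SA_TV - SA\|_p \le O(\log^{1/p}(nd))\,\OPT = O(\log^{1/p}(nd))\min_V\|A_TV - A\|_p$.

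To prove the no-dilation bound, expand $\|SB\|_p^p = \sum_{i=1}^t\sum_{j=1}^n|\langle S_i, B_{*j}\rangle|^p$, where $S_i$ denotes the $i$-th row of $S$. By the definition of $p$-stable random variables (Definition~\ref{def:pstables}), together with the $\Theta(t^{-1/p})$ rescaling of the entries of $S$, each inner product is distributed (marginally) as $\langle S_i, B_{*j}\rangle = \Theta(t^{-1/p})\,\|B_{*j}\|_p\,Z_{ij}$ for a standard $p$-stable $Z_{ij}$, and the rows $S_1,\dots,S_t$ are independent. Hence
\[
\|SB\|_p^p \;=\; \frac{\Theta(1)}{t}\sum_{j=1}^n \|B_{*j}\|_p^p\,\sum_{i=1}^t |Z_{ij}|^p .
\]
The obstacle — and the reason one cannot hope for a $(1\pm\eps)$-type statement here — is that for $1\le p<2$ the standard $p$-stable law has $\E|Z|^p = \infty$, so $\sum_i|Z_{ij}|^p$ cannot be controlled directly through its mean.

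The fix is truncation at a polynomial scale. Using the $p$-stable tail estimate $\Pr[|Z|>\lambda] = O(\lambda^{-p})$, pick $M = \poly(nd)$ (large enough that $tn\cdot M^{-p} = o(1)$; note $t = k\poly(\log nd)$) and union bound over the $tn$ variables $Z_{ij}$ to get an event $\calE$, $\Pr[\calE] = 1-o(1)$, on which $|Z_{ij}| \le M$ for all $i,j$. Integrating the truncated tail, $\E\big[|Z|^p \,\big|\, |Z|\le M\big] \le \big(O(1) + \int_1^{M^p} O(1/s)\,ds\big)\big/\Pr[|Z|\le M] = O(\log(M^p)) = O(\log(nd))$, and since $|Z_{ij}|^p\mathbf 1_{\calE} \le |Z_{ij}|^p\mathbf 1[|Z_{ij}|\le M]$ this yields $\E\big[\|SB\|_p^p \,\big|\, \calE\big] \le \frac{\Theta(1)}{t}\sum_j\|B_{*j}\|_p^p \cdot t\cdot O(\log(nd)) = O(\log(nd))\,\|B\|_p^p$. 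Applying Markov's inequality to the nonnegative random variable $\|SB\|_p^p$ under $\Pr[\,\cdot\mid\calE\,]$ then gives $\|SB\|_p^p = O(\log(nd))\,\|B\|_p^p$, i.e.\ $\|SB\|_p = O(\log^{1/p}(nd))\,\|B\|_p$, on an event of probability $1-o(1)$ (using a slowly growing Markov parameter, absorbed into the $O(\cdot)$ up to a negligible $\poly\log\log$ factor; the dominant loss is $\Pr[\overline{\calE}]$).

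The main obstacle, and the step requiring care, is exactly this heavy tail: because the truncation threshold $M$ must be polynomially large for the union bound to succeed, the variance of each truncated $|Z_{ij}|^p$ is itself $\poly(nd)$, so Bernstein/Chernoff concentration over the $t$ summands is unavailable and one must fall back on the first-moment (Markov) bound — which is also why the distortion is an honest $\poly\log$ factor rather than $1+\eps$. The reduction through $V^{\star}$ and the per-entry $p$-stability identity are routine by comparison.
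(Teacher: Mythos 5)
Your proof is correct, and its top-level structure is exactly the paper's: both arguments observe that the ``fixed $V$'' in the statement is immaterial, substitute the unsketched minimizer $V^{\star}=\argmin_V\|A_TV-A\|_p$ into the sketched objective, and thereby reduce the lemma to a no-dilation bound $\|SB\|_p\le O(\log^{1/p}(nd))\|B\|_p$ for the single fixed matrix $B=A_TV^{\star}-A$. The only difference is that the paper imports this no-dilation bound as a black box (Lemma E.17 of \cite{swz2017lral1norm}, restated as Lemma~\ref{lemma:afactor}), whereas you prove it from scratch; your truncation-at-$\poly(nd)$ plus first-moment/Markov argument is essentially the standard proof of that cited lemma, so nothing is gained or lost mathematically, but your write-up is self-contained where the paper's is not. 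One small point you correctly flag but should not gloss over: Markov with a constant parameter only yields constant success probability, so obtaining $1-o(1)$ forces a slowly growing parameter and hence a distortion of $O\bigl((\log(nd)\cdot\omega(1))^{1/p}\bigr)$ rather than literally $O(\log^{1/p}(nd))$; the paper's citation has the same character (it asserts arbitrarily small $o(1)$ failure probability with the constant absorbed into the $O(\cdot)$), so this is a shared imprecision rather than a defect of your argument. Your side remarks --- that only marginal $p$-stability of the $Z_{ij}$ and linearity of expectation are needed (no independence across columns $j$), and that the heavy tails preclude a $(1\pm\eps)$ guarantee --- are both accurate.
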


\subsection{Strong Coresets in the $\ell_{p, 2}$ Norm}

As mentioned above, strong coresets for $\ell_{p, 2}$-norm low-rank approximation are reweighted column subsets which preserve the $\ell_{p, 2}$-norm approximation error incurred by any rank-$k$ projection. Our construction of strong coresets follows~\cite{sw2018strongcoreset}, which is based on Lewis weights~\cite{cp2015lewisweights} sampling. Note that~\cite{sw2018strongcoreset} only states such strong coresets hold with constant probability. But in our applications, we need to union bound over multiple constructions of strong coresets, so need a lower failure probability. 
The only reason the coresets of ~\cite{sw2018strongcoreset} hold only with constant
probability is because they rely on the sampling result of ~\cite{cp2015lewisweights}, which is stated for constant probability. However, the results of ~\cite{cp2015lewisweights} are a somewhat
arbitrary instantiation of the failure probabilities of the earlier $\ell_p$-Lewis weights sampling
results in ~\cite{blm1989zenotope} in the functional analysis literature. That work performs
$\ell_p$-Lewis weight sampling and we show how to obtain failure probability $\delta$ with a
$\log(1/\delta)$ dependence in Section B.1 of the supplementary material.

\begin{lemma}[Strong Coresets in $\ell_{p,2}$ norm \cite{sw2018strongcoreset}]\label{lemma:coreset}
Let $A \in \R^{d \times n}$, $k \in \N$, $p \in [1, 2)$, and $\eps, \delta \in (0, 1)$. Then, in $\widetilde{O}(nd)$ time, one can find a sampling and reweighting matrix $T$ with
$O(\frac{d}{\eps^2} \poly(\log(d/\eps), \log(1/\delta)))$ columns, such that, with probability $1 - \delta$, for all rank-$k$ matrices $U$, 
\begin{align*}
    \min_{\textrm{rank-k }V} \|UV - AT\|_{p,2} = (1 \pm \eps)\min_{\textrm{rank-k }V}\|UV - A\|_{p,2}
\end{align*}
where $AT$ is called a \textbf{strong coreset} of $A$.
\end{lemma}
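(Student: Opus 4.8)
The plan is to reconstruct the strong-coreset construction of \cite{sw2018strongcoreset} and then replace its single step that succeeds only with constant probability by a version with failure probability $\delta$. Write $a_1,\dots,a_n\in\R^d$ for the columns of $A$. For a rank-$k$ matrix $U$ let $L=\mathrm{colspan}(U)$, a subspace of dimension at most $k$; then, since $\min_v\|Uv-a_j\|_2=\mathrm{dist}_2(a_j,L)$ and $\mathrm{dist}_2(\cdot,L)$ is positively homogeneous,
\[ \min_{\mathrm{rank}\text{-}k\,V}\|UV-A\|_{p,2}^{\,p}=\sum_{j=1}^n \mathrm{dist}_2(a_j,L)^p,\qquad \min_{\mathrm{rank}\text{-}k\,V}\|UV-AT\|_{p,2}^{\,p}=\sum_{j=1}^n w_j\,\mathrm{dist}_2(a_j,L)^p, \]
where $w=(w_1,\dots,w_n)\ge 0$ is the nonnegative weight vector induced by the sampling-and-reweighting matrix $T$. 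So it suffices to build a nonnegative $w$ with $O(\tfrac{d}{\eps^2}\poly(\log(d/\eps),\log(1/\delta)))$ nonzeros such that $\sum_j w_j\,\mathrm{dist}_2(a_j,L)^p=(1\pm\eps)^p\sum_j\mathrm{dist}_2(a_j,L)^p$ for \emph{every} subspace $L$ of dimension at most $k$ simultaneously; rescaling $\eps$ by a constant removes the $(1\pm\eps)^p$.

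\textbf{Lewis-weight importance sampling.} I would then follow \cite{sw2018strongcoreset}. The base case is $L$ of codimension one: there $\mathrm{dist}_2(a_j,L)=|a_j^\top u|$ for the unit normal $u$, so $\sum_j\mathrm{dist}_2(a_j,L)^p=\|A^\top u\|_p^p$, and preserving this quantity for all $u$ is exactly an $\ell_p$ subspace embedding of the $n\times d$ matrix $A^\top$ --- achievable by sampling and reweighting $\widetilde{O}(d/\eps^2)$ rows of $A^\top$ (equivalently, columns of $A$) with probabilities proportional to its $\ell_p$ Lewis weights \cite{cp2015lewisweights}; this is the origin of the $d$, and there is no $k$. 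The technical heart of \cite{sw2018strongcoreset} is that the \emph{same} reweighted subset, drawn from the $\ell_p$ Lewis weights of $A^\top$, preserves $\sum_j\mathrm{dist}_2(a_j,L)^p$ up to $(1\pm\eps)$ for \emph{all} subspaces $L$, not only hyperplanes; I would import that argument. Approximate $\ell_p$ Lewis weights of $A^\top$ are computable in $\widetilde{O}(nd)$ time (the iterative scheme of \cite{cp2015lewisweights} plus a $\poly(d)$ overhead, which is $\widetilde{O}(nd)$ since $n\gg d$), and the subsequent sampling/reweighting is cheaper, giving the stated running time.

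\textbf{From constant probability to $1-\delta$.} The one lossy step is the invocation of the $\ell_p$ Lewis-weight sampling theorem of \cite{cp2015lewisweights}, which is stated only for constant success probability. That theorem, however, is a particular instantiation of the $\ell_p$-Lewis-weight (zonotope) sampling results of \cite{blm1989zenotope}, whose proof tolerates an arbitrary target failure probability $\delta$ at the cost of an extra $\poly(\log(1/\delta))$ factor in the sample size. Redoing that analysis --- carried out in Section~B.1 of the supplement --- gives: sampling $O(\tfrac{d}{\eps^2}\poly(\log(d/\eps),\log(1/\delta)))$ rows of $A^\top$ yields an $\ell_p$ subspace embedding, and hence, via the previous paragraph, a strong coreset for $\ell_{p,2}$ subspace approximation, with probability $1-\delta$. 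Since the remaining uniform-convergence / net step over subspaces $L$ also needs only a $\log(1/\delta)$-type blowup, the final coreset has the claimed size and failure probability.

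\textbf{Main obstacle.} I expect the crux to be twofold. First, establishing that an $\ell_p$ subspace embedding of $A^\top$ (a hyperplane-only guarantee) upgrades to a coreset for \emph{all} subspaces $L$ at once when $p\neq 2$: for $p=2$ this is immediate because $\mathrm{dist}_2(a_j,L)^2$ decomposes additively over an orthonormal basis of $L^\perp$, but for $p\in[1,2)$ the $p/2$-th power destroys this decomposition, which is precisely why $\ell_p$ Lewis weights rather than leverage scores are needed. Second, verifying that the target failure probability can be pushed to $\delta$ at only a $\poly(\log(1/\delta))$ price --- both inside the Lewis-weight concentration (by redoing \cite{blm1989zenotope} rather than using the black box of \cite{cp2015lewisweights}) and in the net over subspaces --- rather than incurring an uncontrolled $\log^{\Theta(d)}(1/\delta)$. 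The reduction to subspace approximation and the $\widetilde{O}(nd)$ bookkeeping are routine.
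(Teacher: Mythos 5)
Your proposal matches the paper's route essentially exactly: both reduce to Lewis-weight sampling à la \cite{sw2018strongcoreset}, both deliberately forgo the dimensionality-reduction subroutine of that paper to trade the $\poly(k)$ coreset size for a linear-in-$d$ one, and both obtain the $1-\delta$ guarantee by reworking the constant-probability Lewis-weight sampling of \cite{cp2015lewisweights} via the $\poly(\log(1/\delta))$ oversampling argument of \cite{blm1989zenotope} (the paper's Theorem 4.2 in Section B.1). The only cosmetic difference is that the paper simply invokes Lemma~16 of \cite{sw2018strongcoreset} as a black box with $B$ taken to be $A$ padded by a zero column, whereas you re-derive the distance-to-subspace reformulation and the hyperplane base case explicitly.
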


\subsection{$O(1)$-approximate Bi-criteria $k$-CSS$_{p, 2}$}
\label{subsection:lp2_css_algorithm}
We introduce 
an $O(1)$-approximation bi-criteria algorithm, which is a modification of the algorithm from \cite{cw2015subsapproxl2}. The number of output columns is $\widetilde{O}(k)$ instead of $O(k^2)$ since we use $\ell_p$ Lewis weights instead of $\ell_p$ leverage scores.
Details are in the supplementary material. 

\begin{theorem}[Bicriteria $O(1)$-Approximation Algorithm for $k$-CSS$_{p, 2}$]
\label{thm:bicriteria_k_css_p2}
Let $A \in \R^{d \times n}$ and $k \in \N$. There is an algorithm with $(\nnz(A) + d^2)\cdot k \cdot \poly(\log k)$ runtime that outputs a rescaled subset of columns $U \in \R^{d \times \widetilde{O}(k)}$ of $A$ and a right factor $V \in \R^{\widetilde{O}(k) \times n}$ for which $V = \min_{V}\|UV - A\|_{p, 2}$, such that with probability $1 - o(1)$, 
$$\|UV - A\|_{p, 2} \leq O(1) \cdot \min_{\text{ rank-k}\ A_k} \|A_k - A\|_{p, 2}$$
\end{theorem}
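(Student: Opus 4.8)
The plan is to follow the two-stage column-subset-selection strategy of \cite{cw2015subsapproxl2}, replacing its $\ell_p$ leverage-score sampling step by $\ell_p$ Lewis-weight sampling, which is exactly what brings the output size down from $O(k^2)$ to $\widetilde{O}(k)$. Write $\OPT := \min_{\textrm{rank-}k\,A_k}\|A_k - A\|_{p,2}$. In the first stage I would obtain, by the bicriteria subroutine of \cite{cw2015subsapproxl2} (an oblivious $\ell_{p,2}$ sketch followed by a small solve, all in input-sparsity time), a rank-$r$ factorization $\hat U\hat V$ with $r = \widetilde{O}(k)$, $\hat U\in\R^{d\times r}$ and $\hat V\in\R^{r\times n}$ of full row rank, such that with probability $1-o(1)$ one has $\|\hat U\hat V - A\|_{p,2}\le O(1)\cdot\OPT$. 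Only the right factor $\hat V$ is needed going forward; set $E := A - \hat U\hat V$, so that $\|E\|_{p,2}\le O(1)\,\OPT$.

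In the second stage I would pick the columns. Compute approximate $\ell_p$ Lewis weights of the $n$ columns of $\hat V$ (equivalently, of the rows of $\hat V^\top\in\R^{n\times r}$) using \cite{cp2015lewisweights}, and sample an $n\times n$ diagonal sampling-and-rescaling matrix $D$ with $m = \widetilde{O}(r) = \widetilde{O}(k)$ nonzero entries. The output is the rescaled column subset $AD$ (of width $\widetilde{O}(k)$, playing the role of $U$), together with $V := \argmin_V\|(AD)V - A\|_{p,2}$, a per-column least-squares problem that is well defined since $\ell_{p,2}$ is a norm for $p\ge 1$. For the guarantee, note that with $\widetilde{O}(k)$ Lewis-weight samples the matrix $\hat V D$ is, with probability $1-o(1)$, an $\ell_p$ subspace embedding of the row span of $\hat V$, so it has full row rank $r$ and $(\hat V D)(\hat V D)^{+} = I_r$. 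Taking the coefficient matrix $X := (\hat V D)^{+}\hat V$ and using $AD = \hat U(\hat V D) + ED$ gives $(AD)X - A = \hat U\hat V + (ED)(\hat V D)^{+}\hat V - A = (ED)(\hat V D)^{+}\hat V - E$, so that $\|UV - A\|_{p,2}\le \|(AD)X - A\|_{p,2}\le \|(ED)(\hat V D)^{+}\hat V\|_{p,2} + \|E\|_{p,2}$ by the triangle inequality. The remaining task is to bound $\|(ED)(\hat V D)^{+}\hat V\|_{p,2}$ by $O(1)\|E\|_{p,2}$: this is where the balancedness of Lewis weights enters — the rows of $(\hat V D)^{+}\hat V$ indexed by the sampled set spread the residual mass evenly, so the sampled, reweighted residual concentrates around $\|E\|_{p,2}$ — and it is the Lewis-weight analogue of the leverage-score bound in \cite{cw2015subsapproxl2}, now needing only $\widetilde{O}(k)$ samples. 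Combining the two stages yields $\|UV - A\|_{p,2}\le O(1)\,\OPT$.

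For the failure probability: both stages succeed with probability $1-o(1)$, the bicriteria step by the standard analysis of its sketch, and the sampling step by the strengthened $\ell_p$ Lewis-weight sampling bound with a $\log(1/\delta)$ dependence discussed after Lemma~\ref{lemma:coreset} (take $\delta = o(1)$; for $p<2$ this still needs only $\widetilde{O}(k)$ samples, so the output still has $\widetilde{O}(k)$ columns). For the runtime, the bicriteria subroutine and the column sampling run in input-sparsity time, computing approximate $\ell_p$ Lewis weights of $\hat V^\top\in\R^{n\times r}$ costs $\widetilde{O}(\nnz(\hat V) + r^{\omega})$, and the final $n$ least-squares solves cost $O(ndk)$ after a QR factorization of $U$; a careful implementation of the sketching and regression steps then yields the claimed $(\nnz(A) + d^2)\cdot k\poly(\log k)$ bound.

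The main obstacle is the residual bound in the second stage: showing that $\ell_p$ Lewis-weight sampling of $\hat V$ simultaneously (i) produces an $\ell_p$ subspace embedding so that $(\hat V D)^{+}$ acts like a genuine inverse on the row span of $\hat V$, and (ii) keeps $\|(ED)(\hat V D)^{+}\hat V\|_{p,2}\le O(1)\|E\|_{p,2}$ even though $E$ is full rank and the error matrix itself depends on the random sampling $D$. In \cite{cw2015subsapproxl2} this is done with $\ell_p$ leverage scores via their oversampling properties; the work here is to re-derive the same bound from the tighter Lewis-weight guarantees, handling the dependence of the error term on $D$, and to check that the ``$\ell_2$ within each column, then $p$-th powers across columns'' structure of the $\ell_{p,2}$ norm is compatible with the row-by-row Lewis-weight sampling statements — essentially lifting the $r$-dimensional Lewis-weight concentration to the $d\times r$ block setting. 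A secondary, more routine point is verifying that the supplement's high-probability Lewis-weight sampling theorem drives the failure probability to $o(1)$ at the cost of only a $\poly(\log)$ factor in the sample count.
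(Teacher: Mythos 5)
Your overall plan — an OSNAP sparse embedding to obtain an $\widetilde{O}(k)\times n$ right factor, then $\ell_p$ Lewis-weight sampling of its columns to select a rescaled subset $AD$, then per-column $\ell_{p,2}$ regression — matches the paper's Algorithm~\ref{algorithm:kCSSp2}, including the substitution of Lewis weights for leverage scores to bring the column count from $O(k^2)$ down to $\widetilde{O}(k)$. The difference, and the gap, is in how the stage-two guarantee is argued. You take the explicit candidate $X = (\hat V D)^{+}\hat V$, decompose $(AD)X - A = (ED)(\hat V D)^{+}\hat V - E$, and then need $\|(ED)(\hat V D)^{+}\hat V\|_{p,2}\le O(1)\|E\|_{p,2}$, which you correctly flag as the main obstacle but do not prove. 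For $p<2$ this is not a routine lift of the Frobenius-norm argument in \cite{cw2015subsapproxl2}: the Pythagorean identities and the orthogonality of $(\hat V D)^{+}\hat V$ against the residual that drive the $p=2$ leverage-score bound have no $\ell_{p,2}$ analogue, and the sampling $D$ appears correlated in both $ED$ and $(\hat V D)^{+}$.

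The paper's Theorem~\ref{thm:subset_cols_lewis_weight} sidesteps the residual decomposition entirely. It compares the minimizer $\hat X$ of the sampled problem $\|XSAS' - AS'\|_{p,2}$ against the fixed oracle $X^* = \arg\min_X\|XSA - A\|_{p,2}$, and bounds the cross-term $\|(X^*-\hat X)SA\|_{p,2}$ by alternating three tools: a Gaussian Dvoretzky embedding $G$ (Theorem~\ref{thm:dvoretzky_for_lp}) converting $\ell_{p,2}$ matrix norms to entrywise $\ell_p$ norms at $(1\pm\eps)$ distortion, which is precisely what reconciles the per-column-$\ell_2$ structure of $\ell_{p,2}$ with the $\ell_p$ Lewis-weight guarantee you worried about; the $\ell_p$ Lewis-weight subspace embedding (Theorem~\ref{thm:high_prob_lewis_weights}) applied to $G(X^*-\hat X)SA$, whose rows all lie in the row span of $SA$ even though $\hat X$ is random; and a Markov bound on $\|G(X^*SA - A)S'\|_p$, which is valid exactly because $G(X^*SA-A)$ is fixed and independent of $S'$, so that $\E\|YS'\|_p^p = \|Y\|_p^p$. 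Working with the minimizer $\hat X$ rather than an explicit coefficient matrix is what makes a sampling-independent Markov step available, and the Dvoretzky embedding is the idea missing from your write-up; without it, the residual bound you need would effectively require reinventing that machinery.
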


\section{A Streaming Algorithm for $k$-CSS$_p$}
\label{section:streaming_algo}

\begin{algorithm}[tb]
\caption{A one-pass streaming algorithm for bi-criteria $k$-CSS$_p$ in the column-update streaming model.}
\label{alg:streaming}
\begin{algorithmic}
\STATE \textbf{Input: } A matrix $A \in \R^{d \times n}$ whose columns arrive one at a time, $p \in [1, 2)$, rank $k \in \N$ and batch size $r$.
\STATE \textbf{Output: } A subset of $\widetilde{O}(k)$ columns $A_I$.
\STATE Generate a dense $p$-stable sketching matrix $S \in \mathbb{R}^{k\poly(\log(nd)) \times d}$.
\STATE A list of strong coresets and the level number $C \leftarrow \{\}$.
\STATE A list of columns corresponding to the list of strong coresets and the level number $D \leftarrow \{\}$.
\STATE A list of sketched columns $M \leftarrow \{\}$.
\STATE A list of columns $L \leftarrow \{\}$.
\FOR{Each column $A_{*j}$ seen in the data stream}
\STATE $M \leftarrow M \cup SA_{*j}$
\STATE $L \leftarrow L \cup A_{*j}$
\IF{length of $M$ == r}
\STATE $C \leftarrow C \cup (M, 0), D \leftarrow D \cup L$
\STATE $C, D \leftarrow$ \textbf{Recursive Merge}($C, D$) \COMMENT{// Algorithm~\ref{alg:recursive_merge}}
\STATE $M \leftarrow \{\}, L \leftarrow \{\}$
\ENDIF
\ENDFOR
\STATE $C \leftarrow C \cup (M, 0), D \leftarrow D \cup L$
\STATE $C, D \leftarrow$ \textbf{Recursive Merge}($C, D$) \COMMENT{// Algorithm~\ref{alg:recursive_merge}}
\STATE Apply $k$-CSS$_{p, 2}$ on the single strong coreset left in $C$ to obtain the indices $I$ of the subset of selected columns with size $O(k \times \poly(\log k))$. Recover the original columns of $A$ by mapping indices $I$ to columns in $D$ to get the subset of columns $A_I$.
\end{algorithmic}
\end{algorithm}

\begin{algorithm}
\caption{Recursive Merge}
\label{alg:recursive_merge}
\begin{algorithmic}
\STATE \textbf{Input: } A list $C$ of strong coresets and their corresponding level numbers. A list $D$ of (unsketched) columns of $A$ corresponding to the sketched columns in $C$.
\STATE \textbf{Output: } New $C$, where the list of strong coresets is greedily merged, and the corresponding new $D$.
\IF{length of $C$ == 1}
\STATE Return $C, D$.
\ELSE
\STATE Let $(C_{-2}, l_{-2}), (C_{-1}, l_{-1})$ be the second to last and last sets of columns $C_{-2}, C_{-1}$ with their corresponding level $l_{-2}, l_{-1}$ from list $C$.
\IF{$l_{-2}$ == $l_{-1}$}
\STATE Remove $(C_{-2}, l_{-2}), (C_{-1}, l_{-1})$ from $C$.
\STATE Remove the corresponding $D_{-2}, D_{-1}$ from $D$.
\STATE Compute a strong coreset $C_0$ of (i.e., select columns from) $C_{-2} \cup C_{-1}$. Record the indices $I$ of the columns selected in $C_0$.
\STATE Map indices $I$ to columns in $D_{-2} \cup D_{-1}$ to form a new subset of columns $D_0$.
\STATE $C \leftarrow C \cup (C_0, l_{-1} + 1), D \leftarrow D \cup D_0$.
\STATE \textbf{Recursive Merge}($C, D$).
\ELSE
\STATE Return $C, D$.
\ENDIF
\ENDIF
\end{algorithmic}
\end{algorithm}

Our one-pass streaming algorithm (Algorithm~\ref{alg:streaming}) is based on the Merge-and-Reduce framework. The $n$ columns of the input matrix $A$ are partitioned into $\lceil n/r \rceil$ batches of length $r = k \cdot \poly(\log nd)$.
See the supplementary material for an illustration. 
These $\lceil n/r \rceil$ batches can be viewed as the leaves of a binary tree and are considered to be at level $0$ of the tree. 
A merge operation (Algorithm~\ref{alg:recursive_merge}) is used as a subroutine of Algorithm~\ref{alg:streaming} --- it computes a strong coreset of two sets of columns corresponding to the two children nodes. 
Each node in the binary tree represents a set of columns. 
Starting from level $0$, every pair of neighboring batches of columns will be merged, until there is only one coreset of columns left at the root, i.e. level $\log (n/k)$. 
During the stream, the nodes are greedily merged. 
The streaming algorithm constructs strong coresets and merges the sketched columns $SA_{*j}$ (list $C$), while keeping a list of corresponding columns $A_{*j}$ (list $D$) at the same time, in order to recover the original columns of $A$ as the final output.

\begin{theorem}[A One-pass Streaming Algorithm for $k$-CSS$_p$]
\label{thm:streaming_algo}
    In the column-update streaming model, let $A \in \mathbb{R}^{d \times n}$ be the data matrix whose columns arrive one at each time in a data stream. Given $p \in [1, 2)$ and a desired rank $k \in \mathbb{N}$, Algorithm~\ref{alg:streaming} outputs a subset of columns $A_I \in \mathbb{R}^{d \times k\poly(\log(k))}$ in $\widetilde{O}(\nnz(A)k + nk + k^3)$ time, such that with probability $1 - o(1)$,
    \begin{align*}
        \min_{V}\|A_IV - A\|_p \leq \widetilde{O}(k^{1/p - 1/2})\min_{L \subset [n], |L|=k}\|A_LV - A\|_p
    \end{align*}
    Moreover, Algorithm \ref{alg:streaming} only needs to process all columns of $A$ once and uses $\widetilde{O}(dk)$ space throughout the stream.
\end{theorem}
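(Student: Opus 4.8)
The plan is to track the approximation error through the two dimensionality-reduction steps used by the algorithm --- the $p$-stable sketch $S$ and the $\ell_{p,2}$ strong coresets --- and then close the loop with the bi-criteria $k$-CSS$_{p,2}$ subroutine. Write $\OPT_p=\min_{L\subset[n],|L|=k}\min_V\|A_LV-A\|_p$, attained by a subset $L^\ast$ and factor $V^\ast$; set $t=k\poly(\log nd)$ (the number of rows of $S$) and let $k'=k\poly(\log k)$ be the bi-criteria output size. A point to fix at the outset: every strong coreset the algorithm builds is constructed for rank parameter $k'$, not $k$, so that it certifies the cost of projecting onto the span of the $k'$ columns eventually selected. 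First I record the two ``$S$ is faithful'' facts. By Lemma~\ref{lemma:bestcss} applied to the fixed pair $(L^\ast,V^\ast)$, with probability $1-o(1)$, $\min_V\|SA_{L^\ast}V-SA\|_p\leq O(\log^{1/p}(nd))\,\OPT_p$; combining this with Lemma~\ref{lemma:norm} on the $t\times n$ residual and with the observation that $SA_{L^\ast}V$ has rank $\leq k$ yields $\min_{\mathrm{rank}\text{-}k\,A_k}\|A_k-SA\|_{p,2}\leq O(\log^{1/p}(nd))\,\OPT_p$.

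Next I analyze the Merge-and-Reduce tree. Strong coresets are mergeable and the tree has depth $h=O(\log(n/r))=O(\log n)$, so taking the coreset accuracy $\eps=\Theta(1/\log n)$ makes the final coreset $C$ a $2$-approximate coreset: $\min_V\|UV-C\|_{p,2}$ and $\min_V\|UV-SA\|_{p,2}$ agree up to a factor of $2$ for every $U$ with at most $k'$ columns. This uses a union bound over the $O(n/r)$ coreset constructions in the tree, each invoked with failure probability $\delta=o(r/n)$ through the $\log(1/\delta)$-dependent version of Lemma~\ref{lemma:coreset}, which costs only $\poly(\log n)$ in the coreset size. Running Theorem~\ref{thm:bicriteria_k_css_p2} on $C$ returns a positively rescaled column subset $U=SA_IW$ (so $\mathrm{span}(U)=\mathrm{span}(SA_I)$ and $|I|=k'$) and a factor with $\|UV-C\|_{p,2}\leq O(1)\min_{\mathrm{rank}\text{-}k\,A_k}\|A_k-C\|_{p,2}$. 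Chaining the coreset guarantee --- once with the $k'$-column matrix $SA_I$, once with a $k$-column $U$ --- gives $\min_V\|SA_IV-SA\|_{p,2}\leq O(1)\min_V\|SA_IV-C\|_{p,2}\leq O(1)\|UV-C\|_{p,2}\leq O(1)\min_{\mathrm{rank}\text{-}k\,A_k}\|A_k-C\|_{p,2}\leq O(1)\min_{\mathrm{rank}\text{-}k\,A_k}\|A_k-SA\|_{p,2}\leq O(\log^{1/p}(nd))\,\OPT_p$.

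It then remains to pass from $\ell_{p,2}$ back to $\ell_p$ and to undo the sketch. Lemma~\ref{lemma:norm} on the $t\times n$ residual gives $\min_V\|SA_IV-SA\|_p\leq t^{1/p-1/2}\min_V\|SA_IV-SA\|_{p,2}=k^{1/p-1/2}\poly(\log nd)\cdot O(\log^{1/p}(nd))\,\OPT_p=\widetilde{O}(k^{1/p-1/2})\,\OPT_p$, using $t=k\poly(\log nd)$. Since $|I|=k'=k\poly(\log k)$, Lemma~\ref{lemma:pstable} --- which holds for \emph{all} subsets of this size at once with probability $1-o(1)$, exactly what is needed because $I$ depends on the randomness of $S$ --- yields $\min_V\|A_IV-A\|_p\leq\min_V\|SA_IV-SA\|_p\leq\widetilde{O}(k^{1/p-1/2})\,\OPT_p$, the claimed bound. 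A union bound over Lemmas~\ref{lemma:pstable} and~\ref{lemma:bestcss}, all $O(n/r)$ applications of Lemma~\ref{lemma:coreset}, and Theorem~\ref{thm:bicriteria_k_css_p2} keeps the overall success probability $1-o(1)$. For the resources: the running time is $\widetilde{O}(k\cdot\nnz(A))$ to form the sketched columns $SA_{\ast j}$ (as $S$ has $t=k\poly(\log nd)$ rows), plus $O(n/r)\cdot\widetilde{O}(k^2)=\widetilde{O}(nk)$ for the $O(n/r)$ coreset/merge steps on $t\times k\poly(\log nd)$ matrices (Lemma~\ref{lemma:coreset}), plus $\widetilde{O}(k^3)$ for the final $k$-CSS$_{p,2}$ on the $t\times k\poly(\log nd)$ coreset (Theorem~\ref{thm:bicriteria_k_css_p2}); the space is dominated by the $O(\log n)$ stored coresets, whose sketched copies cost $\widetilde{O}(k^2)$ words and whose tracked original columns (the list $D$, needed to map $I$ back to $A_I$) cost $\widetilde{O}(dk)$ words, for $\widetilde{O}(dk)$ total (assuming $d\geq k$, as $k$-CSS is otherwise degenerate).

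The main obstacle I anticipate is the bookkeeping around the Merge-and-Reduce composition: one must verify that a ``coreset of a reweighted coreset'' still behaves correctly --- the rescalings compose but never change column spans, so the final $U$ is a positively rescaled column subset of $SA_I$, which leaves the quantity $\min_V\|SA_IV-SA\|$ and the coreset costs unchanged --- that each coreset must be built for rank $k'=\widetilde{O}(k)$ rather than $k$ in order to certify the span of the eventual selection, and that each per-node failure probability can be forced down to $o(r/n)$ at only a $\poly(\log n)$ cost, which is precisely why the excerpt develops a $\log(1/\delta)$-dependent Lewis-weight sampling bound. Everything else is a careful but routine chaining of the norm, sketch, and coreset inequalities in the correct direction: a one-sided (no-contraction) bound over all candidate subsets for $S$, and a constant-probability dilation bound for the single optimum $(L^\ast,V^\ast)$.
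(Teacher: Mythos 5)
Your proposal is correct and follows essentially the same route as the paper's own proof: the same four ingredients (Lemma~\ref{lemma:pstable} for no-contraction, Lemma~\ref{lemma:bestcss} for bounded dilation on a fixed pair, Lemma~\ref{lemma:norm} with the reduced row dimension $t$, and Theorem~\ref{thm:bicriteria_k_css_p2} on the final coreset) are chained in the same way, modulo the direction in which the chain is written --- you work outward from $\OPT_p$, the paper works backward from $\min_V\|A_IV-A\|_p$. The resource analysis (space $\widetilde{O}(dk)$ from $O(\log n)$ stored coresets, running time $\widetilde{O}(\nnz(A)k + nk + k^3)$) also matches.

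One thing you do that the paper's write-up is less careful about, and which is worth keeping: you explicitly observe that the coreset guarantee must hold for rank $k'=k\poly(\log k)$ matrices, not just rank $k$, since after running the bi-criteria subroutine you need $\min_V\|SA_IV-SA\|_{p,2}\leq O(1)\min_V\|SA_IV-SAT\|_{p,2}$ where $SA_I$ has $k'$ columns. The paper's chain moves from $\|SA_IV'-SA\|_{p,2}$ directly to $O(1)\|(SAT)^*-SAT\|_{p,2}$ citing Theorem~\ref{thm:bicriteria_k_css_p2}, which is a bound against the coreset $SAT$, and Lemma~\ref{lemma:approx_merge} is stated only for rank-$k$ matrices $U$; so the step implicitly uses a rank-$k'$ coreset guarantee that is never explicitly invoked. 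Since the coreset size in Lemma~\ref{lemma:coreset} scales with the row dimension $d=t$ rather than the rank parameter, building the coresets for rank $k'$ costs nothing, but your flagging of this is a genuine clarification of the argument rather than a detour.
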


\begin{proof}
We give a brief sketch of the proof (the full proof is in the supplementary). We first need \textbf{Lemma}~\ref{lemma:approx_merge} below to show how the approximation error propagates through each level induced by the merge operator. It gives the approximation error of a strong coreset $C_0$ computed at level $l$ with respect to the union of all sets of columns represented as the leaves of the subtree rooted at $C_0$.

\begin{lemma}[Approximation Error from Merging]
\label{lemma:approx_merge}
Let $C_0$ be the strong coreset of size $\widetilde{O}(k)$ (See Lemma~\ref{lemma:coreset}) at level $l$ constructed from a union of its two children $C_{-1}\cup C_{-2}$, with $\frac{k}{\gamma^2} \cdot \poly(\log(nd/\gamma))$ columns, where $\gamma \in (0, 1)$. Then with probability $1 - \frac{1}{n^2}$, for all rank-k matrices $U$,\\ $$\min_{V}\|UV - C_0\|_{p, 2} = (1 \pm \gamma) \min_{V} \|UV - (C_{-1}\cup C_{-2})\|_{p, 2}$$\\
Let $M$ be the union of all sets of sketched columns represented as the leaves of the subtree rooted at $C_0$ (and assume the subtree has size $q$).
Then with probability $1 - \frac{q}{n^2}$, for all rank-k matrices $U$,\\ $$\min_{V}\|UV - C_0\|_{p, 2} = (1 \pm \gamma)^{l}\min_{V} \|UV - M\|_{p, 2}$$
\end{lemma}

If at the leaves (level 0), we construct $(1 \pm \frac{\eps}{\log n})$-approximate coresets for the input columns from the stream, the final single coreset left at the root level (level $\log(n/k)$) will be a $(1 \pm \eps)$-approximate coreset for \textit{all} columns of $SA$. The $k$-CSS$_{p, 2}$ algorithm that selects $O(k \poly(\log k))$ columns from this coreset gives an $O(1)$-approximation by Theorem~\ref{thm:bicriteria_k_css_p2}. 
By Lemmas \ref{lemma:norm}, \ref{lemma:pstable} and \ref{lemma:bestcss}, the final approximation error of this algorithm is dominated by the one from the relaxation to the $\ell_{p, 2}$ norm, which leads to an overall $\widetilde{O}(k^{1/p - 1/2})$ approximation factor. Note that the space complexity is $\widetilde{O}(dk)$ since each coreset has $\widetilde{O}(k)$ columns and we only keep coresets for at most $O(\log n)$ of the nodes of the tree, at a single time. The running time is dominated by the $O(n/k)$ merging operators throughout the stream and the $k$-CSS$_{p, 2}$ algorithm. A detailed analysis
is in the supplementary.
\end{proof}

\section{A Distributed Protocol for $k$-CSS$_p$}
\label{section:distributed_protocol}

\begin{theorem}[A One-round Protocol for Distributed $k$-CSS$_p$]
\label{thm:analysis_of_distributed_protocol}
In the column partition model, let $A \in \R^{d \times n}$ be the data matrix whose columns are partitioned across $s$ servers and suppose server $i$ holds a subset of columns $A_i \in \mathbb{R}^{d \times n_i}$, where $n = \sum_{i \in [s]} n_i$. Then, given $p \in [1, 2)$ and a desired rank $k \in \N$, Algorithm \ref{alg:protocol} outputs a subset of columns $A_I \in \R^{d \times k \poly(log(k))}$ 
in $\widetilde{O}(\nnz(A)k + kd + k^3)$ time, such that with probability $1 - o(1)$,
$$\min_V \|A_I V - A\|_p \leq \widetilde{O}(k^{1/p - 1/2}) \min_{L \subset [n], |L| = k} \|A_L V - A\|_p$$
Moreover, Algorithm \ref{alg:protocol} uses one round of communication and $\widetilde{O}(sdk)$ words of communication.
\end{theorem}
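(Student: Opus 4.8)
The plan is to treat Algorithm~\ref{alg:protocol} as the ``one-shot'' analogue of the streaming Algorithm~\ref{alg:streaming}: a single merge at the coordinator replaces the binary merge tree, so the argument is essentially a simplification of the proof of Theorem~\ref{thm:streaming_algo}. First I would fix a single $p$-stable sketch $S\in\R^{t\times d}$ with $t=k\poly(\log(nd))$ as shared randomness, so that every server applies the \emph{same} $S$ to its block. Server $i$ forms $SA_i$ and, via Lemma~\ref{lemma:coreset} with a constant accuracy $\eps$, rank parameter $k'=\widetilde{O}(k)$ (an a priori upper bound on the number of columns returned by the $k$-CSS$_{p,2}$ subroutine; the coreset size does not grow with this parameter), and failure probability $\delta=o(1/s)$, computes a strong coreset $C_i$ of $SA_i$ consisting of $\widetilde{O}(k)$ reweighted columns. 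It sends $C_i$ together with the $\widetilde{O}(k)$ original columns of $A_i$ that it sampled. Using the sharpened $\ell_p$-Lewis-weight sampling with a $\log(1/\delta)$ dependence (Section~B.1 of the supplement), a union bound over the $s$ servers makes $C=[C_1,\dots,C_s]$ simultaneously accurate with probability $1-o(1)$.

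The key structural fact is that coresets for the $\ell_{p,2}$ norm \emph{concatenate exactly}: since rank-$k'$ projection cost in the $\ell_{p,2}$ norm is separable over columns, $\min_V\|UV-[B_1,B_2]\|_{p,2}^p=\min_{V_1}\|UV_1-B_1\|_{p,2}^p+\min_{V_2}\|UV_2-B_2\|_{p,2}^p$ for every $U$, so concatenating the per-server coresets gives, for all rank-$k'$ $U$, $\min_V\|UV-C\|_{p,2}=(1\pm\eps)\min_V\|UV-SA\|_{p,2}$ --- with no power of $(1\pm\eps)$, in contrast to the streaming tree. Thus $C$ is a $(1\pm\eps)$ strong coreset of $SA$ of total size $\widetilde{O}(sk)$. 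The coordinator then runs the $O(1)$-approximate bi-criteria $k$-CSS$_{p,2}$ algorithm of Theorem~\ref{thm:bicriteria_k_css_p2} on $C$ (re-coreseting $C$ down to $\widetilde{O}(k)$ columns first, to keep the subroutine's cost at $\widetilde{O}(k^3)$), obtaining $\widetilde{O}(k)$ columns of $C$; these correspond to a subset $T\subseteq[n]$ with $|T|=k\poly(\log k)$ and $\min_V\|(SA)_TV-C\|_{p,2}\le O(1)\min_{\text{rank-}k\ B_k}\|B_k-C\|_{p,2}$. Mapping $T$ back to the retained original columns yields the output $A_T$.

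For correctness I would run the following chain, where $L^*$ is the optimal size-$k$ $\ell_p$ column subset of $A$ and $\OPT_p:=\min_V\|A_{L^*}V-A\|_p$. By Lemma~\ref{lemma:pstable} (no contraction, holding for \emph{all} subsets of size $k\poly(\log k)$ --- this is exactly why the ``all-subsets'' version is needed, since $T$ is produced as a function of $S$), $\min_V\|A_TV-A\|_p\le\min_V\|SA_TV-SA\|_p$; by Lemma~\ref{lemma:norm} applied to the $t$-row matrix $SA$, this is at most $t^{1/p-1/2}\min_V\|SA_TV-SA\|_{p,2}$. Now $\min_V\|SA_TV-SA\|_{p,2}\le(1-\eps)^{-1}\min_V\|(SA)_TV-C\|_{p,2}$ (the coreset lower bound for the rank-$\le k'$ subspace $\mathrm{colspan}((SA)_T)$) $\le O(1)\min_{\text{rank-}k\ B_k}\|B_k-C\|_{p,2}\le O(1)(1+\eps)\min_{\text{rank-}k\ B_k}\|B_k-SA\|_{p,2}$ (coreset, rank $k$) $\le O(1)\min_V\|SA_{L^*}V-SA\|_{p,2}$ (a size-$k$ column subset is a valid rank-$k$ solution) $\le O(1)\min_V\|SA_{L^*}V-SA\|_p$ (Lemma~\ref{lemma:norm}) $\le O(\log^{1/p}(nd))\,\OPT_p$ (Lemma~\ref{lemma:bestcss}, for the \emph{fixed} subset $L^*$). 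Since $t^{1/p-1/2}=k^{1/p-1/2}\poly(\log(nd))$ as $\tfrac1p-\tfrac12\in(0,\tfrac12]$, combining gives $\min_V\|A_TV-A\|_p\le\widetilde{O}(k^{1/p-1/2})\,\OPT_p$, and all invoked high-probability events (Lemmas~\ref{lemma:pstable},~\ref{lemma:bestcss}, the $s$ instances of Lemma~\ref{lemma:coreset}, and Theorem~\ref{thm:bicriteria_k_css_p2}) survive a union bound with probability $1-o(1)$.

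Finally, the resources: the protocol is one round (shared randomness lets every server send a single message to the coordinator); each server transmits $\widetilde{O}(k)$ sketched columns of length $t=\widetilde{O}(k)$ plus $\widetilde{O}(k)$ original columns of length $d$, i.e.\ $\widetilde{O}(k^2+kd)=\widetilde{O}(kd)$ words (using $k\le d$), hence $\widetilde{O}(skd)$ total; and the runtime is dominated by the $s$ sketch computations ($\widetilde{O}(k)\cdot\nnz(A)$ total), the $s$ coreset constructions ($\widetilde{O}(kn)=\widetilde{O}(k\,\nnz(A))$ after discarding zero columns), the coordinator's re-coreset and $k$-CSS$_{p,2}$ ($\widetilde{O}(k^3)$ up to lower-order terms), and $\widetilde{O}(kd)$ to materialize the output, matching $\widetilde{O}(\nnz(A)k+kd+k^3)$. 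The step I expect to be the real obstacle is the asymmetric handling of $S$: on the lower-bound side one needs a one-sided, RIP-style ``no contraction'' guarantee holding \emph{simultaneously for every} possible output subset $T$ (Lemma~\ref{lemma:pstable}), because $T$ is chosen after seeing $S$, whereas on the dilation side it suffices to control only the single, data-determined optimal subset $L^*$ (Lemma~\ref{lemma:bestcss}); the secondary technical point is boosting the $\ell_{p,2}$-coreset failure probability to $o(1/s)$ so that the union bound over the $s$ servers goes through.
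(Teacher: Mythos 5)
Your proposal is correct and follows essentially the same route as the paper's own proof: the same chain of inequalities (no-contraction of $S$ over all candidate subsets via Lemma~\ref{lemma:pstable}, the norm relation of Lemma~\ref{lemma:norm} on the $t$-row sketched matrix, exact concatenation of per-server $\ell_{p,2}$ coresets by separability of the cost, the $O(1)$-approximate bi-criteria $k$-CSS$_{p,2}$ of Theorem~\ref{thm:bicriteria_k_css_p2}, and the dilation bound of Lemma~\ref{lemma:bestcss} for the single fixed optimal subset), with the same union bound over the $s$ coreset constructions. Your explicit handling of the rank parameter $k'=\widetilde{O}(k)$ in the coreset guarantee and the optional re-coreseting at the coordinator are minor refinements of, not departures from, the paper's argument.
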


\begin{algorithm}[tb]
   \caption{A one-round protocol for bi-criteria $k$-CSS$_p$ in the column partition model}
   \label{alg:protocol}
\begin{algorithmic}
    \STATE \textbf{Initial State:} \\
    Server $i$ holds matrix $A_i \in \mathbb{R}^{d \times n_i}$, $\forall i \in [s]$.
   \STATE {\bfseries Coordinator:}\\
     Generate a dense $p$-stable sketching matrix $S \in \mathbb{R}^{k \textrm{ poly}(\log (nd)) \times d}$.\\ Send $S$ to all servers.
   \STATE {\bfseries Server $i$:}\\ Compute $SA_i$. \\
   Let the number of samples in the coreset be $t = O(k \cdot \textrm{poly}(\log (nd)))$.
    Construct a coreset of $SA_i$ under the $\ell_{p,2}$ norm by applying a sampling matrix $D_i$ of size $n_i \times t$ and a diagonal reweighting matrix $W_i$ of size $t \times t$. \\
    Let $T_i = D_iW_i$.
    Send $SA_iT_i$ along with $A_iD_i$ to the coordinator.
    \STATE {\bfseries Coordinator:}\\ Column-wise stack $SA_iT_i$ to obtain $SAT = [SA_1T_1, SA_2T_2, \dots, SA_sT_s]$.\\
    Apply $k$-CSS$_{p, 2}$ on $SAT$ to obtain the indices $I$ of the subset of selected columns with size $O(k \cdot \poly(\log k))$. \\
    Since $D_i$'s are sampling matrices, the coordinator can recover the original columns of $A$ by mapping indices $I$ to $A_iD_i$'s. \\
    Denote the final selected subset of columns by $A_I$.
    Send $A_I$ to all servers.
    \STATE {\bfseries Server $i$:}\\
    Solve $\min_{V_i} \|A_IV_i - A_i\|_p$ to obtain the right factor $V_i$. $A_I$ and $V$ will be factors of a rank-$k \cdot \poly(\log k)$ factorization of $A$, where $V$ is the (implicit) column-wise concatenation of the $V_i$.
\end{algorithmic}
\end{algorithm}

The analysis of the protocol is similar to the analysis of our streaming algorithm (Section~\ref{section:streaming_algo}). We give a detailed analysis in the supplementary.

\section{Greedy $k$-CSS$_{p, 2}$}
\label{section:greedy_analysis}

We propose a greedy algorithm for $k$-CSS$_{p, 2}$ (Algorithm~\ref{algorithm:fast_greedy_kCSSp2}).
In each iteration, the algorithm samples a subset $A_C$ of $O(\frac{n}{k}\log(\frac{1}{\delta}))$ columns from the input matrix $A \in \mathbb{R}^{d \times n}$ and picks the column among $A_C$ that reduces the approximation error the most.
We give the first provable guarantees for this algorithm below, the proof of which is in the supplementary. \footnotemark \footnotetext{The analysis is based on the analysis by \cite{abfmrz2016greedycssfrobenius} of greedy $k$-CSS$_2$.}
We also empirically compare this algorithm to the $k$-CSS$_{p, 2}$ algorithm mentioned above, in Section \ref{section:exp}.

\begin{algorithm}
\caption{Greedy $k$-CSS$_{p, 2}$. }
\label{algorithm:fast_greedy_kCSSp2}
\begin{algorithmic}
\STATE \textbf{Input:} The data matrix $A \in \R^{d \times n}$. A desired rank $k \in \mathbb{N}$ and $p \in [1, 2)$. The number of columns to be selected $r \leq n$. Failure probability $\delta \in (0, 1)$.
\STATE \textbf{Output:} A subset of $r$ columns $A_T$. 
\STATE Indices of selected columns $T \leftarrow \{\}$.
\FOR {$i = 1$ to $r$}
    \STATE $C \leftarrow$ Sample $\frac{n}{k}\log(\frac{1}{\delta})$ indices from $\{1, 2, \dots, n\} \setminus T$ uniformly at random.
    \STATE Column index $j^* \gets \argmin_{j \in C} (\min_{V} \|A_{T \cup j}V - A\|_{p, 2}$)
    \STATE $T \leftarrow T \cup j^*$. 
\ENDFOR
\STATE Map indices $T$ to get the selected columns $A_T$.
\end{algorithmic}
\end{algorithm}

\begin{theorem}[Greedy $k$-CSS$_{p, 2}$]
\label{thm:convergence_of_fast_greedy_kCSSp2}
    Let $p \in [1, 2)$.
    Let $A \in \mathbb{R}^{d \times n}$ be the data matrix and $k \in \mathbb{N}$ be the desired rank. Let $A_L$ be the best possible subset of $k$ columns, i.e., $A_L = \argmin_{A_L} \min_{V}\|A_LV - A\|_{p,2}$. Let $\sigma$ be the minimum non-zero singular value of the matrix $B$ of normalized columns of $A_L$, (i.e., the $j$-th column of $B$ is $B_{*j} = (A_L)_{*j}/\|(A_L)_{*j}\|_2$).
    Let $T \subset [n]$ be the subset of output column indices selected by \textbf{Algorithm}~\ref{algorithm:fast_greedy_kCSSp2}, for $\epsilon, \delta \in (0, 1)$, for $|T| = \Omega(\frac{k}{p\sigma^2 \epsilon^2})$, with probability $1 - \delta$,  $$\mathbb{E}[\min_{V} \|A_TV - A\|_{p,2}] \leq \min_V \|A_LV - A\|_{p,2}  + \epsilon \|A\|_{p,2}$$ 
    The overall running time is $O(\frac{n}{p\sigma^2\epsilon^2}\log(\frac{1}{\delta})\cdot (\frac{dk^2}{p^2\sigma^4\epsilon^4} + \frac{ndk}{p\sigma^2\epsilon^2}))$.
\end{theorem}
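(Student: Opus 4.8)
The plan is to adapt the potential-function argument of \cite{abfmrz2016greedycssfrobenius} for greedy Frobenius-norm CSS to the $\ell_{p,2}$ setting, and then to handle the subsampling of candidate columns. Define the residual after selecting column set $T$ as $\Delta(T) = \min_V \|A_TV - A\|_{p,2}$, and think of $\Delta(T)^p = \sum_{j=1}^n \mathrm{dist}_2(A_{*j}, \mathrm{span}(A_T))^p$ as a sum over columns of per-column squared-distance-to-span raised to the $p/2$ power. The first step is a structural lemma: there exists a single column whose addition to the current set $T$ decreases $\Delta(T)^p$ by a multiplicative factor related to $\frac{p\sigma^2}{k}$ times the "excess" $\Delta(T)^p - \Delta(L)^p$. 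To prove this, I would use the columns of the optimal subset $A_L$ as a fixed reference: since $B$ (the normalized version of $A_L$) has minimum nonzero singular value $\sigma$, any vector in the span of $A_L$ can be written as a well-conditioned combination of the $A_{*\ell}$, $\ell \in L$; this lets one lower bound, for at least one $\ell \in L$, the drop in per-column residual obtained by appending $A_{*\ell}$. The role of the exponent $p/2 \le 1$ is that $x \mapsto x^{p/2}$ is concave, so I can pass a per-column linear improvement inequality through the exponent (via concavity / subadditivity of $x^{p/2}$), and the factor of $p$ in the bound $|T| = \Omega(k/(p\sigma^2\epsilon^2))$ comes out of comparing $x^{p/2}$ to $(x-\eta)^{p/2}$ and Taylor-expanding.

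The second step is to convert this one-step multiplicative decrease into a convergence rate. Writing $\phi_i = \Delta(T_i)^p - \Delta(L)^p$ for the excess potential after $i$ greedy steps, the structural lemma gives a recursion of the form $\E[\phi_{i+1} \mid T_i] \le (1 - c\,p\sigma^2/k)\,\phi_i + (\text{small error from subsampling})$. Here the subsampling enters: the algorithm does not examine all $n - |T|$ remaining columns but only a random sample $C$ of size $\frac{n}{k}\log(1/\delta)$. Since the best column to add lies in $L$ (size $\le k$), and more generally the structural lemma only needs \emph{one} good column among the remaining ones, a uniformly random sample of size $\Theta(\tfrac{n}{k}\log(1/\delta))$ from the $\ge n - r \ge n/2$ remaining indices contains at least one "good" column with probability $1 - \delta/r$ per step; a union bound over the $r = |T|$ iterations gives overall success probability $1 - \delta$. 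Conditioned on this event, unrolling the recursion over $|T| = \Omega(k/(p\sigma^2\epsilon^2))$ steps drives $\phi_{|T|}$ down to $O(\epsilon^2)\,\Delta(\emptyset)^p \le O(\epsilon^2)\|A\|_{p,2}^p$, and then taking $p$-th roots and using $(\Delta(L)^p + \epsilon^2\|A\|_{p,2}^p)^{1/p} \le \Delta(L) + \epsilon^{2/p}\|A\|_{p,2} \le \Delta(L) + \epsilon\|A\|_{p,2}$ (adjusting constants in $\epsilon$) yields the claimed additive bound in expectation.

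The third step is the running time. Each of the $|T| = O(\tfrac{k}{p\sigma^2\epsilon^2})$ iterations samples $|C| = O(\tfrac{n}{k}\log\tfrac1\delta)$ candidates, and for each candidate it must evaluate $\min_V \|A_{T\cup j}V - A\|_{p,2}$, i.e. project all $n$ columns of $A$ onto the span of a set of size at most $|T|$. A single such evaluation costs roughly $O(nd|T|)$ for the projections (reusing an orthonormal basis for $\mathrm{span}(A_T)$ that is maintained incrementally, so appending one column costs $O(d|T|)$), which over $|C|$ candidates and $|T|$ iterations gives $O(|T|\cdot|C|\cdot nd|T|) = O(\tfrac{n}{p\sigma^2\epsilon^2}\log\tfrac1\delta \cdot \tfrac{ndk}{p\sigma^2\epsilon^2})$; the other term $\tfrac{dk^2}{p^2\sigma^4\epsilon^4}$ in the stated bound is the cost of maintaining/orthonormalizing the basis of the growing selected set across all candidates and iterations. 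I would just bookkeep these two contributions and take their sum.

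The main obstacle I expect is the structural lemma (step one): specifically, pushing the per-column improvement bound through the concave map $x\mapsto x^{p/2}$ while keeping the dependence on $\sigma$ and $p$ tight. In the Frobenius ($p=2$) case this is a clean inner-product / Pythagorean computation, but for $p<2$ one must be careful that the column that is "best in the $\ell_2^2$ sense" is also (up to constants) good after exponentiation, and that summing the concave per-column gains does not lose more than a constant factor — this is where the $1/p$ blow-up in the sample complexity is paid, and getting the concavity inequality to interact correctly with the averaging over columns is the delicate part. The subsampling union bound and the running-time accounting are comparatively routine once the per-step decrease is in hand.
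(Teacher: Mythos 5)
Your Step 1 structural lemma is stronger than what actually holds, and this breaks the whole analysis. You claim a per-step decrease of the form $\Delta(T_{t+1})^p \le \Delta(T_t)^p - \frac{cp\sigma^2}{k}\bigl(\Delta(T_t)^p - \Delta(L)^p\bigr)$, i.e., a \emph{linear} recursion $\phi_{t+1} \le (1 - cp\sigma^2/k)\,\phi_t$ for the excess $\phi_t = \Delta(T_t)^p - \Delta(L)^p$. If such a recursion held, unrolling would give exponential decay and a sample complexity of $O\bigl(\frac{k}{p\sigma^2}\log\frac1\eps\bigr)$, which is far better than the theorem's $\Omega\bigl(\frac{k}{p\sigma^2\eps^2}\bigr)$ and is a strong signal that the linear recursion is too good to be true. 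The paper's actual improvement bound (Lemma~\ref{lemma:matrix_greedy_improvement}) is
\[
\Phi_A(T\cup\{i\}) - \Phi_A(T) \;\ge\; \frac{p\sigma^2}{16k}\cdot\frac{\bigl(\Phi_A(S) - \Phi_A(T)\bigr)^{2/p+1}}{\Phi_A(S)^{2/p}},
\]
which in your notation reads $\phi_t - \phi_{t+1} \ge \frac{p\sigma^2}{16k}\,\phi_t\bigl(\phi_t/\Phi_A(S)\bigr)^{2/p}$: the per-step gain is \emph{superlinear} in the gap $\phi_t$, and is damped by the factor $(\phi_t/\Phi_A(S))^{2/p}\le 1$, which tends to zero as $T$ grows. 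This is the same phenomenon as in the Frobenius-norm analysis of \cite{abfmrz2016greedycssfrobenius} (the $p=2$ case already has a quadratic, not linear, per-step gain), so there is no reason to expect a linear recursion. Consequently, your unrolling in Step 2 does not follow from anything provable.

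Because the recursion is nonlinear, the convergence analysis cannot be a simple geometric-series unroll. The paper instead runs a phase/halving argument: it fixes dyadic levels $\Delta_i = \Delta_{i-1}/2$ for the gap, bounds the number of steps needed to descend from level $\Delta_i$ to $\Delta_{i+1}$ by $O\bigl(\frac{kF^{2/p}}{\Delta_{i+1}^{2/p}\,p\sigma^2}\bigr)$ (here $F=\Phi_A(S)$), and sums this geometric series over levels to obtain $O\bigl(\frac{k}{p\sigma^2\eps^{2/p}}\bigr)$ steps to reach $\phi_t \le \eps\,\Phi_A(S)$; a final substitution $\eps\mapsto\eps^p$ when converting from the $p$-th power utility back to $\|\cdot\|_{p,2}$ yields the stated $|T| = \Omega(k/(p\sigma^2\eps^2))$. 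Your proposal is missing this entire mechanism, and the ``small error from subsampling'' in your recursion is also handled differently in the paper (Lemma~\ref{lemma:expected_increase_in_utility} folds the sampling failure probability multiplicatively into the \emph{expected} per-step gain rather than via a per-step union bound). The running-time accounting in your Step 3 is fine at the level of bookkeeping, but it rests on a step count that your Steps 1--2 cannot deliver.
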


\section{Experiments}
\label{section:exp}
The source code is available at:
\href{https://github.com/11hifish/robust\_css}{\color{RubineRed} https://github.com/11hifish/robust\_css}.
\subsection{Streaming and Distributed $k$-CSS$_1$}
We implement both of our streaming and distributed $k$-CSS$_1$ algorithms, with subroutines $\textbf{regular}$ $k$-CSS$_{1,2}$ (Section \ref{subsection:lp2_css_algorithm}) and \textbf{greedy} $k$-CSS$_{1,2}$ (Section \ref{section:greedy_analysis}). Given a target rank $k$, we set the number of output columns to be $k$. We compare against a commonly used baseline for low-rank approximation~\cite{swz19avgcasecssl1, cgklpw2017lplra}, \textbf{SVD} (rank-$k$ singular value decomposition), and a \textbf{uniform} random baseline. In the streaming setting, the \textbf{uniform} baseline first collects $k$ columns from the data stream and on each of the following input columns, it decides whether to keep or discard the new column with equal probability. If it keeps the new columns, it will pick one existing column to replace uniformly at random. In the distributed setting, the \textbf{uniform} baseline simply uniformly at random selects $k$ columns.

We apply the proposed algorithms on one synthetic data to show when SVD (and hence all the existing $k$-CSS$_2$ algorithms) fails to find a good subset of columns in the $\ell_1$ norm. We also apply the proposed algorithms to two real-world applications, where
$k$-CSS$_2$ was previously used to analyze the most representative set of words among a text corpus or the most informative genes from genetic sequences, e.g.~\cite{mp2009curmatrixdecomp, bmd2008feat_sel_pca}. 

\textbf{Datasets.} 1) \texttt{Synthetic} has a matrix $A$ of size $(k + n) \times (k + n)$ with rank $k$ and a fixed number $n$, where the top left $k \times k$ submatrix is the identity matrix multiplied by $n^{\frac{3}{2}}$, and the bottom right $n \times n$ submatrix has all $1$'s. The optimal $k$ columns consist of one of the last $n$ columns along with $k - 1$ of the first $k$ columns, incurring an error of $n^{\frac{3}{2}}$ in the $\ell_1$ norm. SVD, however, will not cover any of the last $n$ columns, and thus will get an $\ell_1$ error of $n^2$. We set $n = 1000$ in the experiments. 2) \texttt{TechTC}\footnote{\url{http://gabrilovich.com/resources/data/techtc/techtc300/techtc300.html}} contains $139$ documents processed in a bag-of-words representation with a dictionary of $18446$ words, which naturally results in a sparse matrix.
3) $\texttt{Gene}$\footnote{\url{https://archive.ics.uci.edu/ml/datasets/gene+expression+cancer+RNA-Seq}} contains $5000$ different RNA-Seq gene expressions from $400$ cancer patients, which gives a dense data matrix with $\geq 85\%$ non-zero entries.

\textbf{Setup.} For an input data matrix $A \in \mathbb{R}^{d \times n}$, we set the number of rows of our 1-stable (Cauchy) sketching matrix to be $0.5d$ in both settings. In the streaming setting, we set the batch size to be $5k$ and maintain a list of coresets of size $2k$.
In the distributed setting, we set the number of servers to be $5$ and each server sends the coordinator a coreset of size $2k$. 
For each of our experimetnts, we conduct $10$ random runs and report the mean $\ell_1$ error ratio $\frac{\min_V \|A_IV - A\|_1}{\|A\|_1}$ and the mean time (in seconds) to obtain $A_I$ along with one standard deviation, where $A_I$ is the output set of columns. Note that the input columns of the data matrix are randomly permuted in the streaming setting for each run.

\textbf{Implementation Details.} Our algorithms are implemented with Python Ray\footnote{\url{https://docs.ray.io/en/master/}}, a high-level framework for parallel and distributed computing, and are thus highly scalable. 
All the experiments are conducted on AWS EC2 c5a.8xlarge machines with 32 vCPUs and 64GB EBS memory.

\textbf{Results. } The results for the streaming setting and the distributed setting are presented in Figure~\ref{fig:exp_streaming} and Figure~\ref{fig:exp_distributed} respectively. We note that
\textbf{SVD} works in neither streaming nor distributed settings and thus the running time of \textbf{SVD} is not directly comparable to the other algorithms. 
The performance of \textbf{SVD} and \textbf{uniform} is highly dependent on the actual data and does not have worst case guarantees, while the performance of our algorithm is stable and gives relatively low $\ell_1$ error ratio across different datasets. \textbf{greedy} $k$-CSS$_{1,2}$ gives lower error ratio compared to \textbf{regular} $k$-CSS$_{1,2}$ as one would expect, but the time it takes significantly increases as the number of output columns increases, especially in the distributed setting, while \textbf{regular} $k$-CSS$_{1,2}$ takes comparable time to the \textbf{uniform} random baseline in most settings and is thus more scalable.

\begin{figure}[htp]
\subfloat{%
  \includegraphics[width=\columnwidth]{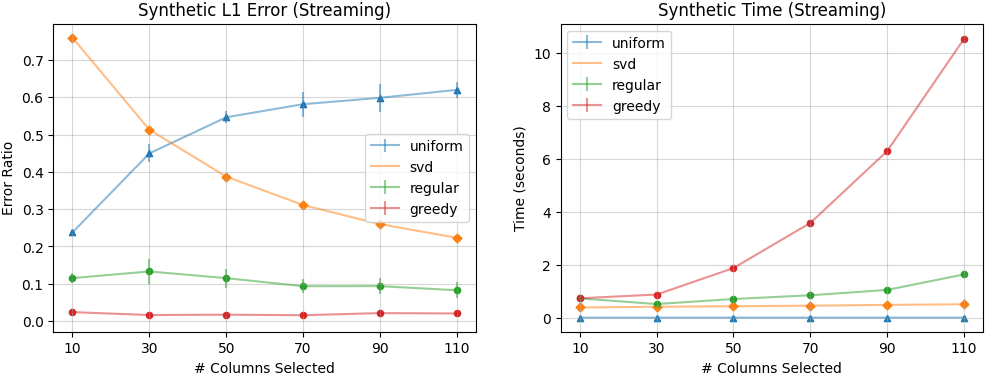}%
}\\
\subfloat{%
  \includegraphics[width=\columnwidth]{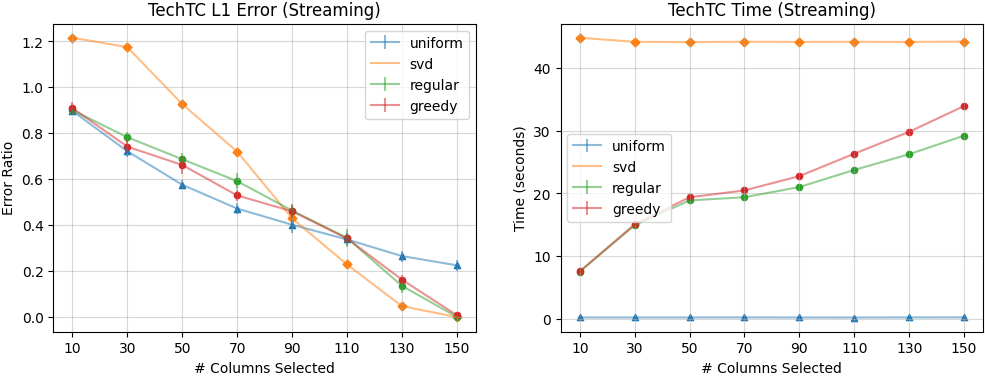}%
}\\
\subfloat{%
  \includegraphics[width=\columnwidth]{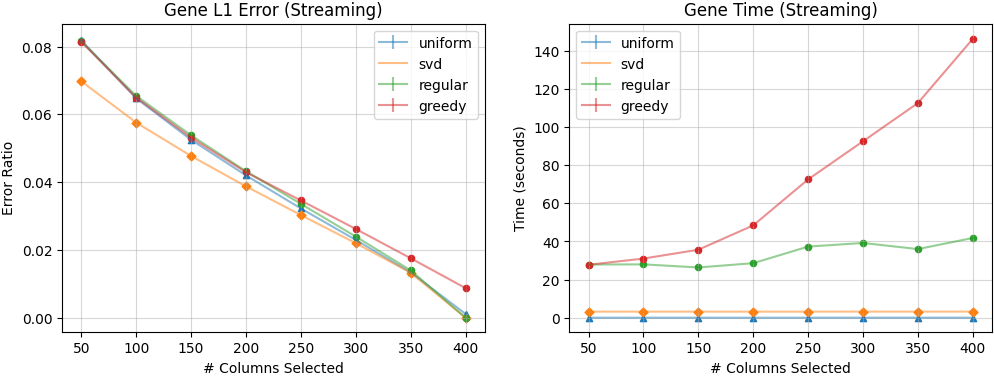}%
}
\caption{Streaming results.}
\label{fig:exp_streaming}
\end{figure}

\begin{figure}[htp]
\subfloat{%
  \includegraphics[width=\columnwidth]{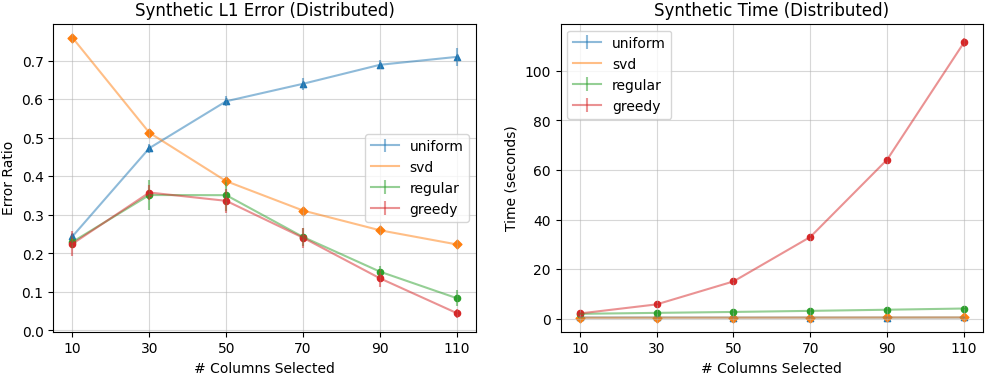}%
}\\
\subfloat{%
  \includegraphics[width=\columnwidth]{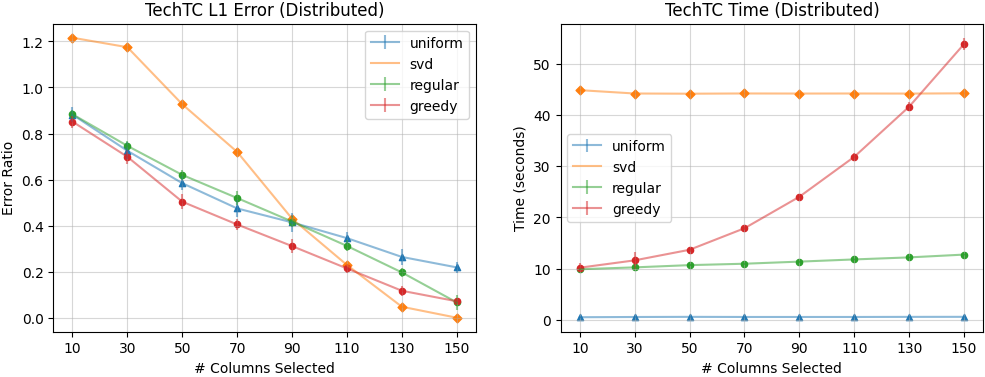}%
}\\
\subfloat{%
  \includegraphics[width=\columnwidth]{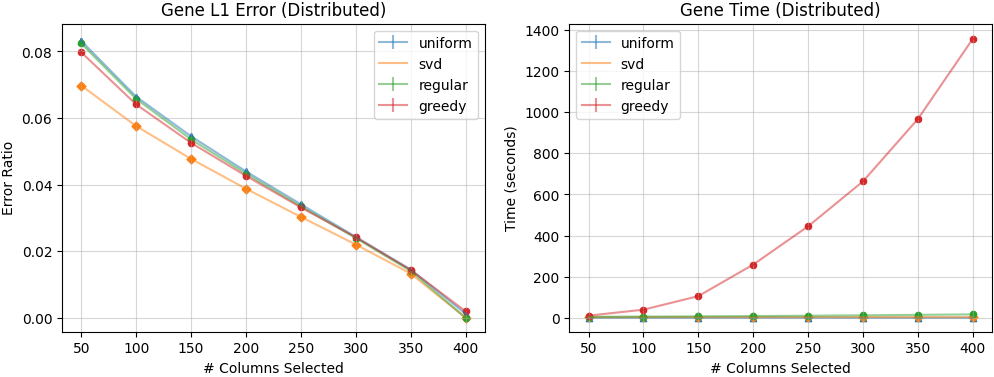}%
}
\caption{Distributed results.}
\label{fig:exp_distributed}
\end{figure}

\subsection{Robustness of the $\ell_1$ norm}
We further highlight the robustness of the $\ell_1$ norm loss to non-Gaussian noise compared to the Frobenius norm loss for $k$-CSS on an image classification task.
$k$-CSS was previously used as an active learning algorithm to select the most representative samples to acquire training labels in supervised learning, when acquiring such labels is  expensive~\cite{sjjrzy2011css_active_learning, kaushal2018learning_css}. 

We apply one Frobenius norm $k$-CSS$_2$ algorithm~\cite{bmd2008feat_sel_pca}, our regular $k$-CSS$_{1,2}$ algorithm (Section~\ref{subsection:lp2_css_algorithm}) and a random baseline to select a subset of $k$ training samples to train a linear regression model to classify images from the \texttt{COIL20}\footnote{\url{http://www.cs.columbia.edu/CAVE/software/softlib/coil-20.php}} dataset. \texttt{COIL20} contains a total of $1440$ images.
Each image has $32 \times 32$ pixels with $256$ gray levels per pixel. We randomly split the dataset into 80\% training set (1152 samples) and 20\% testing set (288 samples).
We then mask $40\%$ of the pixels of each training sample with a uniformly random noise value in $[0, 256)$. We conduct 20 random runs and report the average Mean Squared Error (MSE) and one standard deviation for each algorithm on the testing set. 

The results are summarized in Table~\ref{tab:exp_robustness}. $k$-CSS$_{1,2}$ gives a slightly lower average MSE score and a lower variance, compared to $k$-CSS$_2$ and the random baseline on noisy training data. This suggests that the $\ell_1$ norm loss is more robust compared to the Frobenius norm loss for $k$-CSS, which agrees with  previous observations from other algorithms and applications.

\begin{table}[]
    \centering
    \scalebox{0.75}{
    \begin{tabular}{|c|c|c|c|}
    \hline
     \backslashbox{$k$ samples}{Algorithm} & Random & $k$-CSS$_2$ & $k$-CSS$_{1,2}$ \\
     \hline
     200 & $16.29\pm 2.94$ & $17.64 \pm 2.97$ & $\mathbf{15.99 \pm 2.55}$\\
     \hline
     300 & $15.94 \pm 2.17$ & $17.92 \pm 3.59$ & $\mathbf{15.41 \pm 2.44}$ \\
     \hline
     400 & $14.27 \pm 1.68$ & $16.49 \pm 4.04$ & $\mathbf{13.84 \pm 1.55}$ \\
     \hline
     500 & $14.03 \pm 1.33$ & $14.59 \pm 1.85$ & $\mathbf{13.53 \pm 1.16}$ \\
     \hline
     600 & $14.45 \pm 1.91$ & $15.56 \pm 3.19$ & $\mathbf{13.68 \pm 1.40}$\\
     \hline
    \end{tabular}}
    \caption{Image classification results: average MSE and one std.}
    \label{tab:exp_robustness}
\end{table}

\section{Conclusion}
In this work, we give the first one-pass streaming algorithm for $k$-CSS$_p$ ($1 \leq p < 2$) in the column-update model and the first one-round distributed protocol in the column-partition model. Both of our algorithms achieve $\widetilde{O}(k^{1/p - 1/2})$-approximation to the optimal column subset. The streaming algorithm uses nearly optimal space complexity of $\widetilde{O}(kd)$, and the distributed protocol uses nearly optimal $\widetilde{O}(sdk)$ communication cost. We introduce novel analysis techniques for $k$-CSS. To achieve a good approximation factor, we use dense $p$-stable sketching and work with the $\ell_{p, 2}$ norm, which enables us to use an efficient construction of strong coresets and an $O(1)$-approximation bi-criteria $k$-CSS$_{p, 2}$ algorithm as a subroutine of our algorithms. We further propose a greedy alternative for $k$-CSS$_{p, 2}$ and show the first additive error upper bound. Our experimental results confirm that our algorithms give stable low $\ell_1$ error in both distributed and streaming settings. We further demonstrate the robustness of the $\ell_1$ norm loss for $k$-CSS. 

\section*{Acknowledgements}
D. Woodruff would like to thank partial support from NSF grant No. CCF-1815840, Office of Naval Research grant N00014-18-1-256, and a Simons Investigator Award. A. Mahankali would like to thank partial support from the SURF award from CMU's Undergraduate Research Office.

\bibliographystyle{unsrt}
\bibliography{main}

\begin{thebibliography}{10}

\bibitem{cw13_lra_nnz_time}
Kenneth~L. Clarkson and David~P. Woodruff.
\newblock Low rank approximation and regression in input sparsity time.
\newblock In Dan Boneh, Tim Roughgarden, and Joan Feigenbaum, editors, {\em
  Symposium on Theory of Computing Conference, STOC'13, Palo Alto, CA, USA,
  June 1-4, 2013}, pages 81--90. {ACM}, 2013.

\bibitem{w14_sketching_as_a_tool}
David~P. Woodruff.
\newblock Sketching as a tool for numerical linear algebra.
\newblock {\em Found. Trends Theor. Comput. Sci.}, 10(1-2):1--157, 2014.

\bibitem{guruswami2012optimal}
Venkatesan Guruswami and Ali~Kemal Sinop.
\newblock Optimal column-based low-rank matrix reconstruction.
\newblock In {\em Proceedings of the twenty-third annual ACM-SIAM symposium on
  Discrete Algorithms}, pages 1207--1214. SIAM, 2012.

\bibitem{boutsidis2014near}
Christos Boutsidis, Petros Drineas, and Malik Magdon-Ismail.
\newblock Near-optimal column-based matrix reconstruction.
\newblock {\em SIAM Journal on Computing}, 43(2):687--717, 2014.

\bibitem{boutsidis2017optimal}
Christos Boutsidis and David~P Woodruff.
\newblock Optimal cur matrix decompositions.
\newblock {\em SIAM Journal on Computing}, 46(2):543--589, 2017.

\bibitem{boutisidis18}
Christos Boutsidis, Michael~W. Mahoney, and Petros Drineas.
\newblock An improved approximation algorithm for the column subset selection
  problem.
\newblock {\em CoRR}, abs/0812.4293, 2008.

\bibitem{operator_norm_lra}
Nathan Halko, Per-Gunnar Martinsson, and Joel~A Tropp.
\newblock Finding structure with randomness: Probabilistic algorithms for
  constructing approximate matrix decompositions.
\newblock {\em SIAM review}, 53(2):217--288, 2011.

\bibitem{sketching_as_a_tool}
David~P. Woodruff.
\newblock Sketching as a tool for numerical linear algebra.
\newblock {\em Foundations and Trends in Theoretical Computer Science},
  10(1-2):1--157, 2014.

\bibitem{swz2017lral1norm}
Zhao Song, David~P. Woodruff, and Peilin Zhong.
\newblock Low rank approximation with entrywise l1-norm error.
\newblock In {\em Proceedings of the 49th Annual ACM SIGACT Symposium on Theory
  of Computing}, STOC 2017, page 688–701, New York, NY, USA, 2017.
  Association for Computing Machinery.

\bibitem{cgklpw2017lplra}
Flavio Chierichetti, Screenivas Gollapudi, Ravi Kumar, Silvio Lattanzi, Rina
  Panigrahy, and David~P. Woodruff.
\newblock {Algorithms for $\ell_p$ Low-Rank Approximation}.
\newblock 2017.

\bibitem{dwzzr2019opt_anal_css_lplra}
Chen Dan, Hong Wang, Hongyang Zhang, Yuchen Zhou, and Pradeep~K Ravikumar.
\newblock Optimal analysis of subset-selection based l\_p low-rank
  approximation.
\newblock In H.~Wallach, H.~Larochelle, A.~Beygelzimer, F.~d\textquotesingle
  Alch\'{e}-Buc, E.~Fox, and R.~Garnett, editors, {\em Advances in Neural
  Information Processing Systems 32}, pages 2541--2552. Curran Associates,
  Inc., 2019.

\bibitem{bbbklw19_ptas_for_lp_lra}
Frank Ban, Vijay Bhattiprolu, Karl Bringmann, Pavel Kolev, Euiwoong Lee, and
  David~P. Woodruff.
\newblock A {PTAS} for $\ell_p$-low rank approximation.
\newblock In Timothy~M. Chan, editor, {\em Proceedings of the Thirtieth Annual
  {ACM-SIAM} Symposium on Discrete Algorithms, {SODA} 2019, San Diego,
  California, USA, January 6-9, 2019}, pages 747--766. {SIAM}, 2019.

\bibitem{mw2019opt_l1_css_lra}
Arvind~V Mahankali and David~P Woodruff.
\newblock Optimal $\ell_1$ column subset selection and a fast ptas for low rank
  approximation.
\newblock In {\em Proceedings of the 2021 ACM-SIAM Symposium on Discrete
  Algorithms (SODA)}, pages 560--578. SIAM, 2021.

\bibitem{kk05_sfm_problem}
Qifa Ke and Takeo Kanade.
\newblock Robust $l_1$ norm factorization in the presence of outliers and
  missing data by alternative convex programming.
\newblock In {\em 2005 {IEEE} Computer Society Conference on Computer Vision
  and Pattern Recognition {(CVPR} 2005), 20-26 June 2005, San Diego, CA,
  {USA}}, pages 739--746. {IEEE} Computer Society, 2005.

\bibitem{yzd12_image_denoising}
Linbin Yu, Miao Zhang, and Chris H.~Q. Ding.
\newblock An efficient algorithm for l1-norm principal component analysis.
\newblock In {\em 2012 {IEEE} International Conference on Acoustics, Speech and
  Signal Processing, {ICASSP} 2012, Kyoto, Japan, March 25-30, 2012}, pages
  1377--1380. {IEEE}, 2012.

\bibitem{dk03_pass_efficient}
Petros Drineas and Ravi Kannan.
\newblock Pass efficient algorithms for approximating large matrices.
\newblock In {\em Proceedings of the Fourteenth Annual {ACM-SIAM} Symposium on
  Discrete Algorithms, January 12-14, 2003, Baltimore, Maryland, {USA}}, pages
  223--232. {ACM/SIAM}, 2003.

\bibitem{cw09_nla_in_streaming_model}
Kenneth~L. Clarkson and David~P. Woodruff.
\newblock Numerical linear algebra in the streaming model.
\newblock In Michael Mitzenmacher, editor, {\em Proceedings of the 41st Annual
  {ACM} Symposium on Theory of Computing, {STOC} 2009, Bethesda, MD, USA, May
  31 - June 2, 2009}, pages 205--214. {ACM}, 2009.

\bibitem{liberty13_row_update_PCA}
Edo Liberty.
\newblock Simple and deterministic matrix sketching.
\newblock In Inderjit~S. Dhillon, Yehuda Koren, Rayid Ghani, Ted~E. Senator,
  Paul Bradley, Rajesh Parekh, Jingrui He, Robert~L. Grossman, and Ramasamy
  Uthurusamy, editors, {\em The 19th {ACM} {SIGKDD} International Conference on
  Knowledge Discovery and Data Mining, {KDD} 2013, Chicago, IL, USA, August
  11-14, 2013}, pages 581--588. {ACM}, 2013.

\bibitem{gp14_row_update_PCA}
Mina Ghashami and Jeff~M. Phillips.
\newblock Relative errors for deterministic low-rank matrix approximations.
\newblock In Chandra Chekuri, editor, {\em Proceedings of the Twenty-Fifth
  Annual {ACM-SIAM} Symposium on Discrete Algorithms, {SODA} 2014, Portland,
  Oregon, USA, January 5-7, 2014}, pages 707--717. {SIAM}.

\bibitem{w14_row_update_lower_bound_frobenius}
David~P. Woodruff.
\newblock Low rank approximation lower bounds in row-update streams.
\newblock In Zoubin Ghahramani, Max Welling, Corinna Cortes, Neil~D. Lawrence,
  and Kilian~Q. Weinberger, editors, {\em Advances in Neural Information
  Processing Systems 27: Annual Conference on Neural Information Processing
  Systems 2014, December 8-13 2014, Montreal, Quebec, Canada}, pages
  1781--1789, 2014.

\bibitem{abfmrz2016greedycssfrobenius}
Jason Altschuler, Aditya Bhaskara, Gang Fu, Vahab Mirrokni, Afshin
  Rostamizadeh, and Morteza Zadimoghaddam.
\newblock Greedy column subset selection: New bounds and distributed
  algorithms.
\newblock In {\em Proceedings of the 33rd International Conference on
  International Conference on Machine Learning - Volume 48}, ICML’16, page
  2539–2548. JMLR.org, 2016.

\bibitem{BWZ16}
Christos Boutsidis, David~P. Woodruff, and Peilin Zhong.
\newblock Optimal principal component analysis in distributed and streaming
  models.
\newblock In Daniel Wichs and Yishay Mansour, editors, {\em Proceedings of the
  48th Annual {ACM} {SIGACT} Symposium on Theory of Computing, {STOC} 2016,
  Cambridge, MA, USA, June 18-21, 2016}, pages 236--249. {ACM}, 2016.

\bibitem{ww19lpobliviousemb}
Ruosong Wang and David~P. Woodruff.
\newblock Tight bounds for $\ell_p$ oblivious subspace embeddings.
\newblock In {\em Proceedings of the Thirtieth Annual {ACM-SIAM} Symposium on
  Discrete Algorithms, {SODA} 2019, San Diego, California, USA, January 6-9,
  2019}, pages 1825--1843, 2019.

\bibitem{swz2019zeroonelawcss}
Zhao Song, David~P. Woodruff, and Peilin Zhong.
\newblock Towards a zero-one law for column subset selection.
\newblock In {\em Advances in Neural Information Processing Systems 32: Annual
  Conference on Neural Information Processing Systems 2019, NeurIPS 2019, 8-14
  December 2019, Vancouver, BC, Canada}, pages 6120--6131, 2019.

\bibitem{l1nphard}
Nicolas Gillis and Stephen~A. Vavasis.
\newblock On the complexity of robust {PCA} and $\ell_1$-norm low-rank matrix
  approximation.
\newblock {\em CoRR}, abs/1509.09236, 2015.

\bibitem{sw2018strongcoreset}
Christian Sohler and David~P. Woodruff.
\newblock Strong coresets for k-median and subspace approximation: Goodbye
  dimension.
\newblock In Mikkel Thorup, editor, {\em 59th {IEEE} Annual Symposium on
  Foundations of Computer Science, {FOCS} 2018, Paris, France, October 7-9,
  2018}, pages 802--813. {IEEE} Computer Society, 2018.

\bibitem{mcgregor14_graph_stream_survey}
Andrew McGregor.
\newblock Graph stream algorithms: A survey.
\newblock {\em SIGMOD Rec.}, 43(1):9–20, May 2014.

\bibitem{fegk13_distributed_css}
Ahmed~K. Farahat, Ahmed Elgohary, Ali Ghodsi, and Mohamed~S. Kamel.
\newblock Distributed column subset selection on mapreduce.
\newblock In Hui Xiong, George Karypis, Bhavani~M. Thuraisingham, Diane~J.
  Cook, and Xindong Wu, editors, {\em 2013 {IEEE} 13th International Conference
  on Data Mining, Dallas, TX, USA, December 7-10, 2013}, pages 171--180. {IEEE}
  Computer Society, 2013.

\bibitem{lbkw14_distributed_pca}
Yingyu Liang, Maria{-}Florina Balcan, Vandana Kanchanapally, and David~P.
  Woodruff.
\newblock Improved distributed principal component analysis.
\newblock In Zoubin Ghahramani, Max Welling, Corinna Cortes, Neil~D. Lawrence,
  and Kilian~Q. Weinberger, editors, {\em Advances in Neural Information
  Processing Systems 27: Annual Conference on Neural Information Processing
  Systems 2014, December 8-13 2014, Montreal, Quebec, Canada}, pages
  3113--3121, 2014.

\bibitem{BLSW015}
Maria{-}Florina Balcan, Yingyu Liang, Le~Song, David~P. Woodruff, and Bo~Xie.
\newblock Distributed kernel principal component analysis.
\newblock {\em CoRR}, abs/1503.06858, 2015.

\bibitem{BSW016}
Maria{-}Florina Balcan, Yingyu Liang, Le~Song, David~P. Woodruff, and Bo~Xie.
\newblock Communication efficient distributed kernel principal component
  analysis.
\newblock In Balaji Krishnapuram, Mohak Shah, Alexander~J. Smola, Charu~C.
  Aggarwal, Dou Shen, and Rajeev Rastogi, editors, {\em Proceedings of the 22nd
  {ACM} {SIGKDD} International Conference on Knowledge Discovery and Data
  Mining, San Francisco, CA, USA, August 13-17, 2016}, pages 725--734. {ACM},
  2016.

\bibitem{cp2015lewisweights}
Michael~B. Cohen and Richard Peng.
\newblock Lp row sampling by lewis weights.
\newblock In {\em Proceedings of the Forty-Seventh Annual ACM Symposium on
  Theory of Computing}, STOC ’15, page 183–192, New York, NY, USA, 2015.
  Association for Computing Machinery.

\bibitem{cw2015subsapproxl2}
Kenneth~L. Clarkson and David~P. Woodruff.
\newblock Input sparsity and hardness for robust subspace approximation.
\newblock In {\em Proceedings of the 2015 IEEE 56th Annual Symposium on
  Foundations of Computer Science (FOCS)}, FOCS ’15, page 310–329, USA,
  2015. IEEE Computer Society.

\bibitem{cms76_simulating_pstable}
J.~M. Chambers, C.~L. Mallows, and B.~W. Stuck.
\newblock A method for simulating stable random variables.
\newblock {\em Journal of the American Statistical Association},
  71(354):340--344, 1976.

\bibitem{blm1989zenotope}
J.~Bourgain, J.~Lindenstrauss, and V.~Milman.
\newblock {Approximation of zonoids by zonotopes}.
\newblock {\em Acta Mathematica}, 162(none):73 -- 141, 1989.

\bibitem{swz19avgcasecssl1}
Zhao Song, David~P. Woodruff, and Peilin Zhong.
\newblock Average case column subset selection for entrywise $\ell_1$-norm
  loss.
\newblock In {\em Advances in Neural Information Processing Systems 32: Annual
  Conference on Neural Information Processing Systems 2019, NeurIPS 2019, 8-14
  December 2019, Vancouver, BC, Canada}, pages 10111--10121, 2019.

\bibitem{mp2009curmatrixdecomp}
Michael Mahoney and Petros Drineas.
\newblock Cur matrix decompositions for improved data analysis.
\newblock {\em Proceedings of the National Academy of Sciences of the United
  States of America}, 106:697--702, 02 2009.

\bibitem{bmd2008feat_sel_pca}
Christos Boutsidis, Michael~W. Mahoney, and Petros Drineas.
\newblock Unsupervised feature selection for principal components analysis.
\newblock In {\em Proceedings of the 14th ACM SIGKDD International Conference
  on Knowledge Discovery and Data Mining}, KDD '08, page 61–69, New York, NY,
  USA, 2008. Association for Computing Machinery.

\bibitem{sjjrzy2011css_active_learning}
Jianfeng Shen, Bin Ju, Tao Jiang, Jingjing Ren, Miao Zheng, Chengwei Yao, and
  Lanjuan Li.
\newblock Column subset selection for active learning in image classification.
\newblock {\em Neurocomputing}, 74:3785--3792, 11 2011.

\bibitem{kaushal2018learning_css}
Vishal Kaushal, Anurag Sahoo, Khoshrav Doctor, Narasimha Raju, Suyash Shetty,
  Pankaj Singh, Rishabh Iyer, and Ganesh Ramakrishnan.
\newblock Learning from less data: Diversified subset selection and active
  learning in image classification tasks, 2018.

\bibitem{STOC-2013-MengM}
Xiangrui Meng and Michael~W. Mahoney.
\newblock {Low-distortion subspace embeddings in input-sparsity time and
  applications to robust linear regression}.
\newblock In {\em {Proceedings of the 45th Annual ACM Symposium on Theory of
  Computing}}, pages 91--100. {ACM}, 2013.

\bibitem{cww2019tukey_regression}
Kenneth Clarkson, Ruosong Wang, and David Woodruff.
\newblock Dimensionality reduction for tukey regression.
\newblock In Kamalika Chaudhuri and Ruslan Salakhutdinov, editors, {\em
  Proceedings of the 36th International Conference on Machine Learning},
  volume~97 of {\em Proceedings of Machine Learning Research}, pages
  1262--1271. PMLR, 09--15 Jun 2019.

\bibitem{ddhkm2008lp_sampling}
Anirban Dasgupta, Petros Drineas, Boulos Harb, Ravi Kumar, and Michael~W.
  Mahoney.
\newblock Sampling algorithms and coresets for $\ell_p$ regression.
\newblock In {\em Proceedings of the Nineteenth Annual ACM-SIAM Symposium on
  Discrete Algorithms}, SODA '08, page 932–941, USA, 2008. Society for
  Industrial and Applied Mathematics.

\bibitem{ddhkm09_well_conditioned_bases}
Anirban Dasgupta, Petros Drineas, Boulos Harb, Ravi Kumar, and Michael~W.
  Mahoney.
\newblock Sampling algorithms and coresets for $\ell_p$ regression.
\newblock {\em {SIAM} J. Comput.}, 38(5):2060--2078, 2009.

\bibitem{nn2013OSNAP}
Jelani Nelson and Huy~L. Nguyen.
\newblock Osnap: Faster numerical linear algebra algorithms via sparser
  subspace embeddings.
\newblock In {\em Proceedings of the 2013 IEEE 54th Annual Symposium on
  Foundations of Computer Science}, FOCS ’13, page 117–126, USA, 2013. IEEE
  Computer Society.

\bibitem{pvz_dvoretzky_lp_norm}
Grigoris Paouris, Petros Valettas, and Joel Zinn.
\newblock Random version of dvoretzky’s theorem in $\ell_p^n$.
\newblock {\em Stochastic Processes and their Applications}, 127(10):3187 --
  3227, 2017.

\bibitem{ww2019lpobliviousemb}
Ruosong Wang and David~P. Woodruff.
\newblock Tight bounds for lp oblivious subspace embeddings.
\newblock In {\em Proceedings of the Thirtieth Annual ACM-SIAM Symposium on
  Discrete Algorithms}, SODA '19, page 1825–1843, USA, 2019. Society for
  Industrial and Applied Mathematics.

\bibitem{ycrm2018wsgd}
Jiyan Yang, Yin-Lam Chow, Christopher R{{\'e}}, and Michael~W. Mahoney.
\newblock Weighted sgd for $\ell_p$ regression with randomized preconditioning.
\newblock {\em Journal of Machine Learning Research}, 18(211):1--43, 2018.

\bibitem{mbkvk15_lazier_than_lazy_greedy}
Baharan Mirzasoleiman, Ashwinkumar Badanidiyuru, Amin Karbasi, Jan Vondr\'{a}k,
  and Andreas Krause.
\newblock Lazier than lazy greedy.
\newblock In {\em Proceedings of the Twenty-Ninth AAAI Conference on Artificial
  Intelligence}, AAAI'15, page 1812–1818. AAAI Press, 2015.

\end{thebibliography}

\newpage
\begin{appendices}

\section{Proof of Preliminaries (Section 3)}

\subsection{Norms (Lemma 1)}
\begin{customlemma}{1}
\label{lemma:norm}
For a matrix $A \in \mathbb{R}^{d \times n}$ and $p \in [1, 2)$,
$\|A\|_{p,2} \leq \|A\|_p \leq d^{\frac{1}{p}-\frac{1}{2}} \|A\|_{p,2}$.
\end{customlemma}

\begin{proof}
Let $x \in\mathbb{R}^{d}$.
For $0 < p < r$,
$$\|x\|_r\leq \|x\|_p\leq d^{\frac{1}{p}-\frac{1}{r}}\|x\|_r$$
Let $r=2$. Then we have
$$\|x\|_2\leq \|x\|_p\leq d^{\frac{1}{p}-\frac{1}{2}}\|x\|_2$$
Note that $\|A\|_{p,2}=\left(\sum_j \|A_{*j}\|_2^p \right)^\frac{1}{p}$ and $\|A\|_p=\left(\sum_j \|A_{*j}\|_p^p \right)^\frac{1}{p}$.

Therefore,
$$\|A\|_{p,2}=\left(\sum_j \|A_{*j}\|_2^p \right)^\frac{1}{p}\leq \left(\sum_j \|A_{*j}\|_p^p \right)^\frac{1}{p}=\|A\|_p$$
and
$$\|A\|_p=\left(\sum_j \|A_{*j}\|_p^p \right)^\frac{1}{p}\leq d^{\frac{1}{p}-\frac{1}{2}} \left(\sum_j \|A_{*j}\|_2^p \right)^\frac{1}{p}=d^{\frac{1}{p}-\frac{1}{2}}\|A\|_{p,2}
$$
\end{proof}

\subsection{Sketched Error Lower Bound (Lemma 2)}

We show a lower bound on the approximation error for a sketched subset of columns, $\|SA_TV - SA\|_p$, in terms of $\|A_TV - A\|_p$. The lower bound holds simultaneously for any arbitrary subset $A_T$ of chosen columns, and for any arbitrary right factor $V$. 

We begin the proof by first showing that applying a dense $p$-stable sketch to a vector will not shrink its $p$-norm. This is done in  \textbf{Lemma}~\ref{lemma:pstablenocontraction}. We further observe that although $p$-stable random variables are heavy-tailed, we can still bound their tail probabilities by applying Lemma 9 from~\cite{STOC-2013-MengM}. We note this in \textbf{Lemma}~\ref{lemma:pstableuppertail}.
Note that the $X_i$'s do not need to be independent in this lemma.

Equipped with \textbf{Lemma}~\ref{lemma:pstablenocontraction}, \textbf{Lemma}~\ref{lemma:pstableuppertail} and a net argument, 
we can now establish a lower bound on $\|SA_TV - SA\|_p$. We first show in \textbf{Lemma}~\ref{lemma:cssboundvector} that, with high probability, for any arbitrarily selected subset $A_T$ of columns and for an arbitrary column $A_{*j}$, the error incurred to fit $SA_{*j}$ using the columns of $SA_T$ is no less than the error incurred to fit $A_{*j}$ using the columns of $A_T$. We then apply a union bound over all subsets $T \subset [n]$ and columns $j \in [n]$ to conclude our lower bound in \textbf{Lemma}~\ref{lemma:pstable}.

\begin{customlemma}{2.1}(No Contraction of $p$-stable Sketch)
\label{lemma:pstablenocontraction}
    Given a matrix $S \in \mathbb{R}^{t \times m}$ whose entries are i.i.d. p-stable random variables rescaled by $\Theta\left(\frac{1}{t^{\frac{1}{p}}}\right)$, where $1 \leq p < 2$, for any fixed $y \in \mathbb{R}^m$, with probability $1 - \frac{1}{e^{t}}$, the following holds:
    \begin{align*}
        \|Sy\|_p \geq \|y\|_p
    \end{align*}
\end{customlemma}

\begin{proof}
By $p$-stability, we have $\|Sy\|_p^p = \sum_{i=1}^t \Big(\|y\|_p\frac{|Z_i|}{t^{\frac{1}{p}}}\Big)^{p}$, where the $Z_i$ are i.i.d. $p$-stable random variables. Since $Pr[|Z_i| = \Omega(1)] > \frac{1}{2}$, by applying a Chernoff bound (to the indicators $1_{|Z_i| \geq C}$ for a sufficiently small constant $C$), we have $\sum_{i=1}^t |Z_i|^p = \Omega(t)$ with probability $1 - \frac{1}{e^t}$. Therefore, with probability $1 - \frac{1}{e^t}$, $\|Sy\|_p \geq \|y\|_p$.
\end{proof}

\begin{customlemma}{2.2}(Upper Tail Inequality for $p$-stable Distributions) \label{lemma:pstableuppertail}
Let $p \in (1, 2)$, and $m > 3$. For $i \in [m]$, let $X_i$ be a standard $p$-stable random variable, and let $\gamma_i > 0$ and $\gamma = \sum_{i = 1}^m \gamma_i$. Let $X = \sum_{i = 1}^m \gamma_i |X_i|^p$. Then, for any $t \geq 1$, $Pr[X \geq t\alpha_p\gamma] \leq \frac{2\log(mt)}{t}$, where $\alpha_p > 0$ is a constant that is at most $2^{p - 1}$.
\end{customlemma}

\begin{proof}
Lemma 9 from~\cite{STOC-2013-MengM} for $p \in (1, 2)$.
\end{proof}

\begin{customlemma}{2.3}(No Contraction for All Sketched Subsets and Columns) \label{lemma:cssboundvector}
Let $A \in \R^{d \times n}$, and $k \in \N$. Let $t = k \cdot \poly(\log nd)$, and let $S \in \R^{t \times d}$ be a matrix whose entries are i.i.d. standard $p$-stable random variables, rescaled by $\Theta(1/t^{\frac{1}{p}})$. Finally, let $m = k \cdot \poly(\log k)$. Then, with probability $1 - \frac{1}{\poly(nd)}$, for all $T \subset [n]$ with $|T| = m$, for all $j \in [n]$, and for all $y \in \R^{|T|}$,
$$\|A_Ty - A_{*j}\|_p \leq \|S(A_Ty - A_{*j})\|_p$$
\end{customlemma}

\begin{proof}
\textbf{Step 1:}
We first extend \textbf{Lemma}~\ref{lemma:pstablenocontraction} and use a net argument to show that applying a $p$-stable sketching matrix $S \in \mathbb{R}^{t \times d}$ will not shrink the norm of \textit{any} vector, i.e. $\|Sy\|_p \geq \|y\|_p$ simultaneously for \textit{all} $y$ in the column span of $[A_T, A_j] =: A_{T, j}$, for any fixed $T \subset [n]$ with $|T| = k \cdot \poly(\log k)$, and $j \in [n]$.

For our net argument, we begin by showing that with high probability all entries of $S$ are bounded. Let $D > 0$, which we will choose appropriately later. For convenience, let $\widetilde{S} \in \R^{t \times n}$ be equal to $S$ without the rescaling by $\Theta(1/t^{1/p})$ (that is, the entries of $\widetilde{S}$ are i.i.d. $p$-stable random variables, and the entries of $S$ are those of $\widetilde{S}$ but rescaling by $\Theta(1/t^{1/p})$. Consider the following two cases:

\textbf{Case 1:} $p = 1$: The $\widetilde{S}_{ij}$ are standard Cauchy random variables. Consider the half-Cauchy random variables $X_{i, j} = |S_{i, j}|$. The cumulative distribution function of a half-Cauchy random variable $X$ is $F(x) = \int_0^x \frac{2}{\pi(t^2 + 1)} dt = 1 - \Theta(\frac{1}{x})$. Thus, for any $i \in [t]$ and $j \in [d]$, $\Pr[|\widetilde{S_{ij}}| \leq D] = 1 - \Theta(\frac{1}{D})$, and $\Pr[|S_{ij}| \leq D] = 1 - \Theta(\frac{1}{tD}) \geq 1 - \Theta(\frac{1}{D})$.

\textbf{Case 2:} $p \in (1, 2)$: We apply the upper tail bound for $p$-stable random variables in \textbf{Lemma}~\ref{lemma:pstableuppertail}. For any fixed $i \in [t]$ and $j \in [d]$, $\Pr[|\widetilde{S}_{ij}|^p \leq D^p] \geq 1 - \Theta(\frac{\log D}{D^p})$, which implies that $\Pr[|\widetilde{S_{ij}}| \leq D] \geq 1 - \Theta(\frac{1}{D})$, since $p > 1$. In addition, by the same argument, $\Pr[|S_{ij}| \leq D] = \Pr[|\widetilde{S}_{ij}| \leq t^{1/p}D] = 1 - \Theta(\frac{1}{t^{1/p}D}) \geq 1 - \Theta(\frac{1}{D})$.

Therefore, for $p \in [1, 2)$, if we let $\calE_1$ be the event that for all $i \in [t]$ and $j \in [m]$, we simultaneously have $|S_{ij}| \leq D$, then by a union bound over all the entries in $S$, $\Pr[\calE_1] \geq 1 - \Theta(\frac{td}{D})$. In particular, if we choose $D = \poly(nd)$, then $\calE_1$ occurs with probability at least $1 - \frac{1}{\poly(nd)}$. Note that if $\calE_1$ occurs, then this implies that for all $y \in \R^d$,
\begin{align*}
\|Sy\|_p = \Big(\sum_{i = 1}^t \Big| \sum_{j = 1}^d S_{ij}y_j \Big|^p \Big)^{1/p} \leq \Big(\sum_{i = 1}^t D^p \cdot \Big| \sum_{j = 1}^d y_j \Big|^p \Big)^{1/p} \leq Dt^{1/p} \|y\|_1 \leq D\poly(d) \|y\|_p
\end{align*}

Consider the unit $\ell_p$ ball $B = \{y \in \mathbb{R}^d: \|y\|_p = 1,\, \exists z \in \R^m \text{ s.t. } y = A_{T, j}z\}$ in the column span of $A_{T, j}$. A subset $\mathcal{N} \subset B$ is a $\gamma$-net for $B$ if for all $y \in B$ there exists some $u \in \calN$ such that $\|y - u\|_p \leq \gamma$, for some distance $\gamma > 0$. There exists such a net $\mathcal{N}$ for $B$ of size $|\mathcal{N}| = (\frac{1}{\gamma})^{O(m)}$ by a standard greedy construction, since the column span of $A_{T, j}$ has dimension at most $m + 1$. Let us choose $\gamma$ as follows. First let $K = \poly(nd)$ such that $\|Sy\|_p \leq K \|y\|_p$ (recall that $\|Sy\|_p \leq D \poly(d) \|y\|_p$ if $\calE_1$ holds). Then, we choose $\gamma = \frac{1}{m^2 K}$. Thus, $|\calN| \leq (m^2 K)^{O(m)} = 2^{O(m \log (nd))}$.

Define the event $\calE_2(T, j)$ (here the $T, j$ in parentheses signify that $\calE_2(T, j)$ is defined in terms of $T$ and $j$) as follows: for all $y \in \calN$ simultaneously, $\|Sy\|_p \geq \|y\|_p$. (Note that $\calE_2(T, j)$ depends on $T, j$ since $\calN$ is a net for the column span of $A_{T, j}$.) By applying \textbf{Lemma}~\ref{lemma:pstablenocontraction}, and a union bound over all vectors $y \in \mathcal{N}$, we find that for all $y \in \mathcal{N}$ simultaneously, $\|Sy\|_p \geq \|y\|_p$ with probability at least $1 - \frac{|\calN|}{e^t} = 1 - \frac{2^{O(m \log (nd))}}{e^t}$ --- in other words, $\calE_2(T, j)$ has probability at least $1 - \frac{2^{O(m \log (nd))}}{e^t}$.

Now, consider an arbitrary unit vector $x \in B$. There exists some $y \in \calN$ such that $\|x - y\|_p \leq \gamma = \frac{1}{m^2 K}$. If we assume that both $\calE_1$ and $\calE_2(T, j)$ hold, then the following holds as well:
\begin{align*}
    \|Sx\|_p &\geq \|Sy\|_p - \|S(x - y)\|_p &\text{Triangle Inequality}\\
    &\geq \|y\|_p - \|S(x - y)\|_p &\text{By event $\mathcal{E}_2(T, j)$}\\
    &\geq \|y\|_p - K\|(x - y)\|_p &\text{Implication of event $\mathcal{E}_1$}\\
    &\geq \|y\|_p - K\gamma &\text{By $\|x-y\|_p\leq \gamma$}\\
    &= \|y\|_p - O\Big(\frac{1}{m^2}\Big)\\
    &= \|x\|_p - O\Big(\frac{1}{m^2}\Big) &\text{$\|x\|_p = \|y\|_p = 1$}
\end{align*}

For a sufficiently large $m$, $O(\frac{1}{m^2})$ is at most $\frac{1}{2}$, and thus $\frac{\|x\|_p}{2} = \frac{1}{2}\geq \frac{1}{m^2}$. This implies $\|Sx\|_p \geq \|x\|_p - \frac{\|x\|_p}{2} = \frac{\|x\|_p}{2}$. We can rescale $S$ by a factor of $2$ so that $\|Sx\|_p \geq \|x\|_p$.

We have shown that $\|Sy\|_p \geq \|y\|_p$ holds simultaneously for \textit{all} unit vectors $y$ in the column span of $A_{T, j}$, conditioning on $\calE_1$ and $\calE_2(T, j)$. By linearity, we conclude that $\|Sy\|_p \geq \|y\|_p$ ($1 \leq p < 2$) holds simultaneously for \textit{all} $y$ in the column span of $A_{T, j}$, conditioning on $\calE_1$ and $\calE_2(T, j)$.

\textbf{Step 2:}
Next, we apply a union bound over all possible subsets $T \subset [n]$ of chosen columns from $A$ and all possible single columns $A_{*j}$ for $j \in [n]$, to argue that $\|S(A_T y - A_{*j})\|_p \geq \|A_T y - A_{*j}\|_p$ holds simultaneously for all $y \in \mathbb{R}^{|T|}$ and all $T \subset [n]$ with $|T| = m = k \cdot \poly(\log k)$ and $j \subset [n]$ with high probability.

In \textbf{Step 1}, we showed that $\calE_2(T, j)$ fails with probability $\frac{2^{O(m \log (nd))}}{e^t}$, for any fixed $T, j$. Thus, if we define $\calE_{2, all}$ to be the event that $\calE_2(T, j)$ holds for all $T, j$ (in other words, $\calE_{2, all} = \bigcap_{T, j} \calE_2(T, j)$, then the failure probability of $\calE_{2, all}$ is at most
\begin{align*}
\frac{2^{O(m \log (nd))}}{e^t} \cdot \binom{n}{m} \cdot d \leq \frac{2^{O(m \log (nd))}}{e^t} \cdot n^{O(m)} \cdot d = \frac{2^{O(m \log (nd))}}{e^t}
\end{align*}

In summary, if we let $D = \poly(nd)$, then $\calE_1$ succeeds with probability $1 - \Theta(\frac{td}{D}) \geq 1 - \frac{1}{\poly(nd)}$. In addition, if we let $D = \poly(nd)$, and let $K = D \poly(d) = \poly(nd)$, then $\calE_{2, all}$ holds with probability $1 - \frac{2^{O(m \log (nd)}}{e^t}$. Note that if both $\calE_1$ and $\calE_{2, all}$ hold, then $\calE_1$ and $\calE_2(T, j)$ hold for all $T, j$, meaning that
$$\|A_Ty - A_{*j}\|_p \leq \|S(A_T y - A_{*j})\|_p$$
for all $T \subset [n]$ with $|T| = m = k \cdot \poly(\log k)$, $j \in [n]$ and $y \in \R^{|T|}$. Moreover, $\calE_1$ and $\calE_{2, all}$ simultaneously hold with probability at least $1 - \Theta(\frac{td}{D}) - \frac{2^{O(m \log (nd)}}{e^t}$, which is $1 - \frac{1}{\poly(nd)}$ for $D = \poly(nd)$ and $t = \Theta(m \log (nd))$. This completes the proof of the lemma.
\end{proof}

\begin{customlemma}{2}[Sketched Error Lower Bound]
\label{lemma:pstable}
Let $A \in \R^{d \times n}$ and $k \in \N$. Let $t = k \cdot \poly(\log (nd))$, and let $S \in \R^{t \times d}$ be a matrix whose entries are i.i.d. standard $p$-stable random variables, rescaled by $\Theta(1/t^{\frac{1}{p}})$. Then, with probability $1 - o(1)$, for \textbf{all} $T \subset [n]$ with $|T| = k \cdot \poly(\log k)$ and for \textit{all} $V \in \R^{|T| \times n}$,
\begin{align*}
    \|A_TV - A\|_p \leq \|SA_TV - SA\|_p    
\end{align*}
\end{customlemma}

\begin{proof}
Let $y_j$ denote the $j$-th column of $V$, where $j \in [n]$. By applying \textbf{Lemma}~\ref{lemma:cssboundvector}, and a union bound over all columns of $V$, the following holds with probability $1 - \frac{n}{\poly(nd)} = 1 - o(1)$:
\begin{align*}
\|A_TV - A\|_p &= (\sum_{j=1}^n \|A_Ty_j - A_j\|_p^p)^{\frac{1}{p}}\\
&\leq (\sum_{j=1}^n \|S(A_Ty_j - A_j)\|_p^p)^{\frac{1}{p}}\\
&= \|SA_TV - SA\|_p
\end{align*}
\end{proof}

\subsection{Sketched Error Upper Bound (Lemma 3)}
We show an upper bound on the approximation error of $k$-CSS$_p$ on a sketched subset of columns, $\|SA_TV - SA_T\|_p$, which holds for a fixed subset $A_T$ of columns and for the minimizing right factor $V = \arg\min_{V}\|SA_TV - SA\|_p$ for that subset of columns.

We first adapt Lemma E.17 from~\cite{swz2017lral1norm} to establish an upper bound on the error $\|SA_TV - SA\|_p$ for any fixed $V$ in \textbf{Lemma}~\ref{lemma:afactor}. We then apply \textbf{Lemma}~\ref{lemma:afactor} to the minimizer $V$ to conclude the upper bound in \textbf{Lemma}~\ref{lemma:bestcss}.

\begin{customlemma}{3.1}(An Upper Bound on Norm of A Sketched Matrix) \label{lemma:afactor}
Given $A \in \R^{n \times d}$ and $p \in [1, 2)$, and $U \in \R^{n \times k}$ and $V \in \R^{k \times d}$, if $S \in \R^{t \times n}$ is a dense $p$-stable matrix, whose entries are rescaled by $\Theta\left(\frac{1}{t^{\frac{1}{p}}}\right)$, then with probability at least $1 - o(1)$,
\begin{align*}
    \|SUV - SA\|_p^p \leq O(\log (td)) \|UV - A\|_p^p
\end{align*}
Here, the failure probability $o(1)$ can be arbitrarily small.
\end{customlemma}

\begin{proof}
Lemma E.17 from~\cite{swz2017lral1norm}.
\end{proof}

\begin{customlemma}{3}[Sketched Error Upper Bound (Lemma E.11 of \cite{swz2017lral1norm})]
\label{lemma:bestcss}
Let $A \in \R^{d \times n}$ and $k \in \N$. Let $t = k \cdot \poly(\log (nd))$, and let $S \in \mathbb{R}^{t \times d}$ be a matrix whose entries are i.i.d. standard $p$-stable random variables, rescaled by $\Theta(1/t^{\frac{1}{p}})$. Then, for a fixed subset $T \subset [n]$ of columns with $|T| = k\cdot \poly(\log k)$ and a fixed $V \in \R^{|T| \times n}$, with probability $1 - o(1)$, we have
\begin{align*}
    \min_V \|SA_TV - SA\|_p \leq \min_V O(\log^{1/p}(nd))\|A_TV - A\|_p   
\end{align*}
\end{customlemma}

\begin{proof}
Let $X_1^* = \arg\min_X \|SA_TX - SA\|_p$ and $X_2^* = \arg\min_X \|A_TX - A\|_p$. By \textbf{Lemma}~\ref{lemma:afactor},
\begin{align*}
    \|SA_TX_1^* - SA\|_p^p &\leq \|SA_TX_2^* - SA\|^p_p\\
    &\leq O(\log(k\textrm{poly}(\log n)d))\|A_TX_2^* - A\|^p_p\\
    &\le O(\log(nd))\|A_TX_2^* - A\|^p_p
\end{align*}
Therefore, $$\min_X \|SA_TX - SA\|_p \leq \min_X O(\log^{1/p}(nd))\|A_T X - A\|_p$$.
\end{proof}

\newpage
\section{$\ell_p$ Lewis Weights and Applications} \label{section:lewis_weights_applications}

\subsection{$\ell_p$ Lewis Weights Background}
\label{subsec:lewis_weights}
Our streaming and distributed $k$-CSS algorithms make use of $\ell_{p, 2}$ strong coresets and an $O(1)$-approximation $k$-CSS$_{p, 2}$ subroutine, both of which applies importance sampling of the input matrix, based on the so-called \textit{Lewis weights} (see Definition~\ref{def:lewis_weights}), which can be approximated with repeated computation of the \textit{leverage scores} (see Definition~\ref{def:leverage_scores}) in polynomial time~\cite{cp2015lewisweights}. In this section, we briefly introduce the Lewis weights and the desired properties associated with it. We further introduce \textit{$\ell_p$ sensitivities} and \textit{$\ell_p$ well-conditioned basis} to aid the analysis of the desired property we need from Lewis weights.

\begin{definition}[Statistical Leverage Scores --- Definition 16 of \cite{sketching_as_a_tool}]
\label{def:leverage_scores}
Let $A \in \R^{n \times d}$, and suppose $A = U \Sigma V^T$ is the ``thin'' SVD of $A$. \footnotemark \footnotetext{meaning that if $A$ is of rank $k$, then $U$ and $V$ have $k$ columns, and $\Sigma \in \R^{k \times k}$.} Then, for $i \in [n]$, define $\ell_i(A) = \|U_{i, *}\|_2^2$ --- we say $\ell_i(A)$ is the $i^{th}$ statistical leverage score of $A$.
\end{definition}

\begin{definition}[$\ell_p$ Lewis Weights --- Definition 2.2 of \cite{cp2015lewisweights}]
\label{def:lewis_weights}
Let $1 \leq p < \infty$, and let $A \in \R^{n \times d}$. Then, the $\ell_p$ Lewis weights of $A$ are given by a unique vector $\overline{w} \in \R^n$ such that $\overline{w}_i = \ell_i(\text{diag}(\overline{w})^{1/2 - 1/p} A)$, where $\text{diag}(\overline{w})$ is the $n \times n$ diagonal matrix with the entries of $\overline{w}$ on its diagonal. By Corollaries 3.4 and 4.2 of \cite{cp2015lewisweights}, such a vector $\overline{w}$ exists and is unique. 
\end{definition}

\begin{definition}[$\ell_p$ Sensitivities~\cite{cww2019tukey_regression}]
\label{def:lp_sensitivities}
    Let $1 \leq p < \infty$, and let $A \in \mathbb{R}^{n \times d}$. Let $col(A)$ denote the column span of $A$. The $i^{th}$ $\ell_p$ sensitivity of $A$ is defined as $\sup_{y \in col(A)}\frac{y_i^{p}}{\|y\|_p^p}$.
\end{definition}

\begin{definition}[$\ell_p$ Well-conditioned Basis --- Definition 3 of~\cite{ddhkm2008lp_sampling}]
\label{def:well_conditioned_basis}
    Let $1 \leq p < \infty$ and $A \in \mathbb{R}^{n \times d}$ of rank $k$. Let $q$ be its dual norm. Then an $n \times k$ matrix $U$ is an $(\alpha, \beta, p)$ well-conditioned basis for the column span of $A$, if (1) $\|U\|_{p} \leq \alpha$, and (2) for all $z \in \mathbb{R}^{k}$, $\|z\|_{q} \leq \beta \|U z\|_p$. W will say $U$ is a $p$ well-conditioned basis for the column span of $A$, if $\alpha$ and $\beta$ are $k^{O(1)}$, independent of $d$ and $n$.
\end{definition}

\begin{definition}[$\ell_p$ Leverage score sampling --- Theorem 5 of~\cite{ddhkm09_well_conditioned_bases}]
\label{def:lp_leverage_score_sampling}
    Let $1 \leq p < \infty$ and $A \in \mathbb{R}^{n \times d}$ of rank $k$. Let $U$ be an $(\alpha, \beta, p)$ well-conditioned basis for the column span of $A$. Given approximation error $\eps$ and failure probability $\delta$, for $r \geq C(\eps, \delta, p, k)$, the $\ell_p$ leverage score sampling is any sampling probability $p_i \geq \min\{1, \frac{\|U_{i*}\|_p^{p}}{\|U\|_p^{p}}r\}$, $\forall i \in [n]$.
\end{definition}

The desired property from Lewis weights we need is called an $\ell_p$ subspace embedding (see \textbf{Theorem}~\ref{thm:high_prob_lewis_weights}).
\cite{cp2015lewisweights} shows for a matrix $A \in \R^{n \times d}$, if the rows of $A$ are appropriately sampled using a certain distribution based on the $\ell_p$ Lewis weights of $A$, this property holds with constant probability. 
However, for our construction of strong coresets, we need this property of Lewis weights to hold with high probability $1 - \delta$ for some small $\delta \in (0, \frac{1}{2})$. We explain why this is possible following the works from~\cite{cp2015lewisweights, blm1989zenotope}.

\begin{customthm}{4.2}($\ell_p$-Lewis Weights Subspace Embedding)
\label{thm:high_prob_lewis_weights}
Given an input matrix $A \in \R^{n \times d}$ and $p \in [1, 2)$, there exists a distribution $(\lambda_1, \lambda_2, \ldots, \lambda_n)$ on the rows of $A$, where the distribution is constructed based on Lewis weights sampling. If the following two conditions are met: (1) $n \leq \poly(d/\eps)$, and (2) the minimum (row) Lewis weights of $A$ is at least $1/\poly(d/\eps)$, then for a sampling and rescaling matrix $S$ with $t$ rows, each chosen independently as the $i^{th}$ standard basis vector times $\frac{1}{(t\lambda_i)^{\frac{1}{p}}}$ with probability $\lambda_i$, with $t = O(d \cdot \poly(\log (d/\delta), 1/\eps))$, the following holds for all $x \in \mathbb{R}^d$ simultaneously with probability $1-\delta$:
$$\|SAx\|_p = (1\pm \epsilon)\|Ax\|_p$$
\end{customthm}
\begin{proof}

The Theorem follows Theorem 7.1 of~\cite{cp2015lewisweights}, except that~\cite{cp2015lewisweights} states the above property of Lewis weights holds with constant probability. However, this result can be improved for it to hold with probability $1-\delta$ as follows: Using the results of~\cite{blm1989zenotope}, it is possible to construct a sampling and rescaling matrix $S$ (i.e. a matrix with one non-zero value per row) with $d \poly(\log (d/\delta), 1/\eps)$ rows such that with probability at least $1-\delta$, we have $\|SAx\|_p = (1\pm \eps) \|Ax\|_p$ simultaneously $\forall x \in \mathbb{R}^{d}$. 

To do this, the authors of~\cite{blm1989zenotope} construct a sequence of $v = \poly((\log d)/\eps)$ sets of vectors $\{V_i\}_{i=1}^{v}$, and each vector in a net over the column space can be written approximately as a sum of vectors, one drawn from each set $V_i$. Then they show via Bernstein's inequality that the vectors in all sets have their norms preserved if one samples from the $\ell_p$-Lewis weights of $A$, and the final bound follows from the triangle inequality.
By increasing the number of rows in $S$ by an $O(\log (d/(\eps \delta))$ factor, one can now argue that with probability $1-\delta$, all vectors $y$
in the column span of $A$ have their norm $\|y\|_p$ preserved. This gives a total of $O(d \poly(\log (d/\delta), 1/\eps))$ rows in $S$. 
However, to apply the results from~\cite{blm1989zenotope}, we need two conditions to be satisfied: (1) $n \leq \poly(d/\eps)$, and (2) the minimum (row) Lewis weight of $A$ is at least $1/\poly(d/\eps)$.

We can achieve both conditions by first replacing $A$ with $TA$, where $T$ is a sampling matrix for which $T$ has $\poly(d/\eps)$ rows and with probability $1-\delta$, $\|TAx\|_p = (1\pm \eps) \|Ax\|_p$ simultaneously $\forall x \in \mathbb{R}^{d}$. Many constructions of such $T$ exist with $\poly(d/\eps \log(1/\delta))$ rows, e.g. e.g., based on the $\ell_p$-sensitivities of $A$ (see Definition~\ref{def:lp_sensitivities}) or $\ell_p$ leverage score sampling (see Definition~\ref{def:lp_leverage_score_sampling}), or the $\ell_p$-Lewis weights themselves (see Definition~\ref{def:lewis_weights}). See for example Theorem 10 of \cite{cww2019tukey_regression}. If we choose $\delta > 1/\exp(\poly(d/\eps))$, then the number of rows of $TA$ will be at most $\poly(d/\eps)$, satisfying condition (1) of Theorem \ref{thm:high_prob_lewis_weights}. 
On the other hand, if $\delta \leq 1/\exp(\poly(d/\eps))$, then $\log(1/\delta) > \poly(d/\eps)$.
In this case, we can just sample using $\ell_p$ sensitivities as in Theorem 3.10 of \cite{cww2019tukey_regression}. The number of rows needed will be $\poly(d/\eps)$, which can simply be absorbed into the $\poly(\log(1/\delta))$.

While (1) holds since the number of rows of $T$ is at most $\poly(d/\eps)$, we can ensure (2) also holds by computing the (row) $\ell_p$-Lewis weights of $TA$, and discarding any row $i$ with $\ell_p$-Lewis weight less than $1/\poly(n/\eps)$. 
By Lemma 5.5 of~\cite{cp2015lewisweights} this cannot make the $\ell_p$-Lewis weight of any non-discarded row decrease.
Moreover, since the $\ell_p$-Lewis weights are upper bounds on the $\ell_p$-sensitivities for $1 < p < 2$ (by Lemma 3.8 of \cite{cww2019tukey_regression} and Definition~\ref{def:lp_sensitivities}), discarding such rows $i$ only changes $\|TAx\|_p$ by a $(1\pm\eps)$ factor for any $x \in \mathbb{R}^{d}$, by the triangle inequality. 
Finally, since $n \leq \poly(d/\eps)$, we also have that any $\ell_p$-Lewis weight is now at least $1/\poly(d/\eps)$, as needed to now apply the result of~\cite{blm1989zenotope}. Thus, we can now apply the above $S$ to non-discarded rows of $TA$.
\end{proof}

\subsection{Strong Coresets for $\ell_{p, 2}$ Norm Low Rank Approximation (Lemma 4)}

\begin{customlemma}{4}[Strong Coresets in $\ell_{p,2}$ norm \cite{sw2018strongcoreset}]\label{lemma:coreset}
Let $A \in \R^{d \times n}$, $k \in \N$, $p \in [1, 2)$, and $\eps, \delta \in (0, 1)$. Then, in $\widetilde{O}(nd)$ time, one can find a sampling and reweighting matrix $T$ with
$O(\frac{d}{\eps^2} \poly(\log(d/\eps), \log(1/\delta)))$ columns, such that, with probability $1 - \delta$, for all rank-$k$ matrices $U$, 
\begin{align*}
    \min_{\textrm{rank-k }V} \|UV - AT\|_{p,2} = (1 \pm \eps)\min_{\textrm{rank-k }V}\|UV - A\|_{p,2}
\end{align*}
where $AT$ is called a \textbf{strong coreset} of $A$.
\end{customlemma}

\begin{proof}
We can obtain $T$ with the desired number of columns using the strong coreset construction from \textbf{Lemma 16} in \cite{sw2018strongcoreset}. For our purposes, the matrix $B \in \R^{n \times (d + 1)}$ that we use will be different than the $B$ used in the statement and proof of \textbf{Lemma 16} in \cite{sw2018strongcoreset}. The coreset construction in \cite{sw2018strongcoreset} has the goal of removing a dependence on $d$ in the coreset size. In \cite{sw2018strongcoreset}, $B$ refers to a matrix obtained by projecting $A$ onto a $\poly(k)$-dimensional subspace $S$ obtained by running a dimensionality reduction algorithm (referred to as \textsc{DimensionalityReduction} in \cite{sw2018strongcoreset}) and constructing a coreset by sampling rows from $B$. The rows are sampled according to the $\ell_p$ Lewis weights of $B$.

In our case, we do not want our coreset size to have a polynomial dependence on $k$, while a linear dependence on $d$ suffices. Thus, instead of using the dimensionality reduction subroutine in \cite{sw2018strongcoreset}, we simply let $B$ be the input matrix $A$, concatenated with a column of $0$'s (the column span of $A$ will be the subspace $S$ referred to in the statement of \textbf{Lemma 16} of \cite{sw2018strongcoreset}). The desired number of rows and running time then follows from \textbf{Lemma 16} of \cite{sw2018strongcoreset}. \footnotemark \footnotetext{The proof of \textbf{Lemma 16} of \cite{sw2018strongcoreset} mentions that for $p > 1$, Lewis weight sampling requires $(f/\eps)^{O(p)}$ rows for a matrix with $C$ columns --- this is a typo, and $f \poly(\frac{\log f}{\eps})$ rows suffice.} 
Based on $1-\delta$ $\ell_p$ Lewis weights subspace embedding, the size of the coreset grows linearly in $\poly(\log(1/\delta))$.
\end{proof}

\subsection{Bi-criteria $O(1)$-approximation algorithm for $k$-CSS$_{p, 2}$ (Theorem 1)}
\label{appendix:kcss_in_lp2_norm}
We introduce an  $O(1)$-approximate bi-criteria $k$-CSS$_{p, 2}$ algorithm (\textbf{Algorithm}~\ref{algorithm:kCSSp2}), which is a modification of the algorithm from~\cite{cw2015subsapproxl2}. The major difference is that we use $\ell_p$-Lewis weight sampling, instead of $\ell_p$ leverage score sampling, which reduces the number of output columns from $O(k^2)$ to $O(k\poly(\log k))$.

We first show how to use a sparse embedding matrix $S$ to obtain an $O(1)$-approximate left factor in Section~\ref{subsec:osnap_sparse_embedding}. We then show how to apply the $\ell_p$-Lewis weight sampling to select a subset of $\widetilde{O}(k)$ columns that gives an $O(1)$-approximation in Section~\ref{subsec:lewis_weights_sampling}. Finally, we conclude the analysis of our $O(1)$-approximate bi-criteria $k$-CSS$_{p, 2}$ algorithm in Section~\ref{subsec:full_analysis_kcssp2}.

\begin{algorithm}
\caption{polynomial time, $O(1)$-approximation for $k$-CSS$_{p, 2}$ ($1 \leq p < 2$)}
\label{algorithm:kCSSp2}
\begin{algorithmic}
\STATE \textbf{Input:} The data matrix $A \in \R^{d \times n}$, rank $k \in \mathbb{N}$
\STATE \textbf{Output:} The left factor $U \in \R^{d \times \widetilde{O}(k)}$, the right factor $V \in \R^{\widetilde{O}(k) \times n}$ such that $\|UV - A\|_{p, 2} \leq O(1)\min_{\textrm{rank-k} A_k}\|A_k - A\|_{p, 2}$
\STATE $S \gets $ $\widetilde{O}(k)\times d$ sparse embedding matrix, with sparsity $s = \poly(\log k)$.
\STATE {$S' \gets $ $n \times \widetilde{O}(k)$ sampling matrix, each column of which is a standard basis vector chosen randomly according to the $\ell_p$ Lewis weights of columns of $SA$.}
\STATE Return $U \gets AS'$, $V \gets (AS')^{\dag}A$ 
\COMMENT{$\dag$ denotes the Moore-Penrose pseudoinverse.}
\end{algorithmic}
\end{algorithm}

\subsubsection{Sparse Embedding Matrices}
\label{subsec:osnap_sparse_embedding}
The \textbf{sparse embedding matrix} $S \in \mathbb{R}^{\widetilde{O}(k) \times d}$ of  \cite{nn2013OSNAP}, and used by \cite{cw2015subsapproxl2}, is constructed as follows: each column of $S$ has exactly $s$ non-zero entries chosen in uniformly random locations. Each non-zero entry is a random value $\pm \frac{1}{\sqrt{s}}$ with equal probability. $s$ is also called the \textit{sparsity} of $S$. Let $h$ be the hash function that picks the location of the non-zero entries in each column of $S$ and $\sigma$ be the hash function that determines the sign $\pm$ of each non-zero entry.

Applying the sparse embedding matrix $S$ to $A$ enables us to obtain a rank-$k$ right factor that is at most a factor of $O(1)$ worse than the best rank-$k$ approximation error in the $\ell_{p, 2}$ norm. We adapt Theorem 32 from~\cite{cw2015subsapproxl2} to show this in \textbf{Theorem}~\ref{thm:dim_reduction_OSNAP}. Notice that in Theorem 32 of \cite{cw2015subsapproxl2}, the number of rows required for $S$ is $O(k^2)$, but this can be reduced to $\widetilde{O}(k)$ through a different choice of hyperparameters when constructing the sparse embedding matrix $S$.

We note two choices of hyperparameters, i.e., the number $m$ of rows and sparsity $s$, of $S$ in \textbf{Theorem}~\ref{thm:osnap_construct_1} and \textbf{Theorem}~\ref{thm:osnap_construct_2}, both of which give the same result. The proof of Theorem 32 from~\cite{cw2015subsapproxl2} uses the hyperparameters from \textbf{Theorem}~\ref{thm:osnap_construct_1}. We instead use the  hyperparameters from \textbf{Theorem}~\ref{thm:osnap_construct_2} and show in \textbf{Lemma}~\ref{lemma:lemma29} that $\widetilde{O}(k)$ rows of $S$ suffice to preserve certain desired properties. We then combine \textbf{Lemma}~\ref{lemma:lemma29} and \textbf{Lemma}~\ref{lemma:lemma31} adapted from~\cite{cw2015subsapproxl2}, to conclude our result in \textbf{Theorem}~\ref{thm:dim_reduction_OSNAP}, following the analysis from~\cite{cw2015subsapproxl2}.

\begin{customthm}{5.1}(Theorem 3 from~\cite{nn2013OSNAP})
\label{thm:osnap_construct_1}
    For a sparse embedding matrix $S \in \mathbb{R}^{m \times n}$ with sparsity $s = 1$ and a data matrix $U \in \mathbb{R}^{n \times d}$, let $\epsilon \in (0, 1)$. With probability at least $1 - \delta$ all singular values of $SU$ are $(1 \pm \epsilon)$ as long as $m \geq \delta^{-1}(d^2 + d)/(2\epsilon - \epsilon^2)^2$. For the hash functions used to construct $S$, $\sigma$ is 4-wise independent and $h$ is pairwise independent.
\end{customthm}

\begin{customthm}{5.2}(Theorem 9 from~\cite{nn2013OSNAP})
\label{thm:osnap_construct_2}
    For a sparse embedding matrix $S \in \mathbb{R}^{m \times n}$ with sparsity $s = \Theta(\log^3(d/\delta)/\epsilon)$ and a data matrix $U \in \mathbb{R}^{n \times d}$, let $\epsilon \in (0, 1)$. With probability at least $1 - \delta$ all singular values of $SU$ are $(1 \pm \epsilon)$ as long as $m = \Omega(d\log^8(d/\delta)/\epsilon^2)$. For the hash functions used to construct $S$, we have that $\sigma, h$ are both $\Omega(\log(d/\delta))$-wise independent.
\end{customthm}

\begin{customlemma}{5.3}
\label{lemma:lemma29}
   Let $\mathcal{C}$ be a constraint set and $A \in\mathbb{R}^{n \times d}, B\in\mathbb{R}^{n \times d'}$ be two arbitrary matrices. For a sparse embedding matrix $S \in \mathbb{R}^{m \times n}$, there is $m = O(\frac{d\log^8(\frac{d}{\epsilon^{p+1}})}{\epsilon^{2(p+1)}})$, such that with constant probability, the following hold:
    \begin{align*}
        &i)\ \|S(AX - B)\|_{p, 2} \geq (1-\epsilon)\|AX-B\|_{p, 2} \text{ for all } X \in \R^{d \times d'}\\
        &ii)\ \|S(AX^* - B)\|_{p, 2} \leq (1 + \epsilon)\|AX^* - B\|_{p, 2},\ where\ X^* = \arg\min_{X \in \mathcal{C}}\|AX - B\|_{p, 2}
    \end{align*}
\end{customlemma}

\begin{proof}
The proof is the same as the proof of Lemma 29 from~\cite{cw2015subsapproxl2}, except that we use a different choice of hyperparameters in constructing $S$, i.e., sparsity $s$ and the number $m$ of rows. In the proof of Lemma 29 from~\cite{cw2015subsapproxl2}, the construction of $S$ follows \textbf{Theorem}~\ref{thm:osnap_construct_1}, where the sparsity $s = 1$, but requires $m = O(d^2)$ rows. We replace the construction by \textbf{Theorem}~\ref{thm:osnap_construct_2}, where we pick $\delta = \epsilon^{p+1}$. Now the sparsity $s$ is larger but this construction reduces the number of rows required to $m = \widetilde{O}(d)$.

If we use the construction in Theorem \ref{thm:osnap_construct_2} with the parameters $\eps, \delta$ both being $\eps^{p + 1}$, then the rest of the proof follows from Lemma 27 of \cite{cw2015subsapproxl2} (using the same argument as in Lemma 29 of \cite{cw2015subsapproxl2}). As in Lemma 29 of \cite{cw2015subsapproxl2}, properties (i) and (ii) of Lemma 27 of \cite{cw2015subsapproxl2} follow simply because we have chosen the parameters $\eps, \delta$ of $S$ to be $\eps^{p + 1}$ (thus, by Theorem \ref{thm:osnap_construct_2}, $S$ is an $\eps$-subspace embedding for $A$ in the $\ell_2$ norm, and $S$ is an $\eps^{p + 1}$-subspace embedding for $[A, B_{*, i}]$ with probability $1 - \eps^{p + 1}$, for all $i$). Finally, to show that property (iii) in Lemma 27 of \cite{cw2015subsapproxl2}, the only property of the matrix $S$ that is needed by \cite{cw2015subsapproxl2} is Equation (20) of \cite{nn2013OSNAP}, which also holds when $S$ is constructed as in Theorem 9 of \cite{nn2013OSNAP}.
\end{proof}

\begin{customlemma}{5.4} 
\label{lemma:lemma31}
Consider a data matrix $A \in \mathbb{R}^{n \times d}$. Let the best rank-$k$ matrix in the $\ell_{p, 2}$ norm be $A_k = \arg\min_{\textrm{rank-k }A_k}\|A_k - A\|_{p, 2}$.
For $R \in \mathbb{R}^{d \times m}$, if $R^T$ satisfies both of the following two conditions for all $X \in \mathbb{R}^{n \times n}$:
\begin{align*}
    &i)\ \|R^T(A_k^TX - A^T)\|_{p, 2} \geq (1 - \epsilon)\|A_k^TX - A^T\|_{p, 2}\\
    &ii)\ \|R^T(A_k^TX^* - A^T)\|_{p, 2} \leq (1 + \epsilon)\|A_k^TX^* - A^T\|_{p, 2},\ where\ X^* = \arg\min_{X}\|A_k^TX - A^T\|_{p, 2}
\end{align*}
then
\begin{align*}
    \min_{\textrm{rank-k }X} \|XR^TA_k^T - A^T\|_{p, 2}^p \leq (1+3\epsilon)\|A_k^T - A^T\|_{p, 2}^p
\end{align*}
\end{customlemma}

\begin{proof}
    Lemma 31 from~\cite{cw2015subsapproxl2}.
\end{proof}

\begin{customthm}{5.5}($\ell_{p, 2}$-Low Rank Approximation)
\label{thm:dim_reduction_OSNAP}
Let the data matrix be $A \in \R^{d \times n}$ and $k \in \mathbb{N}$ be the desired rank. Let $S \in \R^{m \times d}$ be a sparse embedding matrix with $m = O(k \poly(\log k) \poly(\frac{1}{\eps}))$ rows, and sparsity $s = \poly(\log k)$. Then, the following holds with constant probability:
$$\min_{\textrm{rank-k } X} \|XSA - A\|_{p, 2} \leq (1 + 3\eps) 
\min_{\textrm{rank-k } A_k} \|A_k - A\|_{p, 2}$$
\end{customthm}

\begin{proof}
The proof is the same as the proof of Theorem 32 in~\cite{cw2015subsapproxl2}, except that we adapt a different construction of the sparse embedding matrix $S$, which reduces the number of rows from $O(k^2)$ to $\widetilde{O}(k)$ with increased sparsity $s$.

Consider $A_k = \arg\min_{\textrm{rank-k }A_k}\|A_k - A\|_{p, 2}$. 
Let $V_k$ be a basis for the column space of $A_k$.
By applying \textbf{Lemma}~\ref{lemma:lemma29} and \textbf{Lemma}~\ref{lemma:lemma31} on the basis $V_k$, we conclude the above theorem by setting the number $m$ of rows to $m = O\Big(\frac{k\log^8(\frac{k}{\eps^2})}{\eps^4}\Big)$, and sparsity $s = \poly(\log k)$ in the sparse embedding matrix $S$.
\end{proof}

\subsubsection{Using Lewis Weight Sampling for Column Subset Selection} \label{subsec:lewis_weights_sampling}

Here we show how to use $\ell_p$ Lewis weights sampling (discussed above in at the beginning of Section \ref{section:lewis_weights_applications}) for $k$-CSS$_{p, 2}$. We first introduce a technical tool, a version of Dvoretzky's Theorem (\textbf{Theorem}~\ref{thm:dvoretzky_for_lp}) which allows us to embed $\ell_2^n$ into $\ell_2^{O(n/\eps^2)}$ with only $(1 \pm \eps)$ distortion, and thus enables us to switch between the $\ell_p$ norm and the $\ell_2$ norm and use $\ell_p$ Lewis weight sampling. Based on \textbf{Theorem}~\ref{thm:high_prob_lewis_weights} and \textbf{Theorem}~\ref{thm:dvoretzky_for_lp}, we show in \textbf{Theorem}~\ref{thm:subset_cols_lewis_weight} that Lewis weight sampling provides a good subset of columns, on which our later analysis of $k$-CSS$_{p, 2}$ is based.

\begin{customthm}{5.7}(Randomized Dvoretzky's Theorem)
\label{thm:dvoretzky_for_lp}
Let $n \in \N$, and $\eps \in (0, 1)$. Let $r = \frac{n}{\eps^2}$. Let $G \in \R^{r \times n}$ be a random matrix whose entries are i.i.d. standard Gaussian random variables, rescaled by $\frac{1}{\sqrt{r}}$. For $r = \frac{n}{\epsilon^2}$, the following holds with probability $1 - e^{-\Theta(n)}$, for all $y \in \R^n$, $$\|Gy\|_p = (1 \pm \eps)\|y\|_2$$
\end{customthm}

\begin{proof}
This follows from Theorem 1.2 from~\cite{pvz_dvoretzky_lp_norm}.
\end{proof}

\begin{customthm}{5.8}(Subset of Columns by Lewis Weights Sampling)
\label{thm:subset_cols_lewis_weight}
Let $A \in \mathbb{R}^{d \times n}$. Let $S \in \mathbb{R}^{m \times d}$ be a sparse embedding matrix, with $m = O(k\cdot \poly(\log k)\poly(\frac{1}{\epsilon}))$. Further, let $S' \in \mathbb{R}^{n \times t}$ be a sampling matrix whose columns are random standard basis vectors generated according to the $\ell_p$ Lewis weights of columns of $SA$ (that is, the row sampling matrix $(S')^T$ is generated based on the Lewis weights of $(SA)^T$), with $t = k \cdot \poly(\log k)$. Then, for $\hat{X} = \arg\min_{\textrm{rank-k } X} \|XSAS' - AS'\|_{p, 2}$, the following holds with probability $1 - o(1)$: 
$$\|\hat{X}SA - A\|_{p, 2} \leq \Theta(1) \min_{\textrm{rank-k }A_k}\|A_k - A\|_{p, 2}$$
\end{customthm}

\begin{proof}
Let $X^* = \arg\min_{\textrm{rank-k }X^*}\|X^*SA - A\|_{p, 2}$. By the triangle inequality,
$$\|\hat{X}SA - A\|_{p, 2} \leq \|X^*SA - \hat{X}SA\|_{p, 2} + \|X^*SA - A\|_{p, 2}$$

Our goal is to bound $\|X^*SA - \hat{X}SA\|_{p, 2}$. By Lemma D.28 and Lemma D.29 from~\cite{swz2017lral1norm}, for any column sampling matrix $S$ and for any fixed matrix $Y$, it can be shown that $\mathbf{E}[\|YS\|_p^p] = \|YS\|_p^p$. In our case, since $S'$ is a sampling matrix, we have $\mathbf{E}[\|YS'\|_p^p] = \|YS'\|_p^p$ for any fixed matrix $Y$.

Now let $G \in \mathbb{R}^{\Theta(d) \times d}$ be a rescaled random matrix whose entries are i.i.d. standard Gaussian random variables as in \textbf{Theorem}~\ref{thm:dvoretzky_for_lp}. We apply \textbf{Theorem}~\ref{thm:dvoretzky_for_lp} to transform between the $\ell_p$ space and the Euclidean space. Since transformation of both directions can be done with very small distortion, we obtain a $\Theta(1)$ approximation.
With constant probability, we have
\begin{align*}
    &\|X^*SA - \hat{X}SA\|_{p, 2}\\
    & = \Theta(1)\|G(X^* - \hat{X})SA\|_p &\text{By \textbf{Theorem}~\ref{thm:dvoretzky_for_lp}}\\
    & = \Theta(1)\|G(X^* - \hat{X})SAS'\|_p  &\text{By \textbf{Theorem}~\ref{thm:high_prob_lewis_weights}}\\
    & = \Theta(1)\|(X^* - \hat{X})SAS'\|_{p, 2} &\text{By \textbf{Theorem}~\ref{thm:dvoretzky_for_lp}}\\
    & \leq \Theta(1) \Big(\|X^*SAS' - AS'\|_{p, 2} + \|\hat{X}SAS' - AS'\|_{p, 2}\Big) &\text{Triangle Inequality}\\
    & \leq \Theta(1) \|X^*SAS' - AS'\|_{p, 2} &\text{Since $\hat{X} = \arg\min_{\textrm{rank-k } X} \|XSAS' - AS'\|_{p, 2}$}\\
    &=\Theta(1)\|G(X^*SA - A)S'\|_p &\text{By \textbf{Theorem}~\ref{thm:dvoretzky_for_lp}}\\
    & \leq \Theta(1) \|G(X^*SA - A)\|_p &\text{By Markov Bound on $\mathbf{E}[\|YS'\|_p^p] = \|YS'\|$}\\
    & = \Theta(1)\|X^*SA - A\|_{p, 2} &\text{By \textbf{Theorem}~\ref{thm:dvoretzky_for_lp}}
\end{align*}

Therefore,
\begin{align*}
    \|\hat{X}SA - A\|_{p, 2} &\leq \|X^*SA - \hat{X}SA\|_{p, 2} + \|X^*SA - A\|_{p, 2}\\
    &\leq \Theta(1)\|X^*SA - A\|_{p, 2}\\
    &\leq \Theta(1)\min_{\textrm{rank-k }A_k}\|A - A_k\|_{p, 2} &\text{By \textbf{Theorem}~\ref{thm:dim_reduction_OSNAP}}
\end{align*}
as desired. Note that we can achieve $o(1)$ failure probability by increasing the number of columns in $S'$ by a logarithmic factor --- see \textbf{Theorem}~\ref{thm:high_prob_lewis_weights}.
\end{proof}

\subsubsection{Analysis for $k$-CSS$_{p, 2}$}
\label{subsec:full_analysis_kcssp2}

We now conclude our proof that Algorithm~\ref{algorithm:kCSSp2} for bi-criteria $k$-CSS$_{p, 2}$ achieves an $O(1)$ approximation factor with polynomial running time. This result is stated as \textbf{Theorem}~\ref{thm:bicriteria_k_css_p2}.

\begin{customthm}{1}[Bicriteria $O(1)$-Approximation Algorithm for $k$-CSS$_{p, 2}$]
\label{thm:bicriteria_k_css_p2}
Let $A \in \R^{d \times n}$ and $k \in \N$. There is an algorithm with $(\nnz(A) + d^2)\cdot k\poly(\log k)$ runtime that outputs a rescaled subset of columns $U \in \R^{d \times \widetilde{O}(k)}$ of $A$ and a right factor $V \in \R^{\widetilde{O}(k) \times n}$ for which $V = \min_{V}\|UV - A\|_{p, 2}$, such that with probability $1 - o(1)$, 
\begin{align*}
    \|UV - A\|_{p, 2} \leq O(1) \cdot \min_{\text{ rank-k}\ A_k} \|A_k - A\|_{p, 2}
\end{align*}
\end{customthm}

\begin{proof}
\textbf{Approximation Factor. }
First notice that the minimizer $\hat{X}$ of $\|\hat{X}SAS' - AS'\|_{p, 2}$ has to be in the column span of $AS'$. Thus we can write $\hat{X} = (AS')Y$ for some matrix $Y$. By \textbf{Theorem}~\ref{thm:subset_cols_lewis_weight}, 
\begin{align*}
    \|\hat{X}SA - A\|_{p, 2} = \|(AS')YSA - A\|_{p, 2} \leq \Theta(1) \min_{\textrm{rank-k }A_k}\|A - A_k\|_{p, 2}
\end{align*}

We denote $YSA = V$. We take the left factor $U = AS'$ and solving for $\min_{V} \|UV - A\|_{p, 2}$ will give us a $\Theta(1)$ approximation to $ \min_{\textrm{rank-k }A_k}\|A - A_k\|_{p, 2}$. A good minimizer for the right factor $V$ in the Euclidean space is $V = (AS')^{\dag}A$. This concludes our result. Notice that since $S'$ is a sampling matrix with $\widetilde{O}(k)$ columns, we get a rank-$k$ left factor $U$ as a subset of columns of $A$ as desired.

\textbf{Running time. } 
First notice that $S$ is a sparse embedding matrix with $\poly(\log k)$ non-zero entries. Thus computing $SA$ takes time $\nnz(A) \cdot \poly(\log k)$. By~\cite{cp2015lewisweights},
computing the Lewis weights of $SA$ takes time $\nnz(SA) + \poly(k) \leq \nnz(A) \poly(\log k) + \poly(k)$, and computing the output left factor $U = AS'$ takes time $\nnz(A)$. 
Computing $(AS')^{\dag}$ takes time $d^2\cdot k\poly(\log
k)$. Computing the right factor $V = (AS')^{\dag}A$ takes $\nnz(A)k\poly(\log
k)$.
Therefore, the overall running time is $(\nnz(A) + d^2)\cdot k\poly(\log k)$.

\textbf{Failure Probability. } For the failure probability of the first step, note that we can select the parameter $\delta$ for the sparse embedding matrix $S$ to be $\frac{1}{\poly(d)}$, at the cost of a logarithmic factor in the number of rows. Similarly, the failure probability of the second step is $o(1)$ as mentioned in Theorem \ref{thm:subset_cols_lewis_weight}.
\end{proof}

\newpage
\section{The Streaming Algorithm and Full Analysis (Section 4)}
We give a single-pass streaming algorithm for $k$-CSS$_{p}$ ($1 \leq p < 2$) in \textbf{Algorithm}~\ref{alg:streaming}, which is based on the Merge-and-Reduce framework (see, e.g.~\cite{mcgregor14_graph_stream_survey}) previously used in graph streaming algorithms, for instance.

\begin{algorithm}[H]
\caption{A one-pass streaming algorithm for bi-criteria $k$-CSS$_p$ in the column-update streaming model.}
\label{alg:streaming}
\begin{algorithmic}
\STATE \textbf{Input: } A matrix $A \in \R^{d \times n}$ whose columns arrive one at a time, $p \in [1, 2)$, rank $k \in \N$ and batch size $r = k \poly(\log(nd))$.
\STATE \textbf{Output: } A subset of $\widetilde{O}(k)$ columns $A_I$.
\STATE Generate a dense $p$-stable sketching matrix $S \in \mathbb{R}^{k\poly(\log(nd)) \times d}$.
\STATE A list $C \leftarrow \{\}$ of strong coresets and corresponding level numbers.
\STATE A list $D \leftarrow \{\}$ of unsketched column subsets of $A$ corresponding to the list $C$ of strong coresets and their level numbers.
\STATE A list of sketched columns $M \leftarrow \{\}$.
\STATE A list of corresponding unsketched columns $L \leftarrow \{\}$.
\FOR{Each column $A_{*j}$ seen in the data stream}
\STATE $M \leftarrow M \cup SA_{*j}$
\STATE $L \leftarrow L \cup A_{*j}$
\IF{length of $M == r$}
\STATE $C \leftarrow C \cup (M, 0), D \leftarrow D \cup L$
\STATE $C, D \leftarrow$ \textbf{Recursive Merge}($C, D$) \COMMENT{// Algorithm~\ref{alg:recursive_merge}}
\STATE $M \leftarrow \{\}, L \leftarrow \{\}$
\ENDIF
\ENDFOR
\STATE $C \leftarrow C \cup (M, 0), D \leftarrow D \cup L$
\STATE $C, D \leftarrow$ \textbf{Recursive Merge}($C, D$) \COMMENT{// Algorithm~\ref{alg:recursive_merge}}
\STATE $I \gets $ The column indices obtained by applying to the concatenation of the strong coresets in $C$. (Here, $I$ is a set of column indices and $|I| = k \cdot \poly(\log k)$.)
\STATE Finally, recover $A_I$ by mapping the selected indices $I$ to unsketched columns in $D$.

\end{algorithmic}
\end{algorithm}

\begin{algorithm}[H]
\caption{Recursive Merge}
\label{alg:recursive_merge}
\begin{algorithmic}
\STATE \textbf{Input: } A list $C$ of strong coresets and their corresponding level numbers. A list $D$ of (unsketched) column subsets of $A$ corresponding to the sketched columns in $C$.
\STATE \textbf{Output: } New $C$, where the list of strong coresets is greedily merged, and the corresponding new $D$.
\IF{length of $C$ == 1}
\STATE Return $C, D$.
\ELSE
\STATE Let $(C_{-2}, l_{-2}), (C_{-1}, l_{-1})$ be the second to last and last elements of the list $C$ (i.e. the second to last and last sets of columns $C_{-2}, C_{-1}$ with their corresponding level $l_{-2}, l_{-1}$ from list $C$).
\IF{$l_{-2}$ == $l_{-1}$}
\STATE Remove $(C_{-2}, l_{-2}), (C_{-1}, l_{-1})$ from $C$.
\STATE Remove the corresponding $D_{-2}, D_{-1}$ from $D$.
\STATE Compute a strong coreset $C_0$ of (i.e., sample and rescale columns from) $C_{-2} \cup C_{-1}$, as described in the proof of Lemma \ref{lemma:coreset} --- $C_0$ has at least $k \cdot \poly(\log nd)$ columns. Record the original indices $I$ in $C_{-2} \cup C_{-1}$ of the columns selected in $C_0$.
\STATE Map indices $I$ to columns in $D_{-2} \cup D_{-1}$ to form a new subset of columns $D_0$.
\STATE $C \leftarrow C \cup (C_0, l_{-1} + 1), D \leftarrow D \cup D_0$.
\STATE \textbf{Recursive Merge}($C, D$).
\ELSE
\STATE Return $C, D$.
\ENDIF
\ENDIF
\end{algorithmic}
\end{algorithm}

To analyze our streaming algorithm, we first need \textbf{Lemma}~\ref{lemma:approx_merge} to show how the approximation error propagates through each level of the binary tree induced by the merge operator. It shows how a strong coreset $C_0$, computed at level $l$ of the tree, approximates the projection error for the union of all columns at the leaves of the subtree rooted at $C_0$.

\begin{customlemma}{5}[Approximation Error from Merging]
\label{lemma:approx_merge}
Let $C_0$ be a strong coreset constructed in a step of Algorithm \ref{alg:recursive_merge} (\textbf{Recursive Merge}), i.e. $C_{-1}$ and $C_{-2}$ are two strong coresets at levels $l - 1$ of the binary tree, and $C_0$ is a strong coreset at level $l$ obtained by taking a strong coreset for the concatenation of $C_{-1}$ and $C_{-2}$. Then,
\begin{itemize}
    \item If $C_0$ is a strong coreset of $C_{-1} \cup C_{-2}$, constructed as described in Lemma \ref{lemma:coreset}, with at least $\frac{k}{\gamma^2} \cdot \poly(\log (nd/\gamma))$ columns, then with probability at least $1 - \frac{1}{n^2}$,
    $$\min_V \|UV - C_0\|_{p, 2} = (1 \pm \gamma) \min_V \|UV - (C_{-1} \cup C_{-2})\|_{p, 2}$$
    for all matrices $U$ of rank at most $k$, simultaneously.
    \item If $\mathbf{C}$ is a strong coreset at level $l$ of the binary tree, and $M$ is the union of the columns of $SA$ represented as leaves of the subtree rooted at $\mathbf{C}$, then with probability at least $1 - \frac{q}{n^2}$,
    $$\min_V \|UV - \mathbf{C}\|_{p, 2} = (1 \pm \gamma)^l \|UV - M\|_{p, 2}$$
    for all rank-$k$ matrices $U$, simultaneously, as long as the coresets computed in each merge operation have at least $\frac{k}{\gamma^2} \cdot \poly(\log(nd/\gamma))$ columns. Here, $q$ is the number of nodes in the subtree rooted at $\mathbf{C}$.
\end{itemize}
\end{customlemma}

\begin{proof}

The first statement is a direct consequence of Lemma \ref{lemma:coreset} --- we are applying Lemma \ref{lemma:coreset}, setting approximation error $\eps = \gamma$, failure probability $\delta = \frac{1}{n^2}$ and rank $k = k$. Note that the coreset construction described in the proof of Lemma \ref{lemma:coreset} requires $O(d \log d/\eps^2) \cdot \log(1/\delta)$ columns to be sampled, but in our case, $d = k \cdot \poly(\log nd)$, since $SA$ has $k \cdot \poly(\log nd)$ rows.

We show the second statement by using the first statement, together with induction on the number of merge operations $l$, and a union bound over all the merge operations performed. The union bound is simply as follows: note that for each node $C_0$ in the subtree rooted at $\mathbf{C}$, with probability $1 - \frac{1}{n^2}$,
$$\min_V \|UV - C_0\|_{p, 2} = (1 \pm \gamma) \min_V \|UV - (C_{-1} \cup C_{-2})\|_{p, 2}$$
for all $U$ of rank $k$, where $C_{-1}$ and $C_{-2}$ are the two coresets corresponding to the children of $C_0$ (this is simply by the first statement of Lemma \ref{lemma:approx_merge}). 
Thus, by a union bound, this holds simultaneously for all nodes $C_0$ in the subtree rooted at $\mathbf{C}$, with probability at least $1 - \frac{q}{n^2}$, where $q$ is the number of nodes in the subtree rooted at $\mathbf{C}$. In other words, let $\calE$ be the event that for all nodes $C_0$, $C_0$ is a strong coreset of $C_{-1} \cup C_{-2}$ --- then $\calE$ occurs with probability at least $1 - \frac{q}{n^2}$. Note that since $q \leq 2n$, a failure probability $\delta = \frac{1}{n^2}$ suffices to pay for the union bound.

Assuming $\calE$ holds, we can apply induction. First let us consider the base case where $l = 1$ --- here, the desired result clearly holds since it is implied by the event $\calE$ (since the subtree rooted at $\mathcal{C}$ only has two other nodes, $C_{-1}$ and $C_{-2}$). 

Now suppose $l > 1$, and the second statement of Lemma \ref{lemma:approx_merge} holds for $\mathbf{C}$ at levels less than $l$. Now, suppose $\mathbf{C}$ is at level $l$, and let $C_{-1}$ and $C_{-2}$ be the coresets corresponding to the children of $\mathbf{C}$ in the binary tree. Let $M_{-1}, M_{-2}$ each be the contiguous submatrices of $SA$ represented by the leaves of the subtrees rooted at $C_{-1}$ and $C_{-2}$ respectively. Note that by its definition, $M = M_{-1} \cup M_{-2}$, where $M$ is as defined in the second statement of Lemma \ref{lemma:approx_merge}. 
Let $q_1, q_2$ be the number of nodes in the subtree rooted at $C_{-1}$ and $C_{-2}$ respectively, and $q = q_1 + q_2$ be the number of nodes in the subtree rooted at $\mathbf{C}$.
By the induction hypothesis, since $C_{-1}$ and $C_{-2}$ are at levels $l - 1$, for all rank-$k$ matrices $U$, with probability $1-\frac{q_1}{n^2}$,
$$\min_V \|UV - C_{-1}\|_{p, 2} = (1 \pm \gamma)^{l - 1} \min_V \|UV - M_{-1}\|_{p, 2}$$
and with probability $1-\frac{q_2}{n^2}$,
$$\min_V \|UV - C_{-2}\|_{p, 2} = (1 \pm \gamma)^{l - 1} \min_V \|UV - M_{-2}\|_{p, 2}$$
Thus, for any matrix $U$ of rank $k$, with probability $1-\frac{q}{n^2}$,
\begin{align*}
\min_V \|UV - \mathbf{C}\|_{p, 2}
& = (1 \pm \gamma) \min_V \|UV - (C_{-1} \cup C_{-2})\|_{p, 2} \\
& = (1 \pm \gamma) \Big(\min_V \|UV - C_{-1}\|_{p, 2}^p + \min_V \|UV - C_{-2}\|_{p, 2}^p\Big)^{1/p} \\
& = (1 \pm \gamma) \cdot\Big( (1 \pm \gamma)^{p(l - 1)} \cdot \min_V \|UV - M_{-1}\|_{p, 2}^p + (1 \pm \gamma)^{p(l - 1)} \cdot \min_V \|UV - M_{-2}\|_{p, 2}^p \Big)^{1/p} \\
& = (1 \pm \gamma)^l \cdot \Big( \min_V \|UV - M_{-1} \|_{p, 2}^p + \min_V \|UV - M_{-2}\|_{p, 2}^p \Big)^{1/p} \\
& = (1 \pm \gamma)^l \min_V \|UV - M\|_{p, 2}
\end{align*}
Here, the first equality is because the event $\calE$ occurs (meaning $\mathbf{C}$ is a $(1 \pm \gamma)$-approximate strong coreset for $C_{-1} \cup C_{-2}$). The second is by the definition of the $\ell_{p, 2}$ norm (since after raising it to the $p^{th}$ power, it decomposes across columns). The third equality is by the induction hypothesis, and the last equality is again because the $p^{th}$ power of the $\ell_{p, 2}$ norm decomposes across columns. This completes the proof of Lemma \ref{lemma:approx_merge}.
\end{proof}

Finally, using Lemma \ref{lemma:approx_merge}, we give a full analysis of our single-pass streaming algorithm, \textbf{Algorithm \ref{alg:streaming}}.

\begin{customthm}{2}[A One-pass Streaming Algorithm for $k$-CSS$_p$]
\label{thm:streaming_algo}
    In the column-update streaming model, let $A \in \mathbb{R}^{d \times n}$ be the data matrix whose columns arrive one at each time in a data stream. Given $p \in [1, 2)$ and a desired rank $k \in \mathbb{N}$, Algorithm~\ref{alg:streaming} outputs a subset of columns $A_I \in \mathbb{R}^{d \times k\poly(\log(k))}$ in $\widetilde{O}(\nnz(A)k + nk + k^3)$ time, such that with probability $1 - o(1)$,
    \begin{align*}
        \min_{V}\|A_IV - A\|_p \leq \widetilde{O}(k^{1/p - 1/2})\min_{L \subset [n], |L|=k}\|A_LV - A\|_p
    \end{align*}
    Moreover, Algorithm \ref{alg:streaming} only needs to process all columns of $A$ once and uses $\widetilde{O}(dk)$ space throughout the stream.
\end{customthm}

\begin{proof} \textbf{Approximation Factor:} 
Note that $n/r$, the number of leaves, might not be a power of $2$, and so we might get a list of coresets instead of a single one at the end of the stream.
Consider the list $C$, of coresets and their corresponding level numbers, left at the end of the stream, before Algorithm~\ref{alg:streaming} applies \textbf{Recursive Merge} after the data stream for the last time (to get a single output coreset). Denote these coresets by $\mathbf{C}_1, \mathbf{C}_2, \ldots, \mathbf{C}_t$, where $t = |C|$.

First, since strong coresets are subsets consisting of subsampled and reweighted columns of $SA$, we can let $SAT$ denote the concatenation of the $\mathbf{C}_i$, where $T$ is a sampling and reweighting matrix. In addition, note that for each $\mathbf{C}_i$, the subtree rooted at $\mathbf{C}_i$ has depth at most $\log(n/r)$, since all leaves represent contiguous blocks of $r$ columns, and each coreset also has $r$ columns. We bound $\min_V \|A_IV - A\|_p$ using Lemma \ref{lemma:approx_merge} and these observations. In the following, let $L \subset [n]$, $|L| = k$ denote the subset of $k$ columns of $A$ that gives the minimum $k$-CSS$_p$ cost, i.e. the one minimizing $\min_V \|A_LV - A\|_p$. First note that with probability $1-o(1)$,
\begin{align*}
\min_V \|A_IV - A\|_p
        &\leq \|A_IV' - A\|_p  &\text{(where $V' = \argmin_V \|SA_IV - SA\|_{p, 2}$)}\\
        & \leq \|SA_I V' - SA\|_p  &\text{By \textbf{Lemma}~\ref{lemma:pstable}}\\
        &=\widetilde{O}(k^{\frac{1}{p} - \frac{1}{2}})\|SA_IV' - SA\|_{p, 2}  &\text{By \textbf{Lemma}~\ref{lemma:norm}}
\end{align*}
where $I$ is the subset of column indices output by the bi-criteria $O(1)$-approximation algorithm for $k$-CSS$_{p, 2}$ that we apply at the end of Algorithm \ref{alg:streaming}. Let $(SAT)^*$ denote the best rank $k$ approximation to $SAT$ in the $\ell_{p, 2}$-norm. By \textbf{Theorem}~\ref{thm:bicriteria_k_css_p2}, with probability $1-o(1)$,
\begin{align*}
\widetilde{O}(k^{\frac{1}{p} - \frac{1}{2}})\|SA_I V' - SA\|_{p, 2}
&\leq \widetilde{O}(k^{\frac{1}{p} - \frac{1}{2}})\cdot O(1) \|(SAT)^* - SAT\|_{p, 2}\\
&\leq \widetilde{O}(k^{\frac{1}{p} - \frac{1}{2}})\min_{V} \|SA_LV - SAT\|_{p, 2}
\end{align*}
Now, recall that $SAT$ is the concatenation of the coresets $\mathbf{C}_i$ --- since the $p^{th}$ power of the $\ell_{p, 2}$ norm decomposes across columns,
\begin{align*}
\min_V \|SA_L V - SAT\|_{p, 2}^p
& = \sum_{i = 1}^t \min_V \|SA_L V - \mathbf{C}_i \|_{p, 2}^p
\end{align*}
Suppose that the subtree rooted at $\mathbf{C}_i$ has $q_i$ nodes, and has depth $l_i$, and in addition, let $M_i$ be the contiguous range of columns of $SA$ which are represented by the leaves of the subtree rooted at $\mathbf{C}_i$. Then, by the second statement of Lemma \ref{lemma:approx_merge}, with probability at least $1 - \frac{q_i}{n^2}$,
$$\min_V \|SA_L V - \mathbf{C}_i\|_{p, 2} = (1 \pm \gamma)^{l_i} \min_V \|SA_LV - M_i\|_{p, 2}$$
Thus, by a union bound, this occurs simultaneously for \textit{all} $i \in [t]$ with probability at least $1 - \sum_{i = 1}^t \frac{q_i}{n^2} = 1 - \frac{1}{n}$ (since the subtrees rooted at the $\mathbf{C}_i$'s together contain all coresets ever created by the streaming algorithm). Thus, with probability at least $1 - \frac{1}{n}$,
\begin{equation}
\begin{split}
\min_V \|SA_L V - SAT\|_{p, 2}^p
& = \sum_{i = 1}^t \min_V \|SA_L V - \mathbf{C}_i \|_{p, 2}^p \\
& = \sum_{i = 1}^t (1 \pm \gamma)^{pl_i} \min_V \|SA_L V - M_i\|_{p, 2}^p \\
& = (1 \pm \gamma)^{p\log(n/r)} \min_V \|SA_L V - SA\|_{p, 2}^p
\end{split}
\end{equation}
where the last equality is because $SA$ is the concatenation of the $M_i$, by their definition, and the subtree rooted at $\mathbf{C}_i$ has depth at most $\log(n/r)$. Taking $p^{th}$ roots, with probability $1-1/n$,
$$\min_V \|SA_L V - SAT\|_{p, 2} = (1 \pm \gamma)^{\log(n/r)} \min_V \|SA_L V - SA\|_{p, 2}$$
Setting $\gamma = \frac{\eps}{2\log(n/r)}$, we obtain the following with probability $1-1/n$,
$$\min_V \|SA_L V - SAT\|_{p, 2} = (1 \pm \eps) \min_V \|SA_L V - SA\|_{p, 2}$$
Thus, by a union bound over all the events, with probability $1-o(1)$,
\begin{align*}
\widetilde{O}(k^{\frac{1}{p} - \frac{1}{2}})\min_{V} \|SA_LV - SAT\|_{p, 2}
&\leq \widetilde{O}(k^{\frac{1}{p} - \frac{1}{2}}) \min_{V}\|SA_LV - SA\|_{p, 2}\\
&\leq \widetilde{O}(k^{\frac{1}{p} - \frac{1}{2}}) \min_{V}\|SA_LV - SA\|_{p} &\text{By \textbf{Lemma}~\ref{lemma:norm}}\\
&\leq \widetilde{O}(k^{\frac{1}{p} - \frac{1}{2}})\cdot \log^{1/p}(nd) \min_{V}\|A_LV - A\|_p &\text{By \textbf{Lemma}~\ref{lemma:bestcss}}
\end{align*}
and we conclude that $\min_V \|A_IV - A\|_p \leq \widetilde{O}(k^{\frac{1}{p} - \frac{1}{2}})\min_V \|A_LV - A\|_p$ with probability $1-o(1)$.

\paragraph{Space Complexity:} Since the nodes are merged greedily during the data stream, and within the list $C$ are in decreasing order according to their level, at most one node at each level $l$ is in the list $C$ at any time. Since the number of columns at each node in the binary tree is $\widetilde{O}(k)$ (i.e. the size of one coreset), the total space complexity is $\widetilde{O}(kd)$, suppressing logarithmic factors in $n, d, k$.

\paragraph{Running time:} Since generating a single $p$-stable random variable takes $O(1)$ time, generating the dense $p$-stable sketching matrix $S$ takes $O(dk \cdot \poly(\log(nd)))$ time. Computing $SA_{*j}, \forall j \in [n]$ takes a total of $O(\nnz(A)\cdot k\poly(\log (nd)))$ time. By Lemma~\ref{lemma:coreset}, merging two coresets, which are matrices of size $\widetilde{O}(k) \times \widetilde{O}(k)$, takes $\widetilde{O}(k^2)$ time. The merging operation is performed at most $O(n/k)$ times, so the total time it takes for merging is $\widetilde{O}(nk)$. By Theorem~\ref{thm:bicriteria_k_css_p2}, the $k$-CSS$_{p, 2}$ algorithm takes at most $k^3\poly(\log(knd))$ time to find the final subset of columns. Since the number of selected columns is $k\poly(\log k)$, it takes $k\poly(\log k)$ time to map the indices and recover the original columns $A_I$. Therefore, the overall running time is $\widetilde{O}(\nnz(A)k + nk + k^3)$, suppressing a low degree polynomial dependency on $\log(knd)$.
\end{proof}

\newpage
\section{The Distributed Protocol and Full Analysis (Section 5)}
We give our one-round distributed protocol for $k$-CSS$_{p}$ ($1 \leq p < 2$) in \textbf{Algorithm}~\ref{alg:protocol} and the full analysis below.

\begin{algorithm}[H]
   \caption{A one-round protocol for bi-criteria $k$-CSS$_p$ in the column partition model}
   \label{alg:protocol}
\begin{algorithmic}
    \STATE \textbf{Initial State:} \\
    Server $i$ holds matrix $A_i \in \mathbb{R}^{d \times n_i}$, $\forall i \in [s]$.
   \STATE {\bfseries Coordinator:}\\
     Generate a dense $p$-stable sketching matrix $S \in \mathbb{R}^{k \textrm{ poly}(\log (nd)) \times d}$.\\ Send $S$ to all servers.
   \STATE {\bfseries Server $i$:}\\ Compute $SA_i$. \\
   Let the number of samples in the coreset be $t = O(k \textrm{poly}(\log (nd))\log(1/\delta))$.
    Construct a coreset of $SA_i$ under the $\ell_{p,2}$ norm by applying a sampling matrix $D_i$ of size $n_i \times t$ and a diagonal reweighting matrix $W_i$ of size $t \times t$. \\
    Let $T_i = D_iW_i$.
    Send $SA_iT_i$ along with $A_iD_i$ to the coordinator.
    \STATE {\bfseries Coordinator:}\\ Column-wise stack $SA_iT_i$ to obtain $SAT = [SA_1T_1, SA_2T_2, \dots, SA_sT_s]$.\\
    Apply $k$-CSS$_{p, 2}$ on $SAT$ to obtain the indices $I$ of the subset of selected columns with size $O(k \cdot \poly(\log k))$. \\
    Since $D_i$'s are sampling matrices, the coordinator can recover the original columns of $A$ by mapping indices $I$ to $A_iD_i$'s. \\
    Denote the final selected subset of columns by $A_I$.
    Send $A_I$ to all servers.
    \STATE {\bfseries Server $i$:}\\
    Solve $\min_{V_i} \|A_IV_i - A_i\|_p$ to obtain the right factor $V_i$. $A_I$ and $V$ will be factors of a rank-$k \cdot \poly(\log k)$ factorization of $A$, where $V$ is the (implicit) column-wise concatenation of the $V_i$.
\end{algorithmic}
\end{algorithm}

\begin{customthm}{3}[A One-round Protocol for Distributed $k$-CSS$_p$]
\label{thm:analysis_of_distributed_protocol}
In the column partition model, let $A \in \R^{d \times n}$ be the data matrix whose columns are partitioned across $s$ servers and suppose server $i$ holds a subset of columns $A_i \in \mathbb{R}^{d \times n_i}$, where $n = \sum_{i \in [s]} n_i$. Then, given $p \in [1, 2)$ and a desired rank $k \in \N$, Algorithm \ref{alg:protocol} outputs a subset of columns $A_I \in \R^{d \times k \poly(log(k))}$ 
in $\widetilde{O}(\nnz(A)k + kd + k^3)$ time, such that with probability $1 - o(1)$,
$$\min_V \|A_I V - A\|_p \leq \widetilde{O}(k^{1/p - 1/2}) \min_{L \subset [n], |L| = k} \|A_L V - A\|_p$$
Moreover, Algorithm \ref{alg:protocol} uses one round of communication and $\widetilde{O}(sdk)$ words of communication.
\end{customthm}

\begin{proof}
\textbf{Approximation Factor:} 
In the following proof, let $L \subset [n], |L| = k$ denote the best possible subset of $k$ columns of $A$ that gives the minimum $k$-CSS$_p$ cost, i.e., the cost $\min_V \|A_LV - A\|_p$ achieves minimum. First, note that with probability $1-o(1)$,
\begin{align*}
\min_V \|A_IV - A\|_p 
& \leq \|A_I V' - A\|_p &\text{$V' := \argmin_{V} \|SA_IV - SA\|_{p, 2}$}\\ 
&\leq \|S A_I V' - SA \|_p \quad  &\text{By \textbf{Lemma}~\ref{lemma:pstable}} \\
&= \widetilde{O}(k^{\frac{1}{p} - \frac{1}{2}}) \|S A_I V' - SA\|_{p,2}  &\text{By \textbf{Lemma}~\ref{lemma:norm}}
\end{align*}
$SA_I$ is the selected columns output from the bi-criteria $O(1)$-approximation $k$-CSS$_{p, 2}$ algorithm. Let $(SAT)^*$ denote the best rank $k$ approximation to $SAT$. By \textbf{Theorem}~\ref{thm:bicriteria_k_css_p2}, with probability $1-o(1)$,
\begin{align*}
\widetilde{O}(k^{\frac{1}{p} - \frac{1}{2}}) \|S A_I V' - SA\|_{p,2}
& \leq \widetilde{O}(k^{\frac{1}{p} - \frac{1}{2}})\cdot O(1) \|(SAT)^* - SAT\|_{p, 2}\\
&\leq \widetilde{O}(k^{\frac{1}{p} - \frac{1}{2}}) \min_V \|SA_LV - SAT\|_{p, 2}
\end{align*}
Note that $SAT = [SA_1T_1, \dots, SA_sT_s]$ is a column-wise concatenation of all coresets of $SA_i$, $\forall i \in [s]$. By \textbf{Lemma}~\ref{lemma:coreset}, and a union bound over the $i \in [s]$, with probability $1 - s\delta = 1-o(1)$,
\begin{align*}
(\min_V \|SA_LV - SAT\|_{p, 2}^p)^{1/p}
&= (\sum_{i=1}^s \min_{V_i} \|SA_LV_i - SA_iT_i\|_{p, 2}^p)^{1/p}\\
&= (\sum_{i=1}^s (1\pm \epsilon)^p \min_{V_i}\|SA_LV_i - SA_i\|_{p, 2}^p)^{1/p}\\
&= (1 \pm \epsilon) (\sum_{i=1}^s \min_{V_i}\|SA_LV_i - SA_i\|_{p,2}^p)^{1/p}\\
&= (1\pm \epsilon) \min_V \|SA_LV - SA\|_{p, 2}
\end{align*}
Hence, by a union bound over all the events, with probability $1-o(1)$,
\begin{align*}
\widetilde{O}(k^{\frac{1}{p} - \frac{1}{2}}) \min_V \|SA_LV - SAT\|_{p, 2}
&\leq \widetilde{O}(k^{\frac{1}{p} - \frac{1}{2}}) \min_V \|SA_LV - SA\|_{p, 2}\\
&\leq \widetilde{O}(k^{\frac{1}{p} - \frac{1}{2}}) \min_V \|SA_LV - SA\|_{p} & \text{By \textbf{Lemma}~\ref{lemma:norm}}\\
&\leq \widetilde{O}(k^{\frac{1}{p} - \frac{1}{2}}) \cdot \log^{1/p}(nd)\min_V \|A_LV - A\|_p & \text{By \textbf{Lemma}~\ref{lemma:bestcss}}
\end{align*}
Thus, $\min_V \|A_IV - A\|_p \leq \widetilde{O}(k^{\frac{1}{p} - \frac{1}{2}}) \min_V \|A_LV - A\|_p$ with probability $1-o(1)$.

\paragraph{Communication Cost:}
Sharing the dense $p$-stable sketching matrix $S$ with all servers costs $O(sdk \cdot \poly(\log(nd)))$ communication (this can be removed with a shared random seed). Sending all coresets $SA_iT_i$ ($\forall i \in [s]$) and the corresponding columns $A_iD_i$ to the coordinator costs $\widetilde O(sdk)$ communication, since each coreset contains only $\widetilde{O}(k)$ columns (note that since we compute $s$ coresets, each coreset computation should have a failure probability of $\frac{1}{\poly(s)}$ to allow us to union bound --- this only increases the communication cost by a $\log(s)$ factor, however).
Finally, the coordinator needs $\widetilde{O}(sdk)$ words of communication to send the $\widetilde{O}(k)$ selected columns to each server. Therefore, the overall communication cost is $\widetilde{O}(sdk)$, suppressing a logarithmic factor in $n, d$.
\paragraph{Running time:}
Since generating a single $p$-stable random variable takes $O(1)$ time, generating the dense $p$-stable sketching matrix $S$ takes $O(dk \cdot \poly(\log(nd)))$ time. 
Computing all $SA_i$'s takes $O(\nnz(A)k \cdot \poly(\log (nd)))$ time. 
By Lemma~\ref{lemma:coreset}, computing all coresets for $SA_iT_i, \forall i \in [s]$ takes time $\widetilde{O}(kd)$.
By Theorem~\ref{thm:bicriteria_k_css_p2}, the $k$-CSS$_{p, 2}$ algorithm takes time $(\nnz(SAT) + k^2\poly(\log nd)) \cdot k\poly(\log k) \leq k^3\poly(\log (knd))$ to find the set of selected columns. Since the number of selected columns is $O(k \poly(\log k))$, it then takes the protocol $O(k \poly(\log k))$ time to map the indices and recover the original columns $A_I$. Therefore, the overall running time is $\widetilde{O}((\nnz(A)k + kd + k^3)$, suppressing a low degree polynomial dependency on $\log (knd)$.
After the servers receive $A_I$, it is possible to solve $\min_{V_i}\|A_IV_i - A_i\|_{p}$ in $\widetilde{O}(\nnz(A_I)) + \poly(d\log n)$ time
, $\forall i \in [s]$ due to~\cite{ww2019lpobliviousemb, ycrm2018wsgd}.
\end{proof}

\newpage
\section{A High Communication Cost Protocol for $k$-CSS$_p$ ($p \geq 1$) (Section 1)}
\label{appendix:badprotocol}

We describe in detail the naive protocol for distributed $k$-CSS$_p$ mentioned in Section 1, which works for all $p \geq 1$, in the column partition model, and which achieves an $O(k^2)$-approximation to the best rank-$k$ approximation, using $O(1)$ rounds and polynomial time but requiring a communication cost that is linear in $n + d$. The inputs are a column-wise partitioned data matrix $A \in \mathbb{R}^{d \times n}$ distributed across $s$ servers and a rank parameter $k \in \mathbb{N}$. Each server $i$ holds part of the data matrix $A_i \in \mathbb{R}^{d \times n_i}$, $\forall i \in [s]$, and such that $\sum_{i=1}^s n_i = n$.

We use a single machine, polynomial time bi-criteria $k$-CSS$_p$ algorithm as a subroutine of the protocol, e.g., Algorithm 3 in ~\cite{cgklpw2017lplra}, which selects a subset of $\widetilde{O}(k)$ columns $A_T$ of the data matrix $A \in \mathbb{R}^{d \times n}$ in polynomial time, for which $\min_X \|A_TX - A\|_p \leq O(k) \min_{\textrm{rank-k }A_k}\|A - A_k\|_p$, $\forall p \geq 1$.

\begin{algorithm}
\caption{A protocol for $k$-CSS$_p$ ($p \geq 1$)}
\begin{algorithmic}
\STATE \textbf{Initial State: } Server $i$ holds matrix $A_i \in \mathbb{R}^{d \times n_i}$, $\forall i \in [s]$.
\STATE {\bfseries Server $i$:} \\
Apply polynomial time bi-criteria $k$-CSS$_p$ on $A_i$ to obtain a subset $B_i$ of columns as the left factor. Solve for the right factor $V_i = \arg\min_{V_i} \|U_iV_i - A_i\|_p$. Send $U_i$ and $V_i$ to the coordinator.
\STATE {\bfseries Coordinator:}\\
Column-wise concatenate the $U_iV_i$ to obtain $UV = [U_1V_1, \dots, U_sV_s]$. Apply a polynomial time bi-criteria $k$-CSS$_p$ algorithm on $UV$ to obtain a subset $C$ of columns. Send $C$ to each server.
\STATE {\bfseries Server $i$:}\\
Solve $\min_{X_i} \|CX_i - A_i\|_p$ to obtain the right factor.
\end{algorithmic}
\end{algorithm}

\paragraph{Approximation Factor.}
Let $UV$ denote the column-wise concatenation of the $U_iV_i$.
Let $X^* = \arg\min_X \|CX - A\|_p$. Then,
\begin{align*}
    \|CX^* - AS\|_p &\leq \|CX^* - UV\|_p + \|UV - A\|_p &\text{By the triangle inequality}\\
    &\leq O(k)\min_{\textrm{rank-k }(UV)_k}\|UV - (UV)_k\|_p + \|UV - A\|_p &\text{By the $O(k)$-approximation of $k$-CSS$_p$}\\
    &\leq O(k)\|UV - A\|_p\\
    &=O(k)(\sum_{i=1}^s \|U_iV_i- A_i\|_p)\\
    &\leq O(k)(\sum_{i=1}^s O(k)\min_{\textrm{rank-k }A_i^*}\|A_i - A_i^*\|_p) &\text{By the $O(k)$-approximation of $k$-CSS$_p$}\\
    &\leq O(k^2)\sum_{i=1}^s\|A_i - (A^*)_i\|_p &\text{$A^* = \arg\min_{\textrm{rank-k }A^*}\|A - A^*\|_p$}\\
    &= O(k^2)\|A - A^*\|_p
\end{align*}

\paragraph{Communication Cost. } Since $U_i \in \mathbb{R}^{d \times \widetilde{O}(k)}$ and $V_i \in \mathbb{R}^{\widetilde{O}(k) \times n_i}$, sending $U_i$ and $V_i$ costs $\widetilde{O}(skn)$. Since $C \in \mathbb{R}^{d \times \widetilde{O}(k)}$, sending $C$ from the coordinator to all servers costs $\widetilde{O}(sdk)$. Thus the overall communication cost is $\widetilde{O}(s(n+d)k)$.

\paragraph{Running time.} According to~\cite{cgklpw2017lplra}, applying the $k$-CSS$_p$ algorithm and solving $\ell_p$ regression can both be done in polynomial time. Thus the overall running time of the protocol is polynomial.

\paragraph{Problems with this protocol.} Although this protocol works for all $p \geq 1$, a communication cost that linearly depends on the large dimension $n$ is too high, and furthermore, the output $C$ is not a subset of columns of $A$, because the protocol applies $k$-CSS$_p$ on a concatenation of both the left factor $U_i$ and the right factor $V_i$. $U_i$ is a subset of columns of $A_i$ but $V_i$ is not necessarily a sampling matrix. One might wonder whether it is possible that each server only sends $U_i$ and the coordinator then runs $k$-CSS$_p$ on a concatenation of the $U_i$. This will not necessarily give a good approximation to $\min_{\text{rank-k }A_k}\|A - A_k\|_p$ because the columns not selected in the $U_i$ locally on each server might become globally important. Finally, although it is possible to improve the approximation factor to $\widetilde{O}(k)$ by making use of an $\widetilde{O}(\sqrt{k})$-approximation algorithm for $\ell_p$-low rank approximation that also selects a subset of columns \cite{mw2019opt_l1_css_lra}, this protocol would still suffer from all of the aforementioned problems.

\newpage
\section{Greedy $k$-CSS$_{p, 2}$ and Full Analysis (Section 6)} \label{appendix:greedy_analysis}

\begin{algorithm}
\caption{Greedy $k$-CSS$_{p, 2}$. }
\label{algorithm:fast_greedy_kCSSp2}
\begin{algorithmic}
\STATE \textbf{Input:} The data matrix $A \in \R^{d \times n}$. A desired rank $k \in \mathbb{N}$ and $p \in [1, 2)$. The number of columns to be selected $r \leq n$. Failure probability $\delta \in (0, 1)$.
\STATE \textbf{Output:} A subset of $r$ columns $A_T$. 
\STATE Indices of selected columns $T \leftarrow \{\}$.
\FOR {$i = 1$ to $r$}
    \STATE $C \leftarrow$ Sample $\frac{n}{k}\log(\frac{1}{\delta})$ indices from $\{1, 2, \dots, n\} \setminus T$ uniformly at random.
    \STATE Column index $j^* \gets \argmin_{j \in C} (\min_{V} \|A_{T \cup j}V - A\|_{p,2}$)
    \STATE $T \leftarrow T \cup j^*$. 
\ENDFOR
\STATE Map indices $T$ to get the selected columns $A_T$.
\end{algorithmic}
\end{algorithm}

We propose a greedy algorithm for selecting columns in $k$-CSS$_{p, 2}$ (\textbf{Algorithm}~\ref{algorithm:fast_greedy_kCSSp2}) for $p \in [1, 2)$.
We give a detailed analysis on the first additive approximation compared to the error of the optimal column subset for Greedy $k$-CSS$_{p, 2}$.
Our analysis is inspired by the analysis of the Frobenius norm Greedy $k$-CSS$_2$ algorithm in~\cite{abfmrz2016greedycssfrobenius}.

Notice that during each iteration, the algorithm needs to evaluate the error $\min_V \|A_{T \cup j}V - A\|_{p, 2}$ to greedily pick the next column $j$.
A standard greedy algorithm which considers all unselected columns in $[n] \setminus T$ would need $O(nr)$ evaluations of the regression error $\min_V \|A_{T \cup j} V - A\|_{p, 2}$, which is too expensive. To improve the running time, we adopt the \textit{Lazier-than-lazy} framework for greedy algorithms originally introduced in~\cite{mbkvk15_lazier_than_lazy_greedy} and used by~\cite{abfmrz2016greedycssfrobenius} in greedy $k$-CSS$_2$. Instead of considering all unselected columns at each iteration, we first randomly sample $\frac{n}{k}\log(\frac{1}{\delta})$ candidate columns from $[n] \setminus T$ and greedily pick the next column only among those candidates. This reduces the number of evaluations of $\min_V \|A_{T \cup j} V - A\|_{p, 2}$ to $O(n\log (\frac{1}{\delta}))$.

To aid the analysis, we first define a utility function that quantifies how well the selected columns approximate the original matrix in \textbf{Notation} below as in~\cite{abfmrz2016greedycssfrobenius}.
We show in \textbf{Lemma}~\ref{lemma:single_vector} an improvement of the utility function with one additional column when projecting a single vector, based on \textbf{Lemma}~\ref{lemma:greedy_frobenius_single_vector} from~\cite{abfmrz2016greedycssfrobenius}. We then show an improvement of the utility function when projecting a matrix in \textbf{Lemma}~\ref{lemma:matrix_greedy_improvement}, by applying \textbf{Lemma}~\ref{lemma:single_vector} and Jensen's Inequality, following the analysis in~\cite{abfmrz2016greedycssfrobenius}. 
With \textbf{Lemma}~\ref{lemma:matrix_greedy_improvement}, we show a large expected improvement in the utility function by choosing a column from a subsampled candidates, based on \textbf{Lemma 6} from~\cite{abfmrz2016greedycssfrobenius}.
Finally, we conclude by giving the convergence rate and the running time for \textit{Lazier-than-lazy} based Greedy $k$-CSS$_{p,2}$ in \textbf{Theorem}~\ref{thm:convergence_of_fast_greedy_kCSSp2}.

\paragraph{Notation} Consider the input matrix $A \in \mathbb{R}^{d \times n}$ ($n \gg d$). 
Let $B$ be the matrix of normalized columns of $A$, where the $j$-th column of $B$ is $B_{*j} = A_{*j}/\|A_{*j}\|_2$. 
Let $\pi_T$ be the projection matrix onto the column span of $A_T$ or equivalently $B_T$. Let $\sigma_{\min}(M)$ denote the minimum singular value of some matrix $M$.

To aid our analysis, we define a \textit{utility function} $\Phi$ as follows,  inspired by \cite{abfmrz2016greedycssfrobenius}. For a subset $T \subset [n]$ and a matrix $M \in \R^{d \times t}$ (or a vector $M \in \mathbb{R}^d$), 
$$\Phi_M(T) = \|M\|_{p, 2}^{p} - \|M - \pi_T M\|_{p, 2}^{p} = \sum_{i = 1}^t \Big(\|M_{*i}\|_{2}^{p} - \|M_{*i} - \pi_T M_{*i}\|_{2}^{p}\Big) = \sum_{i = 1}^t \Phi_{M_{*i}}(T)$$

Observe that as the number of columns selected and added to $T$ increases, we get a more accurate estimation of $M$ and thus the approximation error $\|M - \pi_T M\|_{p,2}$ decreases, which results in an increase in the utility function $\Phi_M(T)$.

\begin{customlemma}{7.1}
\label{lemma:greedy_frobenius_single_vector} Let $S, T \subset [n]$ be two sets of column indices, with $S = \{i_1, \ldots, i_k\}$ and $\|\pi_S u\|_2 \geq \|\pi_T u\|_2$ for some vector $u \in \mathbb{R}^d$. Then,
$$\sum_{j = 1}^k \Big(\|\pi_{T_j'} u\|_2^2 - \|\pi_T u\|_2^2\Big) \geq \sigma_{min}(B_S)^2 \frac{(\|\pi_S u\|_2^2 - \|\pi_T u\|_2^2)^2}{4\|\pi_S u\|_2^2}$$
where $T_j' = T \cup \{i_j\}$ for all $j \in [k]$.
\end{customlemma}
\begin{proof}
    \textbf{Lemma 2} from~\cite{abfmrz2016greedycssfrobenius}, except that we replace the condition for $S$ and $T$, i.e., $\Phi_u(S) \geq \Phi_u(T)$ in~\cite{abfmrz2016greedycssfrobenius} with $\|\pi_S u\|_2 \geq \|\pi_T u\|_2$. The two conditions are equivalent, since 
    \begin{align*}
        \Phi_u(S) &\geq \Phi_u(T)\\
        \Leftrightarrow \|u\|_2^p - \|u - \pi_S u\|_2^p &\geq \|u\|_2^p - \|u - \pi_T u\|_2^p \\
        \Leftrightarrow \|u - \pi_S u\|_{2} &\leq \|u - \pi_T u\|_2\\
        \Leftrightarrow \|u\|_2^2 - \|\pi_S u\|_2^2 &\leq \|u\|_2^2 - \|\pi_T u\|_2^2\\
        \Leftrightarrow \|\pi_S u\|_2 &\geq \|\pi_T u\|_2
    \end{align*}
\end{proof}

\begin{customlemma}{7.2} (Utility Improvement by Projecting a Single Vector)
\label{lemma:single_vector}
Give $p \in [1, 2)$.
Let $S, T \subset [n]$ be two sets of column indices, with $\Phi_u(S) \geq \Phi_u(T)$ for some vector $u \in \mathbb{R}^d$. Let $k = |S|$, and for $i \in S$, let $T_i' = T \cup \{i\}$. Then,
$$\sum_{i \in S} \Big(\Phi_u(T_i') - \Phi_u(T) \Big) \geq \frac{p \sigma_{min}(B_S)^2}{16} \cdot \frac{(\Phi_u(S) - \Phi_u(T))^{2/p+1}}{\Phi_u(S)^{2/p}}$$
\end{customlemma}

\begin{proof}
To aid the analysis, we define the decreasing function $g: (-\infty, \|u\|_2^2] \to \R$ by 
$$g(x) = (\|u\|_2^2 - x)^{p/2}$$
and the derivative of $g$ is
$$|g'(x)| = \frac{p}{2} (\|u\|_2^2 - x)^{p/2 - 1}$$
which is an increasing function for $p < 2$. Then,
\begin{align*}
\sum_{i = 1}^k \Big(\Phi_u(T_i') - \Phi_u(T) \Big)
& = \sum_{i = 1}^k \Big(\|u - \pi_T u\|_2^p - \|u - \pi_{T_i'} u\|_2^p \Big) & \text{By definition of $\Phi$}\\
& = \sum_{i = 1}^k \Big((\|u\|_2^2 - \|\pi_T u\|_2^2)^{p/2} - (\|u\|_2^2 - \|\pi_{T_i'} u\|_2^2)^{p/2}\Big) & \text{By Pythagorean Theorem}\\
& = \sum_{i = 1}^k \Big( g(\|\pi_T u\|_2^2) - g(\|\pi_{T_i'} u\|_2^2) \Big) & \text{By definition of $g$} \\
& \geq \sum_{i = 1}^k |g'(\|\pi_T u\|_2^2)| \Big(\|\pi_{T_i'} u\|_2^2 - \|\pi_T u\|_2^2 \Big) & \text{Mean Value Theorem and} \\
& & \|\pi_T u\|_2 \leq \|\pi_{T_i'} u\|_2 \\
& = |g'(\|\pi_T u\|_2^2)| \sum_{i = 1}^k \Big( \|\pi_{T_i'} u\|_2^2 - \|\pi_T u\|_2^2 \Big) \\
& = \frac{p}{2} \Big(\|u\|_2^2 - \|\pi_T u\|_2^2\Big)^{p/2 - 1} \sum_{i = 1}^k \Big(\|\pi_{T_i'} u\|_2^2 - \|\pi_T u\|_2^2 \Big) \\
& = \frac{p}{2} \|u - \pi_T u\|_2^{p - 2} \sum_{i = 1}^k \Big(\|\pi_{T_i'} u\|_2^2 - \|\pi_T u\|_2^2 \Big) \\
& \geq \frac{p}{2} \|u - \pi_T u\|_2^{p - 2} \cdot \sigma_{min}(B_S)^2 \frac{(\|\pi_S u\|_2^2 - \|\pi_T u\|_2^2)^2}{4\|\pi_S u\|_2^2} & \text{Lemma~\ref{lemma:greedy_frobenius_single_vector}} \\
& = \frac{p \sigma_{min}(B_S)^2}{2} \cdot \frac{(\|u - \pi_T u\|_2^2 - \|u - \pi_S u\|_2^2)^2}{4 \|\pi_S u\|_2^2 \|u - \pi_T u\|_2^{2 - p}}
\end{align*}
Now we can lower bound
\begin{align*}
\|u - \pi_T u\|_2^2 - \|u - \pi_S u\|_2^2
& = \|u - \pi_T u\|_2^{2 - p} \|u - \pi_T u\|_2^p - \|u - \pi_S u\|_2^{2 - p} \|u - \pi_S u\|_2^p \\
& \geq \|u - \pi_T u\|_2^{2 - p} \Big(\|u - \pi_T u\|_2^p - \|u - \pi_S u\|_2^p \Big) & \text{since} \|u - \pi_S u\|_2 \leq \|u - \pi_T u\|_2\\
& = \|u - \pi_T u\|_2^{2 - p} \Big(\Phi_u(S) - \Phi_u(T)\Big)
\end{align*}

Thus,
\begin{align*}
\sum_{i = 1}^k \Big(\Phi_u(T_i') - \Phi_u(T) \Big)
& \geq \frac{p \sigma_{min}(B_S)^2}{2} \cdot \frac{(\|u - \pi_T u\|_2^2 - \|u - \pi_S u\|_2^2)^2}{4 \|\pi_S u\|_2^2 \|u - \pi_T u\|_2^{2 - p}} \\
& \geq \frac{p\sigma_{min}(B_S)^2}{2} \cdot \frac{\Big( \Phi_u(S) - \Phi_u(T)\Big)^2 \cdot \|u - \pi_T u\|_2^{2(2 - p)}}{4 \|\pi_S u\|_2^2 \|u - \pi_T u\|_2^{2 - p}} \\
& = \frac{p\sigma_{min}(B_S)^2}{2} \cdot \frac{\Big( \Phi_u(S) - \Phi_u(T)\Big)^2 \cdot \|u - \pi_T u\|_2^{2 - p}}{4 \|\pi_S u\|_2^2}
\end{align*}

Now to finish the proof, let us lower bound $\frac{\|u - \pi_T u\|_2^{2 - p}}{\|\pi_S u\|_2^2}$. First, observe that $\|u\|_2^{p} \|u - \pi_S u\|_2^{2-p} - \|u\|_2^{2-p}\|u-\pi_S u\|_2^{p} \geq 0$. To see why, observe that the following equivalences hold:
\begin{align*}
    \|u\|_2^{p} \|u - \pi_S u\|_2^{2-p} &\geq \|u\|_2^{2-p}\|u-\pi_S u\|_2^{p}\\
    \Leftrightarrow \|u\|_2^{p}/\|u\|_2^{2-p} &\geq \|u-\pi_S u\|_2^{p}/\|u-\pi_S u\|_2^{2-p}\\
    \Leftrightarrow \|u\|_2^{2p-2} &\geq \|u-\pi_S u\|_2^{2p-2}
\end{align*}
and the last statement is true, since $\|u\|_2 \geq \|u - \pi_S u\|_2$ and $f(x) = x^{2p-2}$ is a monotone function. Thus,
\begin{align*}
    \frac{\|u - \pi_T u\|_2^{2 - p}}{\|\pi_S u\|_2^2}
    &= \frac{\|u - \pi_T u\|_2^{2 - p}}{\|u\|_2^2 - \|u - \pi_S u\|_2^2} \\
    &\geq \frac{\|u - \pi_T u\|_2^{2 - p}}{\|u\|_2^2 - \|u - \pi_S u\|_2^2 + \|u\|_2^{p} \|u - \pi_S u\|_2^{2-p} - \|u\|_2^{2-p}\|u-\pi_S u\|_2^{p}}\\
    &= \frac{\|u - \pi_T u\|_2^{2 - p}}{(\|u\|_2^p - \|u - \pi_S u\|_2^p)(\|u\|_2^{2-p} + \|u - \pi_S u\|_2^{2-p})}\\
    &= \frac{1}{\Phi_{u}(S)}\cdot \frac{\|u - \pi_T u\|_2^{2 - p}}{\|u\|_2^{2-p} + \|u - \pi_S u\|_2^{2-p}}\\
    &\geq \frac{1}{2\Phi_{u}(S)}\cdot \frac{\|u - \pi_T u\|_2^{2 - p}}{\|u\|_2^{2-p}} &\text{Since $\|u\|_2^{2-p} \geq \|u - \pi_S u\|_2^{2-p}$}\\
    &= \frac{1}{2\Phi_{u}(S)} \cdot \Big(\frac{\|u - \pi_T u\|_2^{p}}{\|u\|_2^{p}}\Big)^{2/p-1}\\
    &= \frac{1}{2\Phi_{u}(S)} \cdot \Big(\frac{\|u\|_2^{p} - \Phi_{u}(T)}{\|u\|_2^{p}}\Big)^{2/p-1} &\text{By definition of $\Phi$}\\
    &\geq \frac{1}{2\Phi_{u}(S)}\cdot \Big(1 - \frac{\Phi_{u}(T)}{\Phi_{u}(S)}\Big)^{2/p-1} &\text{Since $\Phi_{u}(S) \leq \|u\|_2^{p}$}\\
    &= \frac{(\Phi_{u}(S) - \Phi_{u}(T))^{2/p-1}}{2\Phi_{u}(S)^{2/p}}
\end{align*}

Combining all the above inequalities gives
\begin{align*}
\sum_{i = 1}^k \Big(\Phi_u(T_i') - \Phi_u(T) \Big)
& \geq \frac{p\sigma_{min}(B_S)^2}{2} \cdot \frac{\Big( \Phi_u(S) - \Phi_u(T)\Big)^2 \cdot \|u - \pi_T u\|_2^{2 - p}}{4 \|\pi_S u\|_2^2} \\
& \geq \frac{p \sigma_{min}(B_S)^2}{8} \cdot \Big(\Phi_u(S) - \Phi_u(T)\Big)^2 \cdot \frac{(\Phi_{u}(S) - \Phi_{u}(T))^{2/p-1}}{2\Phi_{u}(S)^{2/p}}\\
&= \frac{p \sigma_{min}(B_S)^2}{16} \cdot \frac{(\Phi_{u}(S) - \Phi_{u}(T))^{2/p+1}}{\Phi_{u}(S)^{2/p}}
\end{align*}
This completes the proof.
\end{proof}

\begin{customlemma}{7.3}(Utility Improvement by Projecting a Matrix) \label{lemma:matrix_greedy_improvement}
Given $p \in [1, 2)$. Let $A \in \R^{d \times n}$, and $T, S \subset [n]$ be two sets of column indices, with $\Phi_A(S) \geq \Phi_A(T)$. Furthermore, let $k = |S|$. Then, there exists a column index $i \in S$ such that
$$\Phi_A(T \cup \{i\}) - \Phi_A(T) \geq p\sigma_{min}(B_S)^2 \frac{(\Phi_A(S) - \Phi_A(T))^{2/p+1}}{16k \Phi_A(S)^{2/p}}$$
\end{customlemma}

\begin{proof}
The proof mostly follows the proof of \textbf{Lemma 1} in \cite{abfmrz2016greedycssfrobenius}. We combine Lemma \ref{lemma:single_vector} with Jensen's inequality to conclude an improvement of the utility function with one additional column when projecting a matrix instead of a single column.

For $j \in [n]$, we define $\delta_j = \min(1, \frac{\Phi_{A_{*j}}(T)}{\Phi_{A_{*j}}(S)})$. Note that $\delta_j$ is $1$ if the $j$-th column $A_{*j}$ has a larger projection onto $B_T$ than $B_S$, and $\frac{\Phi_{A_{*j}}(T)}{\Phi_{A_{*j}}(S)}$ otherwise.
Note that $f(x) = x^{2/p+1}$ is convex on $x \in [0, \infty)$, $\forall 1\leq p < 2$, based on which we will apply Jensen's inequality.

Let $k = |S|$. For $i \in [k]$, let $T_i' = T \cup \{i\}$.
\begin{align*}
    \frac{1}{p\sigma_{min}(B_S)^2} \sum_{i = 1}^k \Big(\Phi_A(T_i') - \Phi_A(T)\Big)
    & = \frac{1}{p\sigma_{min}(B_S)^2} \sum_{j = 1}^n \sum_{i = 1}^k \Big(\Phi_{A_{*j}}(T_i') - \Phi_{A_{*j}}(T)\Big) 
    & \text{By definition of $\Phi$}\\
    & \geq \sum_{j = 1}^n \frac{(1 - \delta_j)^{2/p+1}}{16} \cdot \Phi_{A_{*j}}(S) 
    & \text{By \textbf{Lemma}~\ref{lemma:single_vector}}\\
    & = \frac{\Phi_A(S)}{16} \sum_{j = 1}^n (1 - \delta_j)^{2/p+1} \cdot \frac{\Phi_{A_{*j}}(S)}{\sum_{i=1}^n \Phi_{A_{*i}}(S)} 
    &\text{Note $\Phi_A(S) = \sum_{i=1}^n \Phi_{A_{*i}}(S)$}\\
    & \geq \frac{\Phi_A(S)}{16} \Big(\sum_{j = 1}^n (1 - \delta_j) \cdot \frac{\Phi_{A_{*j}}(S)}{\sum_{i=1}^n \Phi_{A_{*i}}(S)} \Big)^{2/p+1}
    &\text{By Jensen's Inequality}\\
    & = \frac{1}{16 \Phi_A(S)^{2/p}} \Big(\sum_{j = 1}^n (1 - \delta_j) \cdot \Phi_{A_{*j}}(S) \Big)^{2/p+1} \\
    & \geq \frac{1}{16 \Phi_A(S)^{2/p}} \Big(\sum_{j = 1}^n (\Phi_{A_{*j}}(S) - \Phi_{A_{*j}}(T)) \Big)^{2/p+1}     &\text{Since $1 - \delta_j \geq 1 - \frac{\Phi_{A_{*j}}(T)}{\Phi_{A_{*j}}(S)}$}\\
    & &\text{$\Rightarrow (1 - \delta_j) \cdot \Phi_{A_{*j}}(S) $}\\
    & &\text{$\geq \Phi_{A_{*j}}(S) - \Phi_{A_{*j}}(T)$}\\
    & = \frac{(\Phi_A(S) - \Phi_A(T))^{2/p+1}}{16\Phi_A(S)^{2/p}}    
\end{align*}

Hence,
$$\sum_{i = 1}^k \Big(\Phi_A(T_i') - \Phi_A(T)\Big) \geq p\sigma_{min}(B_S)^2 \frac{(\Phi_A(S) - \Phi_A(T))^{2/p+1}}{16\Phi_A(S)^{2/p}}$$
This implies there is at least one column of $B_S$, with index $i \in S$, such that when $i$ is added to $T$, the utility function $\Phi_A(T)$ increases by at least $\frac{1}{k}\cdot p\sigma_{min}(B_S)^2 \frac{(\Phi_A(S) - \Phi_A(T))^{2/p+1}}{16\Phi_A(S)^{2/p}}$.
\end{proof}

\begin{customlemma}{7.4}[Expected Increase in Utility]
\label{lemma:expected_increase_in_utility}
Given $p \in [1, 2)$.
Let $A \in \R^{d \times n}$, and let $T, S \subset [n]$ be two sets of column indices, with $k := |S|$ and $\Phi_A(S) \geq \Phi_A(T)$. Let $\overline{T}$ be a set of $\frac{n \log(1/\delta)}{k}$ column indices of $A$, chosen uniformly at random from $[n] \setminus T$. Then,
$$\mathbb{E}[\max_{i \in \overline{T}} \Phi_A(T \cup \{i\})] - \Phi_A(T) \geq (1 - \delta) \cdot p\sigma_{min}(B_S)^2 \cdot \frac{(\Phi_A(S) - \Phi_A(T))^{2/p+1}}{16k \Phi_A(S)^{2/p}}$$
\end{customlemma}

\begin{proof}
The proof is nearly identical to the proof of \textbf{Lemma 6} of \cite{abfmrz2016greedycssfrobenius} --- we include the full proof for completeness. The first step in the proof is showing that $\overline{T} \cap (S \setminus T)$ is nonempty with high probability. Then, by conditioning on $\overline{T} \cap (S \setminus T)$ being nonempty, we can show that the expected increase in utility is large. For the purpose of this analysis, we assume that the columns of $\overline{T}$ are sampled independently with replacement. At the end of the proof, we discuss sampling the columns of $\overline{T}$ without replacement.

First, observe that
\begin{align*}
\Pr[\overline{T} \cap (S \setminus T) = \varnothing]
& = \prod_{t = 1}^{O\big(\frac{n \log(1/\delta)}{k}\big)} \Big(1 - \frac{|S \setminus T|}{n - |T|}\Big) \\
& = \Big(1 - \frac{|S \setminus T|}{n - |T|}\Big)^{O\big(\frac{n\log(1/\delta)}{k}\big)} \\
& \leq e^{-\frac{|S \setminus T|}{n - |T|} \cdot \frac{n \log(1/\delta)}{k}} & \text{By $1 - x \leq e^{-x}$} \\
& \leq e^{-\frac{|S \setminus T| \log(1/\delta)}{k}} & \text{Because $n - |T| < n$}
\end{align*}
meaning that
\begin{align*}
\Pr[\overline{T} \cap (S \setminus T) \neq \varnothing]
& \geq 1 - e^{-\frac{|S \setminus T| \log(1/\delta)}{k}} \\
& = 1 - \delta^{\frac{|S \setminus T|}{k}} \\
& \geq (1 - \delta) \frac{|S \setminus T|}{k} & \text{Since $|S \setminus T| \leq k$, and $1 - \delta^x \geq (1 - \delta)x$ for $x, \delta \in [0, 1]$}
\end{align*}

Therefore,
\begin{align*}
& \E[\max_{i \in \overline{T}} \Phi_A(T \cup \{i\}) - \Phi_A(T)] \\
& \geq \Pr[\overline{T} \cap (S \setminus T) \neq \varnothing] \cdot \E\Big[\max_{i \in \overline{T}} \Phi_A(T \cup \{i\}) - \Phi_A(T) \Big| \overline{T} \cap (S \setminus T) \neq \varnothing \Big] \\
& \geq (1 - \delta) \frac{|S \setminus T|}{k} \cdot \E\Big[\max_{i \in \overline{T}} \Phi_A(T \cup \{i\}) - \Phi_A(T) \Big| \overline{T} \cap (S \setminus T) \neq \varnothing \Big] \\
& \geq (1 - \delta) \frac{|S \setminus T|}{k} \cdot \E\Big[\max_{i \in \overline{T}} \Phi_A(T \cup \{i\}) - \Phi_A(T) \Big| |\overline{T} \cap (S \setminus T)| = 1 \Big] \\
& \,\,\,\,\,\,\,\, \text{(Since it is always better for $\overline{T} \cap (S \setminus T)$ to be larger)} \\
& = (1 - \delta) \frac{|S \setminus T|}{k} \cdot \frac{\sum_{i \in S \setminus T} (\Phi_A(T \cup \{i\}) - \Phi_A(T))}{|S \setminus T|} \\ 
& \,\,\,\,\,\,\,\, \text{(Since the single element of $\overline{T} \cap (S \setminus T)$ is uniformly random in $(S \setminus T)$)} \\
& = (1 - \delta) \cdot \frac{\sum_{i \in S} (\Phi_A(T \cup \{i\}) - \Phi_A(T))}{|S|} \\
& \,\,\,\,\,\,\,\, \text{(Since $\Phi_A(T \cup \{i\}) = \Phi_A(T)$ for $i \in T$)} \\
& \geq (1 - \delta) \cdot \frac{1}{k} \cdot p\sigma_{min}(B_S)^2 \frac{(\Phi_A(S) - \Phi_A(T))^{2/p+1}}{16\Phi_A(S)^{2/p}} \\ 
& \,\,\,\,\,\,\,\, \text{(By \textbf{Lemma} \ref{lemma:matrix_greedy_improvement}.)}
\end{align*}
This proves the lemma in the case where the columns are sampled with replacement. Now, we discuss what happens when sampling without replacement. Note that the expected increase in utility can only be higher if the columns of $\overline{T}$ are sampled without replacement. Intuitively, this is because if $\overline{T}$ has some repeated columns, then it is always better to replace those repeated columns with other columns of $A$. Thus, for each instance of $\overline{T}$ where some columns are sampled multiple times, we can ``move'' all of the probability mass from this instance of $\overline{T}$ to other sets $\overline{T}' \subset [n] \setminus T$, which contain $\overline{T}$ but do not have repeated elements. This leads to the uniform distribution on subsets of $[n] \setminus T$ with no repeated elements, i.e., the distribution that results from sampling without replacement.
\end{proof}

Using this lemma, we analyze the convergence rate and the running time of \textbf{Algorithm}~\ref{algorithm:fast_greedy_kCSSp2}:

\begin{customthm}{4}[Greedy $k$-CSS$_{1, 2}$]
\label{thm:convergence_of_fast_greedy_kCSSp2}
    Let $p \in [1, 2)$.
    Let $A \in \mathbb{R}^{d \times n}$ be the data matrix and $k \in \mathbb{N}$ be the desired rank. Let $A_L$ be the best possible subset of $k$ columns, i.e., $A_L = \argmin_{A_L} \min_{V}\|A_LV - A\|_{p,2}$. Let $\sigma$ be the minimum non-zero singular value of the matrix $B$ of normalized columns of $A_L$, (i.e., the $j$-th column of $B$ is $B_{*j} = (A_L)_{*j}/\|(A_L)_{*j}\|_2$).
    Let $T \subset [n]$ be the subset of output column indices selected by \textbf{Algorithm}~\ref{algorithm:fast_greedy_kCSSp2}, for $\epsilon, \delta \in (0, 1)$, for $|T| = \Omega(\frac{k}{p\sigma^2 \epsilon^2})$, with probability $1 - \delta$,  $$\mathbb{E}[\min_{V} \|A_TV - A\|_{p,2}] \leq \min_V \|A_LV - A\|_{p,2}  + \epsilon \|A\|_{p,2}$$ 
    The overall running time is $O(\frac{n}{p\sigma^2\epsilon^2}\log(\frac{1}{\delta})\cdot (\frac{dk^2}{p^2\sigma^4\epsilon^4} + \frac{ndk}{p\sigma^2\epsilon^2}))$.
\end{customthm}

\begin{proof}

\textbf{Convergence Rate. }
The proof uses the same strategy as that of Theorem 5 of \cite{abfmrz2016greedycssfrobenius}, with minor modifications. 
Let $S := B_{L}$ be the best subset of columns of $B$.
Let $T_t$ be the subset of columns of $B$ selected by \textbf{Algorithm} \ref{algorithm:fast_greedy_kCSSp2} after $t$ iterations (in particular, $T_0 = \varnothing$). 
In addition, let $F = \Phi_A(S) = \Phi_A(S) - \Phi_A(T_0)$ be the distance from the current value of the utility function to the best achievable value. Let $\Delta_t$ denote a small amount at time $t$ that is used to quantify our progress as how good the currently selected subset of columns approximates the matrix.
When no column is selected, $\Delta_0 = F$. Let $\Delta_{i + 1} = \frac{\Delta_i}{2}$. 
Now, fix a time $t$ such that for some $i$, $\Delta_i \geq \Phi_A(S) - \Phi_A(T_t) \geq \Delta_{i + 1} = \frac{\Delta_i}{2}$. Then, we bound the number of additional iterations $t'$ needed so that
$$\E[\Phi_A(S) - \Phi_A(T_{t + t'}) \mid T_t] \leq \Delta_{i + 1}$$

For convenience, for each $k \geq 0$, define $E_k := E[\Phi_A(T_{t + k}) \mid T_t]$. Then, our goal is to find $t'$ such that
$$\Phi_A(S) - E_{t'} \leq \Delta_{i + 1}$$
However, observe that from Lemma \ref{lemma:expected_increase_in_utility} above, we obtain
\begin{align*}
E_{k + 1} - E_k
& = \E\Big[\Phi_A(T_{t + k + 1}) - \Phi_A(T_{t + k}) \Big| T_t \Big] \\
& = \E\Big[ \E\big[\Phi_A(T_{t + k + 1}) - \Phi_A(T_{t + k}) \big| T_{t + k}\big] \Big| T_t \Big] & \text{ By }\E[\E[X|Y]] = \E[X] \\
& \geq \E\Big[ (1 - \delta) \cdot p\sigma_{min}(B_S)^2 \cdot \frac{(\Phi_A(S) - \Phi_A(T_{t + k}))^{2/p+1}}{16k\Phi_A(S)^{2/p}} \Big| T_t \Big] & \text{ By Lemma \ref{lemma:expected_increase_in_utility}} \\
& = \frac{(1 - \delta) \cdot p \sigma_{min}(B_S)^2}{16k\Phi_A(S)^{2/p}} \cdot \E\big[(\Phi_A(S) - \Phi_A(T_{t + k}))^{2/p+1} | T_t\big] \\
& \geq \frac{(1 - \delta) \cdot p\sigma_{min}(B_S)^{2}}{16k\Phi_A(S)^{2/p}} \cdot \Big(\E[\Phi_A(S) - \Phi_A(T_{t + k}) | T_t]\Big)^{2/p+1} & \text{ By Jensen's Inequality} \\
& = (1 - \delta) \cdot p\sigma_{min}(B_S)^2 \cdot \frac{(\Phi_A(S) - E_k)^{2/p+1}}{16k\Phi_A(S)^{2/p}}
\end{align*}
Now, suppose that $\Delta_i \geq \Phi_A(S) - E_s \geq \Delta_{i + 1}$, for $s = 0, \ldots, t' - 1$. Then, for all such $s$, $E_{s + 1} - E_s \geq \frac{(1 - \delta) p\sigma_{min}(B_S)^2 \Delta_{i + 1}^{2/p+1}}{16kF^{2/p}}$. 
Summing these inequalities for $s = 0, \ldots, t' - 1$, we find that
$$E_{t'} - E_0 \geq \frac{(1 - \delta) p\sigma_{min}(B_S)^2}{16kF^{2/p}} \cdot \Delta_{i + 1}^{2/p+1} \cdot t'$$
and for the increase from $E_0$ to $E_{t'}$ to be greater than $\Delta_{i + 1}$, it suffices to have
$$t' \geq \frac{32kF^{2/p}}{\Delta_{i + 1}^{2/p} \cdot (1 - \delta)p \sigma_{min}(B_S)^2}$$

In summary, if $\Phi_A(S) - \E[\Phi_A(T_t)] \leq \Delta_i$, then in at most $s = \frac{32kF^{2/p}}{\Delta_{i + 1}^{2/p} \cdot (1 - \delta)p \sigma_{min}(B_S)^2}$ iterations, $\Phi_A(S) - \E[\Phi_A(T_{t + s})] \leq \Delta_{i + 1}$. Thus, if we let $N \in \N$ such that $\Delta_{N + 1} \leq \frac{\eps}{(1 - \delta)^{p/2}} F \leq \Delta_N$, then the number of iterations $t$ needed to have $\Phi_A(S) - \E[\Phi_A(T_t)] < \Delta_{N + 1}$ is at most

\begin{align*}
\sum_{i = 0}^N \frac{32kF^{2/p}}{\Delta_{i + 1}^{2/p} \cdot (1 - \delta) p\sigma_{min}(B_S)^2}
& = \frac{32kF^{2/p}}{(1 - \delta)p\sigma_{min}(B_S)^2} \sum_{i = 0}^N \frac{1}{\Delta_{i + 1}^{2/p}} \\
& = \frac{32kF^{2/p}}{(1 - \delta)p\sigma_{min}(B_S)^2} \sum_{i = 0}^N \frac{1}{4^{(N - i)/p}} \cdot \frac{1}{\Delta_{N + 1}^{2/p}} \\
& \leq \frac{32kF^{2/p}}{(1 - \delta)p\sigma_{min}(B_S)^2} \cdot \frac{4^{1/p}(1 - \delta)}{\eps^{2/p}F^{2/p}} \sum_{i = 0}^{N} \frac{1}{4^{(N - i)/p}} & \text{ Since }\Delta_{N + 1}^{2/p} \geq (\frac{\eps F}{2(1 - \delta)^{p/2}})^{2/p} \\
& \leq \frac{128k}{ p \sigma_{min}(B_S)^2 \eps^{2/p}} \sum_{i = 0}^N \frac{1}{2^{i}} &\text{Since $p < 2$}\\ 
& \leq \frac{256k}{p\sigma_{min}(B_S)^2 \eps^{2/p}}
\end{align*}

Thus, after $t = O(\frac{k}{p\sigma_{min}(B_S)^2 \eps^{2/p}})$ iterations,
$$\Phi_A(S) - \E[\Phi_A(T_t)] \leq \frac{\eps}{(1 - \delta)^{p/2}} \Phi_A(S) \leq \frac{\eps}{\sqrt{1 - \delta}} \Phi_A(S)$$
meaning
$$\|A\|_{p, 2}^p - \|A - \pi_S A\|_{p, 2}^p - \E[\|A\|_{p, 2}^p - \|A - \pi_T A\|_{p, 2}^p] \leq \frac{\eps}{\sqrt{1 - \delta}} \|A\|_{p, 2}^p - \frac{\eps}{\sqrt{1 - \delta}} \|A - \pi_S A\|_{p, 2}^p$$
and rearranging gives
$$\E[\|A - \pi_T A\|_{p, 2}^p] \leq \Big(1 - \frac{\eps}{\sqrt{1 - \delta}}\Big) \|A - \pi_S A\|_{p, 2}^p + \frac{\eps}{\sqrt{1 - \delta}} \|A\|_{p, 2}^p$$
and observe that if we select $\delta = \eps$, then $\frac{1}{\sqrt{1 - \delta}} = O(1)$ for $\eps < \frac{1}{2}$. Therefore,
\begin{align*}
E[\|A - \pi_T A\|_{p, 2}]
& \leq E[\|A - \pi_T A\|_{p, 2}^p]^{1/p} & \text{(By Jensen's inequality since $x^{1/p}$ is concave)} \\
& \leq \Big((1 - O(\eps)) \|A - \pi_S A\|_{p, 2}^p + O(\eps) \|A\|_{p, 2}^p \Big)^{1/p} \\
& \leq (1 - O(\eps))^{1/p} \|A - \pi_S A\|_{p, 2} + O(\eps)^{1/p} \|A\|_{p, 2} & (x + y)^{1/p} \leq x^{1/p} + y^{1/p} \\
& \leq \|A - \pi_S A\|_{p, 2} + O(\eps)^{1/p} \|A\|_{p, 2}
\end{align*}
In summary, if we select $O(\frac{k}{p\sigma_{min}(B_S)^2\eps^{2/p}})$ columns, then
$$E[\|A - \pi_T A\|_{p, 2}] \leq \|A - \pi_S A\|_{p, 2} + O(\eps)^{1/p} \|A\|_{p, 2}$$
and replacing $\eps$ with $O(\eps^p)$, we find that
$$E[\|A - \pi_T A\|_{p, 2}] \leq \|A - \pi_S A\|_{p, 2} + \eps \|A\|_{p, 2}$$
after $O(\frac{k}{p\sigma_{min}(B_S)^2 \eps^2})$ iterations.

\textbf{Running Time. }
Each evaluation of the error $\min_V \|A_{T\cup j}V - A\|_{p, 2}$ takes $O(d|T|^2 + nd|T|)$ time by taking the pseudo-inverse of $A_{T \cup j}$. Since the algorithm samples $\frac{n}{k}\log(\frac{1}{\delta})$ columns at each iteration, the time it takes for each iteration is $O(\frac{n}{k}\log(\frac{1}{\delta})(d|T|^2 + nd|T|))$.
If we set the number of iterations, i.e. the number of selected columns $r = \max |T| = \frac{k}{p\sigma^2 \epsilon^2}$, the overall running time is then $O(\frac{k}{p\sigma^2 \epsilon^2}\cdot \frac{n}{k}\log(\frac{1}{\delta}) \cdot (\frac{dk^2}{p^2\sigma^4 \epsilon^4} + \frac{ndk}{p\sigma^2 \epsilon^2})) = O(\frac{n}{p\sigma^2 \epsilon^2}\log(\frac{1}{\delta})\cdot(\frac{dk^2}{p^2\sigma^4 \epsilon^4} + \frac{ndk}{p\sigma^2 \epsilon^2}))$.

\end{proof}

\newpage
\section{Additional Experimental Details}
\label{appendix:add_details_experiments}

\textbf{Hyperparameters for $k$-CSS$_{1,2}$. }
There are two additional hyperparameters for our $O(1)$-approximation bi-criteria $k$-CSS$_{1, 2}$ (see Section~\ref{appendix:kcss_in_lp2_norm} for a complete description of this algorithm), i.e. the size of the sparse embedding matrix and its sparsity, which we use to generate a rank-$k$ left factor that gives an $O(1)$-approximation (see Section~\ref{subsec:osnap_sparse_embedding} for details on the embedding matrix and sparsity). In the experiments, we set both the sparsity and the size of the sketching matrix we use to be $\frac{k}{2}$, where $k$ is the number of output columns.

\end{appendices}

\end{document}